\definecolor{myurlcolor}{rgb}{0,0,0.4}
\definecolor{mycitecolor}{rgb}{0,0.5,0}
\definecolor{myrefcolor}{rgb}{0.5,0,0}
\def\be{\begin{equation}}
\def\ee{\end{equation}}
\def\ben{\begin{eqnarray}}
\def\een{\end{eqnarray}}
\def\eea{\end{array}}
\def\bea{\begin{array}}
\newcommand{\Tr}[1]{\mathrm{Tr}#1}
\newcommand{\bei}{\begin{itemize}}
\newcommand{\eei}{\end{itemize}}
\newcommand{\ket}[1]{|#1\rangle}
\newcommand{\bra}[1]{\langle#1|}
\newcommand{\proj}[1]{\ket{#1}\!\bra{#1}}
\newcommand{\I}{\mathbbm{1}}
\newcommand{\p}{\Vec{p}}
\renewcommand{\emph}[1]{\textbf{#1}}
\newtheorem*{rep@theorem}{\rep@title}
\newcommand{\newreptheorem}[2]{%
\newenvironment{rep#1}[1]{%
 \def\rep@title{#2 \ref{##1}}%
 \begin{rep@theorem}}%
 {\end{rep@theorem}}}
\theoremstyle{plain}
\newtheorem{thm}{Theorem}
\newtheorem*{thm*}{Theorem}
\newtheorem{lem}{Lemma}
\newtheorem{fakt}{Fact}
\newtheorem{cor}[thm]{Corollary}
\theoremstyle{definition}
\theoremstyle{remark}
\begin{document}


\title{Self-testing composite measurements and bound entangled state in a single quantum network}
\author{Shubhayan Sarkar}
\affiliation{Center for Theoretical Physics, Polish Academy of Sciences, Aleja Lotnik\'{o}w 32/46, 02-668 Warsaw, Poland}
\affiliation{Laboratoire d’Information Quantique, Université libre de Bruxelles (ULB), Av. F. D. Roosevelt 50, 1050 Bruxelles, Belgium}

\author{Chandan Datta}
\affiliation{Centre for Quantum Optical Technologies, Centre of New Technologies, University of Warsaw, Banacha 2c, 02-097 Warsaw, Poland}
\affiliation{Institute for Theoretical Physics III, Heinrich Heine University D\"{u}sseldorf, Universit\"{a}tsstra{\ss}e 1, D-40225 Düsseldorf, Germany}
\affiliation{Department of Physics, Indian Institute of Technology Jodhpur, Jodhpur 342030, India}

\author{Saronath Halder}
\affiliation{Centre for Quantum Optical Technologies, Centre of New Technologies, University of Warsaw, Banacha 2c, 02-097 Warsaw, Poland}

\author{Remigiusz Augusiak}
\affiliation{Center for Theoretical Physics, Polish Academy of Sciences, Aleja Lotnik\'{o}w 32/46, 02-668 Warsaw, Poland}

\begin{abstract}	

Within the quantum networks scenario we introduce a single scheme allowing to certify three different types of composite projective measurements acting on a three-qubit Hilbert space: one constructed from genuinely entangled GHZ-like states, one constructed from fully product vectors that exhibit the phenomenon of nonlocality without entanglement (NLWE), and a hybrid measurement obtained from an unextendible product basis (UPB). Noticeably, we  certify a basis exhibiting NLWE in the smallest dimension capable of supporting this phenomenon. On the other hand, the possibility of certification of a measurement obtained from a UPB has an interesting implication that one can also self-test a bound entangled state in the considered quantum network. Such a possibility does not seem to exist in the standard Bell scenario. Furthermore, we also analyse the robustness of our scheme towards experimental errors.
\end{abstract}


\maketitle

\textit{Introduction.---}With the advancement of new technologies such as quantum cryptography \cite{Gisin_crypto}, device-independent (DI) certification of quantum devices is becoming increasingly important, allowing one to certify certain features of an underlying device in a ``black-box'' scenario \cite{DICrypto,Bancal_PhysRevLett.106.250404,Brunner_PhysRevLett.100.210503}, which requires basically no assumptions about the device except that it is governed by quantum theory. The key ingredient for DI certification is Bell nonlocality, i.e., the existence of quantum correlations that cannot be explained by local hidden variable models \cite{Bell,Bell66}. 
The most comprehensive form of DI certification is self-testing \cite{Mayers_selftesting}, which allows for almost complete characterization of the underlying quantum state and the measurements performed on it. From an application standpoint, this type of certification is crucial as it allows to verify whether a quantum device works as expected without knowing its internal mechanism. 

Since its introduction in \cite{Mayers_selftesting}, self-testing has been investigated in various scenarios and shown to have numerous advantages \cite{SupicReview}. However, while much attention has been paid to the self-testing of bipartite and multipartite states  \cite{Yao,Scarani,Reichardt_nature,Mckague_2014,Wu_2014,Bamps,All,chainedBell, Projection,Jed1,prakash,Armin1,sarkar,sarkaro2, Yang}, the problem of certifying quantum measurements, in particular those acting on composite Hilbert spaces, has been 
largely unexplored. Apart from a few classes of local measurements \cite{Armin1, Jed1, random1} and a few composite ones \cite{Marco, JW2, NLWEsupic}, no general scheme exists allowing to certify composite measurements even in the simplest case of multi-qubit Hilbert spaces.

Our aim here is to fill this gap and introduce a unified scheme in the quantum networks scenario that allows for self-testing, in a single experiment, three types of projective measurements: one composed of genuinely entangled states, one constructed from fully product states exhibiting nonlocality without entanglement (NLWE) \cite{Bennett99}, and a hybrid one which is constructed from an unextendible product basis (UPB) \cite{Bennett99-1} and a projector onto the completely entangled subspace orthogonal to the UPB. Notice that UPBs are interesting mathematical objects that have found numerous applications, for instance, in constructing bound entangled states \cite{Bennett99-1} or Bell inequalities with no quantum violation \cite{Augusiak_PhysRevLett.107.070401}.

While a network-based self-testing scheme of a two-qutrit product basis exhibiting NLWE was previously introduced in \cite{NLWEsupic}, our approach enables one to self-test such a basis in the smallest possible dimension capable of supporting the notion of NLWE, which is eight. Furthermore, our scheme utilizes fewer measurements and is thus more efficient as compared to \cite{NLWEsupic}. Finally, it also allows to simultaneously self-test a measurement of a hybrid type, which is constructed from a UPB. An interesting implication of this fact is that one can 
also (indirectly) certify in the network, a mixed bound entangled state 
constructed from the UPB. While a network-based certification scheme for pure states
has already been introduced in Ref. \cite{Allst}, our work seems to be the first to address the question of certification of mixed entangled states (see nevertheless Refs. \cite{subspaces1,subspaces2}).

\textit{Preliminaries.---}Consider a Hilbert space $\mathcal{H}=\mathcal{H}_1\otimes\ldots\otimes\mathcal{H}_N$ and a set of mutually orthogonal fully product states from $\mathcal{H}$,
%
 $   \mathcal{S}=\{|\psi_i^1\rangle\otimes\ldots\otimes|\psi_i^N\rangle\}_{i=1}^k$,
%
where $|\psi_i^m\rangle\in\mathcal{H}_m$ and $k\leq D=\dim\mathcal{H}$. Following Ref. \cite{Bennett99} we say that this set exhibits NLWE if the vectors $|\psi_i\rangle$ cannot be perfectly distinguished by local operations and classical communications (LOCC). An exemplary such set is the following basis of the three-qubit Hilbert space $\mathcal{H}=(\mathbbm{C}^2)^{\otimes 3}$ \cite{Bennett99}:
\begin{eqnarray}\label{NLWEbasism}
\ket{\delta_0}&=&\ket{\overline{0}}\ket{1}\ket{+},\,\, \ket{\delta_1}=\ket{\overline{0}}\ket{1}\ket{-},\,\,
\ket{\delta_2}=\ket{\overline{+}}\ket{0}\ket{1},\nonumber\\ \ket{\delta_3}&=&\ket{\overline{-}}\ket{0}\ket{1},\,\,\,
\ket{\delta_4}=\ket{\overline{1}}\ket{+}\ket{0},\,\, \ket{\delta_5}=\ket{\overline{1}}\ket{-}\ket{0},\nonumber\\
\ket{\delta_6}&=&\ket{\overline{0}}\ket{0}\ket{0},\,\,\,\, \ket{\delta_7}=\ket{\overline{1}}\ket{1}\ket{1},
\end{eqnarray}
where $\ket{i}$, $\ket{\overline{i}}$ $(i=0,1)$ and $\ket{\pm}$, 
$\ket{\overline{\pm}}$ are the eigenvectors of $Z$, $(X+Z)/\sqrt{2}$ and $X$, $(X-Z)/\sqrt{2}$, respectively, where 
$Z$ and $X$ are the Pauli matrices.

While the above set forms a complete basis in the corresponding Hilbert space, there also exist sets $\mathcal{S}$ exhibiting NLWE that do not span the underlying Hilbert space. These are called UPB and were introduced to provide one of the first constructions of bound entangled states, which are entangled states from which no pure entanglement can be distilled \cite{Bennett99}. To be more precise, a collection $\mathcal{S}$ is a UPB if $k<D$, i.e., $\mathcal{S}$ spans a proper subspace $\mathcal{V}$ in $\mathcal{H}$, and the subspace complementary to $\mathcal{V}$ is completely entangled \cite{CES1,CES2}, i.e., contains no fully product vectors. An excellent example of a UPB in $\mathcal{H}=(\mathbbm{C}^2)^{\otimes 3}$ are the following four vectors
\begin{equation}\label{UPBm}
\begin{split}
\ket{\tau_0}&=\ket{\overline{0}}\ket{1}\ket{+},\qquad \ket{\tau_1}=\ket{\overline{+}}\ket{0}\ket{1},\\
\ket{\tau_2}&=\ket{\overline{1}}\ket{+}\ket{0},\qquad \ket{\tau_3}=\ket{\overline{-}}\ket{-}\ket{-}
\end{split}
\end{equation}
which are equivalent under local unitary transformations to the Shifts UPB introduced in \cite{Bennett99-1}. The mixed state $\rho=\Gamma/4$, where 
\begin{eqnarray}\label{UPB1m}
\Gamma&=&\I-\sum_{i=0}^3\proj{\tau_i}
\end{eqnarray}
stands for the projector onto a subspace complementary to the UPB,
is bound entangled; in fact, by the very construction it is entangled and all its partial transpositions are nonnegative \cite{HorodeckiBE}.

The last set of vectors that we consider here
are the following GHZ-like pure states
\begin{equation}\label{GHZvecsm}
\ket{\phi_l}=\frac{1}{\sqrt{2}}(\ket{l_1l_2l_3}+(-1)^{l_{1}}|\overline{l}_1\overline{l}_2\overline{l}_3\rangle),
\end{equation}
where $l\equiv l_1l_2l_3$ with $l_1,l_2,l_3=0,1$ and $\overline{l}_i$ is the negation of the bit $l_i$, i.e., $\overline{l}_i=1-l_i$. It is worth noting that, unlike the previous vectors $\ket{\delta_i}$ or $\ket{\tau_i}$, the GHZ states are all genuinely multipartite entangled.

\textit{Composite measurements.---} From each of the considered sets of vectors one can construct a projective measurement acting on $\mathcal{H}_3=(\mathbbm{C}^2)^{\otimes 3}$. First, the product basis $\ket{\delta_i}$
gives rise to a separable eight-outcome measurement $M_{\mathrm{NLWE}}=\{\proj{\delta_i}\}_{i=0}^7$. This measurement cannot be implemented in terms of the 
LOCC, and, simultaneously, cannot produce entanglement if applied to a state. On the other extreme, we have the eight-outcome measurement 
$M_{\mathrm{GHZ}}=\{\proj{\phi_l}\}_{l=0}^{7}$ 
constructed from the GHZ-like states which are all entangled. Thus, unlike $M_{\mathrm{NLWE}}$, this measurement leads to entangled states for all outcomes $l$.

Let us finally move to the UPB in Eq. \eqref{UPB1m}. It allows constructing a five-outcome hybrid measurement $M_{\mathrm{UPB}}=\{\proj{\tau_i}\}_{i=0}^3\cup\{\Gamma\}$ that lies in between the GHZ measurement and the separable measurement. In fact, four of its outcomes correspond to projections onto fully product states 
$\ket{\tau_i}$, whereas the last outcome is represented by $\Gamma$ that projects onto a completely (but not genuinely) entangled four-dimensional subspace.

\textit{Setting the scenario.---}We consider a quantum network scenario consisting of three external parties Alice, Bob and Charlie and a central party Eve  (see Fig. \ref{fig1}). The scenario also comprises of three independent sources $P_i$ that distribute bipartite quantum states among the parties. We denote these states by $\rho_{s\overline{s}}$ with $s=A,B,C$, where the subsystems $A$, $B$ and $C$ belong to the external parties, whereas the other three systems $\overline{A}$, 
$\overline{B}$ and $\overline{C}$ go to Eve; in what follows we simplify the notation by using $E:=\overline{A}\overline{B}\overline{C}$. On their shares of the joint state $\rho_{ABCE}=\rho_{A\overline{A}}\otimes\rho_{B\overline{B}}\otimes\rho_{C\overline{C}}$, each party can choose to perform one of the measurements $A_x$, $B_y$, $C_z$ and $E_e$, where the measurement choices are labelled $x,y,z,e=0,1,2$. We assume that each of the external party's measurements has two outcomes, denoted $a,b,c=0,1$. The first two Eve's measurements yield eight outputs, whereas the third one results in five outcomes. During the experiment, the parties cannot communicate classically.  

The correlations obtained by repeatedly performing these measurements are captured by a set of probability distributions $\vec{p}=\{p(abcl|xyze)\}$, where each $p(abcl|xyze)$ is the probability of observing outcomes $a$, $b$, $c$ and $l$ by Alice, Bob, Charlie and Eve after performing measurements labeled by $x$, $y$, $z$, and $e$, respectively; it is given by the well-known formula
\begin{equation}\label{probs}
    p(abcl|xyze)=\Tr\left[\rho_{ABCE}N^A_{a|x}\otimes N^B_{b|y}\otimes N^C_{c|z}\otimes N^E_{l|e}\right],
\end{equation}
where $N_{a|x}^A$, $N_{b|y}^B$ etc. are the 
measurement elements representing the measurements of the observers; these are positive semi-definite and satisfy $\sum_{a}N_{a|x}^s=\mathbbm{1}$ for every measurement choice $x$ and every party $s$.

It will be beneficial to use another representation of the observed correlations, that is, in terms of the expectation values of observables of the external parties, which are defined as
\begin{equation}\label{obspic}
    \langle A_{x} B_{y}C_z N^{E}_{l|e} \rangle = \sum_{a,b,c=0,1} (-1)^{a+b+c} p(abcl|xyze).
\end{equation}
Notice that by employing Eq. \eqref{probs} one can express them as $\langle A_{x} B_{y}C_z N^{E}_{l|e} \rangle=\Tr[(A_x \otimes B_{y}\otimes C_z\otimes N^{E}_{l|e})\rho_{ABCE}]$, where $A_x$, $B_y$ and $C_z$ are quantum operators that are defined through the measurement elements as $s_k=N^{s}_{0|k}-N^{s}_{1|k}$, where $s=A,B,C$ and $k=x,y,z$. In the particular case of projective measurements these operators $s_k$ become unitary and thus represent the standard quantum observables. 

\textit{Self-testing.---}
The quantum networks scenario has recently been harnessed to propose self-testing schemes for few quantum measurements defined in composite Hilbert spaces such as the measurement corresponding to the two-qubit Bell basis composed of four maximally entangled vectors \cite{Marco}, or the nine-outcome projective measurement corresponding to a complete basis in $\mathbbm{C}^3\otimes\mathbbm{C}^3$ that exhibits NLWE \cite{NLWEsupic}. Our aim here is to employ these ideas to design a general framework for quantum networks-based device-independent (NDI) certification of various interesting types of quantum measurements, concentrating on the particular case of three-qubit Hilbert spaces.

\begin{figure}[t]
    \centering
    \includegraphics[scale=0.145]{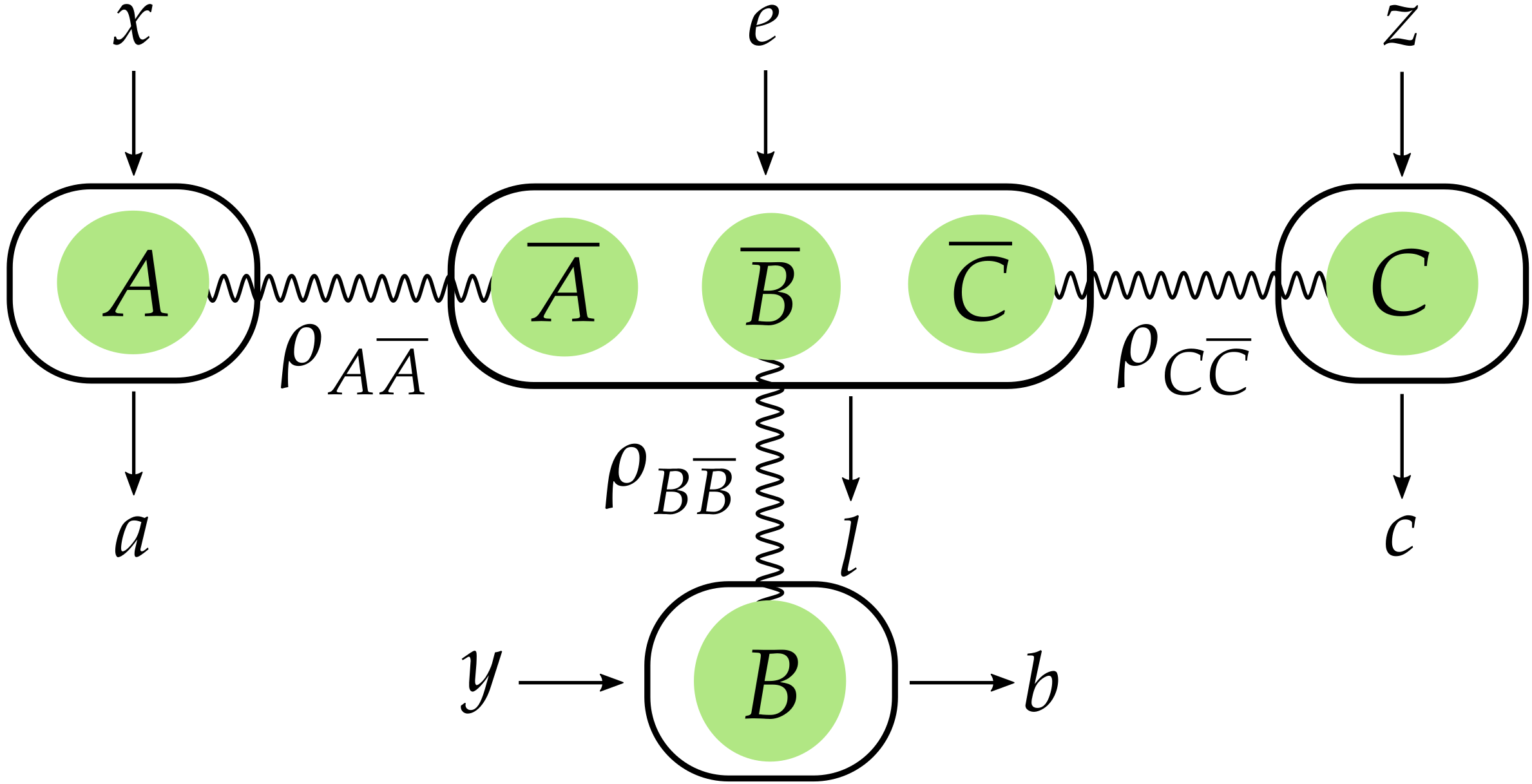}
    \caption{\textbf{Schematic of the considered quantum network scenario.} 
    It consists of four parties $A$, $B$, $C$ and $E$ and three independent sources distributing bipartite quantum states $\rho_{S\overline{S}}$ $(S=A,B,C)$ among the parties as shown on the figure.  The central party $E$ shares quantum states with each of the other parties. Each party performs one of the available measurements on their share of the state obtaining an outcome. The obtained correlations $\{p(abcl|xyze)\}$ are used to certify that each source distributes the maximally entangled state of two qubits and that $E$'s measurements are $M_{\mathrm{GHZ}}$, $M_{\mathrm{UPB}}$ and $M_{\mathrm{NLWE}}$.}
    
    \label{fig1}
\end{figure}

To define the task of self-testing in more precise terms, let us consider again the scenario depicted on Fig. \ref{fig1}, but now we assume that 
both the states $\rho_{s\overline{s}}\in\mathcal{L}(\mathcal{H}_s\otimes\mathcal{H}_{\overline{s}})$ and the measurements performed by the parties are unknown. Due to the fact that the dimensions of the underlying Hilbert spaces $\mathcal{H}_s\otimes\mathcal{H}_{\overline{s}}$ are unspecified we can employ the standard dilation arguments and assume that the shared states are pure, that is, $\rho_{s\overline{s}}=\proj{\psi_{s\overline{s}}}$ and that the measurements are projective. These states and measurements generate correlations that we denote by $\p$.

Consider then a reference experiment involving some known pure states $\ket{\psi'_{s\overline{s}}}\in\mathcal{H}_{s'}\otimes\mathcal{H}_{\overline{s}'}$ and known projective measurements represented by the observables $A_x'$, $B_y'$, $C_z'$ and $E_e'$ that generate the same correlations $\p$. We say that both experiments are equivalent, or, alternatively, that $\ket{\psi'_{s\overline{s}}}$ and $A'_x$, $B'_y$, $C'_z$, $E'_e$ are self-tested from $\p$ if one can prove that the local Hilbert spaces admit the product form $\mathcal{H}_s=\mathcal{H}_{s'}\otimes\mathcal{H}_{s''}$ and $\mathcal{H}_{\overline{s}}=\mathcal{H}_{\overline{s}'}\otimes\mathcal{H}_{\overline{s}''}$ $(s=A,B,C)$ for some auxiliary Hilbert spaces $\mathcal{H}_{s''}$ and $\mathcal{H}_{\overline{s}''}$, and that there are local unitary operations  $U_s:\mathcal{H}_s\to \mathcal{H}_{s'}\otimes\mathcal{H}_{s''}$ and $V_{\overline{s}}:\mathcal{H}_{\overline{s}}\to \mathcal{H}_{\overline{s}'}\otimes\mathcal{H}_{\overline{s}''}$
\begin{equation}
 (U_s\otimes V_{\overline{s}})\ket{\psi_{s\overline{s}}}=\ket{\psi'_{s'\overline{s}'}}\otimes\ket{\mathrm{junk}_{s''\overline{s}''}},
\end{equation}
where $\ket{\mathrm{junk}_{s''\overline{s}''}}$ belongs to $\mathcal{H}_{s''}\otimes\mathcal{H}_{\overline{s}''}$ and
\begin{equation}
    U_s\,s_{i}\,U_s^{\dagger}=s_{i}'\otimes\mathbbm{1}_{s''},\quad V_{\overline{E}}\, E_e\,V_{\overline{E}}^{\dagger}=E_e'\otimes\I_{E''},
\end{equation}
where $\mathbbm{1}_{E''}$ is the identity acting on the auxiliary systems $\mathcal{H}_{\overline{s}''}$ and $V_{\overline{E}}=U_{\overline{A}}\otimes U_{\overline{B}}\otimes U_{\overline{C}}$; recall that $E''=\overline{A}''\overline{B}''\overline{C}''$.

It is important to note here that, since the measurements can only be characterized on the local states, a natural assumption that we make throughout this work is that the latter are full-rank. Moreover, we do not assume that Eve's measurements are projective for our self-testing proof.

\textit{Results.---}
We propose a scheme that allows to device-independently certify in a single experiment the three different measurements introduced above, i.e., $M_{\mathrm{GHZ}}$, $M_{\mathrm{NLWE}}$, and the hybrid one $M_{\mathrm{UPB}}$. Consider again the network scenario in which 
the unknown pure states $|\psi_{s\overline{s}}\rangle$ with $s=A,B,C$
are distributed by independent sources among four parties who perform unknown measurements $A_x$, $B_y$, $C_z$ and $E_e$ on their shares of those states. Now, their aim is to exploit the observed correlations $\p$ to certify in the network the ideal reference experiment in which 
each source distributes the maximally entangled state of two qubits
$\ket{\psi'_{s\overline{s}}}=\ket{\phi^+}=(\ket{00}+\ket{11})/\sqrt{2}$, 
the external observers measure the following observables on their shares of the joint state
\begin{equation}\label{idealABC}
A'_{0/1/2}=\frac{X\pm Z}{\sqrt{2}}/Y,\quad s'_{0/1/2}=Z/X/Y\ \ (s=B,C).
\end{equation}
and Eve performs the three measurements mentioned above, i.e.,
$E'_0=M_{\mathrm{GHZ}}$, $E'_1=M_{\mathrm{NLWE}}$ and $E'_2=M_{\mathrm{UPB}}$.

To make our considerations easier to follow we divide them into three parts, each devoted to one of Eve's measurements. We begin with $E_0=\{R_{l|0}\}_{l=0}^7$. To certify that it is equivalent to the GHZ measurement $M_{\mathrm{GHZ}}$, the observed correlations $\p$ must be such that for each outcome $l$ of $E_0$, 
$\{p(abcl|xyz0)\}$ $(x,y,z=0,1)$ maximally violate the Bell inequality 
\begin{equation}\label{BE1m}
\sqrt{2}(-1)^{l_1}\left\langle2\widetilde{A}_1 B_1 C_1+(-1)^{l_2}\widetilde{A}_0B_0
+(-1)^{l_3}\widetilde{A}_0C_0\right\rangle\leq 4,
\end{equation}
where $\widetilde{A}_{0/1}=(A_0\pm A_1)/\sqrt{2}$ and $l\equiv l_1l_2l_3$ with $l_1,l_2,l_3=0,1$ is the binary representation of $l$, and
the probability of observing the outcome $l$ by Eve must obey $\overline{P}(l|e=0)=1/8$.

The Bell inequality corresponding to $l=0$ was introduced in \cite{Flavio}
and is maximally violated by $\ket{\phi_0}$, whereas those corresponding to $l\neq 0$ are its modifications that are adjusted to be maximally violated by the remaining GHZ states $|\phi_{l}\rangle$ and the quantum observables given in Eq. (\ref{idealABC}). In fact, these Bell violations can be achieved in the reference quantum network described above.

Let us now state our first result on self-testing the GHZ measurement (see Appendix B of \cite{SupMat} for proof).
\begin{thm}\label{theorem1m}
Assume that the observed correlations $\p$ obtained in the network are such that the Bell inequalities in Eq.~\eqref{BE1m} are maximally violated for each outcome $l$ of Eve's measurement $E_0$ and that each outcome occurs with probability $\overline{P}(l|e=0)=1/8$. 
Then,
 (i) the Hilbert spaces decompose as $\mathcal{H}_{s}=\mathcal{H}_{s'}\otimes\mathcal{H}_{s''}$  and $\mathcal{H}_{\overline{s}}=\mathcal{H}_{\overline{s}'}\otimes\mathcal{H}_{\overline{s}''}$; (ii) There exist local unitary transformations $U_{s}:\mathcal{H}_{s}\rightarrow\mathcal{H}_{s}$ and $V_{\overline{s}}:\mathcal{H}_{\overline{s}}\rightarrow\mathcal{H}_{\overline{s}}$ 
such that 
\begin{equation}
(U_{s}\otimes V_{\overline{s}})\ket{\psi_{s\overline{s}}}=|\phi^+_{s'\overline{s}'}\rangle\otimes\ket{\xi_{s''\overline{s}''}}    
\end{equation}
for some $\ket{\xi_{s''\overline{s}''}}\in\mathcal{H}_{s''}\otimes\mathcal{H}_{\overline{s}''}$, and the measurements of all parties are certified as
\begin{equation}
\overline{V}\,R_{l|0}\,\overline{V}^{\dagger}=\proj{\phi_l}_{E'}\otimes\I_{E''},\quad U_{s}\,s_{i}\,U_{s}^{\dagger}=s'_i\otimes\I_{s''} 
\end{equation}
for all $l$ and $i=0,1$ where $\overline{V}=\otimes_s V_{\overline{s}}$ such that $s=A,B,C$ and $E=ABC$. The states $\ket{\phi_l}$ and the observables $s'_i$ are given in Eqs. \eqref{GHZvecsm} and \eqref{idealABC} respectively.
\end{thm}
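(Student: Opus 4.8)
\emph{Proof proposal.} The plan is to run a self-test of the Bell inequality~\eqref{BE1m} separately for every outcome $l$ of $E_0$ and then stitch the eight results together. Fixing $l$, I would write the sub-normalised conditional state of the external parties as $\widetilde{\sigma}_l=\mathrm{Tr}_{E}\!\left[(\I_{ABC}\otimes R_{l|0})\,\rho_{ABCE}\right]$; the assumption $\overline{P}(l|0)=1/8$ guarantees that all eight normalised states $\sigma_l=8\,\widetilde{\sigma}_l$ are genuine density operators carrying the full violation. For $l=0$, \eqref{BE1m} is exactly the inequality of Ref.~\cite{Flavio}, whose maximal violation self-tests the three-qubit GHZ state together with the anticommuting pairs $(X{\pm}Z)/\sqrt2$ on Alice and $Z,X$ on Bob and Charlie; the inequalities for $l\neq0$ are sign-modifications built from the same observables and maximally violated by $\ket{\phi_l}$. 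Invoking (and, for $l\neq0$, adapting) this self-test on each $\sigma_l$ produces local isometries $U_A,U_B,U_C$ with $U_s\,s_i\,U_s^{\dagger}=s_i'\otimes\I_{s''}$ for $i=0,1$ and $(U_A\otimes U_B\otimes U_C)\,\sigma_l\,(U_A\otimes U_B\otimes U_C)^{\dagger}=\proj{\phi_l}\otimes\sigma_l^{\mathrm{jk}}$. The decisive observation is that the swap isometries are assembled \emph{only} from the observables $A_0,A_1,B_0,B_1,C_0,C_1$ and are therefore the same for every $l$; this single family $\{U_s\}$ already yields claim~(i), the decomposition $\mathcal H_s=\mathcal H_{s'}\otimes\mathcal H_{s''}$, and the observable half of claim~(ii).

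\emph{From conditional states to sources.} Because the three sources are independent, $\rho_{ABCE}=\rho_{A\overline{A}}\otimes\rho_{B\overline{B}}\otimes\rho_{C\overline{C}}$ and each $s_i$ acts on $\mathcal H_s$ alone, so the self-tested anticommutation of $A_0,A_1$ (and likewise for $B,C$) certifies in the usual way that every reduced state $\rho_{s\overline{s}}$ contains a maximally entangled qubit pair. Choosing the matching isometries $U_{\overline{s}}$ on Eve's halves, I would bring each source to $(U_s\otimes U_{\overline{s}})\ket{\psi_{s\overline{s}}}=\ket{\phi^+_{s'\overline{s}'}}\otimes\ket{\xi_{s''\overline{s}''}}$, which is the state half of claim~(ii); the standing full-rank assumption is exactly what lets this certification spread from the support of the $\sigma_l$ to all of $\mathcal H_{\overline s}$.

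\emph{Certifying $E_0$.} With every source standardised to $\ket{\phi^+}$, the effect of Eve's outcome $l$ on the external qubits is read off by the Choi (transpose) trick: the certified conditional state obeys $\proj{\phi_l}\propto\bigl(\overline{U}\,R_{l|0}\,\overline{U}^{\dagger}\bigr)^{T}$ on the primed qubits, where $\overline{U}=U_{\overline{A}}\otimes U_{\overline{B}}\otimes U_{\overline{C}}$. Since each $\ket{\phi_l}$ is real in the computational basis, transposition leaves $\proj{\phi_l}$ invariant, forcing $\overline{U}\,R_{l|0}\,\overline{U}^{\dagger}=\proj{\phi_l}_{E'}\otimes\I_{E''}$ for all $l$; the combination of $\overline{P}(l|0)=1/8$ with the orthonormality of $\{\ket{\phi_l}\}$ upgrades this to the statement that $\{R_{l|0}\}_{l=0}^{7}$ is a complete rank-one projective measurement, i.e.\ precisely $M_{\mathrm{GHZ}}$.

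\emph{Main obstacle.} I expect the hardest part to be this final step: converting eight certified \emph{conditional} states into a characterisation of the \emph{single} physical measurement $E_0$ on Eve's a priori high-dimensional, full-rank systems. One must verify that the $l$-independent external isometries are compatible with one fixed triple $U_{\overline{A}},U_{\overline{B}},U_{\overline{C}}$ valid for all outcomes simultaneously, and that the transpose trick survives once the junk factors $\ket{\xi_{s''\overline{s}''}}$ are carried along. A secondary difficulty is checking that the $l\neq0$ inequalities inherit exactly the self-testing isometries of the $l=0$ case, so that all eight conditional states are pinned down in one common reference frame.
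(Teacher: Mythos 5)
Your overall architecture --- per-outcome SOS self-testing of the conditional states $\rho^l_{ABC}$, a single set of external unitaries $U_s$ built only from the observables, then a transpose/Choi argument to extract $E_0$ --- parallels the paper's proof, and your first step is essentially the paper's part (a). The genuine gap is in your second step, where you assert that ``the self-tested anticommutation of $A_0,A_1$ (and likewise for $B,C$) certifies in the usual way that every reduced state $\rho_{s\overline{s}}$ contains a maximally entangled qubit pair.'' This is a non sequitur: anticommutation is a property of Alice's operators alone and is compatible with \emph{any} source state, including a product state $\rho_A\otimes\rho_{\overline{A}}$. In the ``usual'' bipartite argument (CHSH-type), maximal entanglement of a pair is inferred from a Bell test played \emph{between the two halves of that pair}; here no such test exists, because Eve never measures $\overline{A}$ by itself --- her measurement acts jointly on $\overline{A}\overline{B}\overline{C}$, and the Bell violations constrain the conditional states $\rho^l_{ABC}$, not the sources $\rho_{s\overline{s}}$. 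Certifying the sources is precisely where the paper spends its effort: it writes each source in Schmidt form $\ket{\tilde\psi_{s\overline{s}}}=\sqrt{d_s}\,(\I\otimes P_{\overline{s}})\ket{\phi^+_{d_s}}$ with unknown $P_{\overline{s}}\geq 0$, uses the transpose trick to turn the conditional-state certification into $\proj{\phi_l}\otimes(\tilde{\rho}^{l})^T\propto \big(\bigotimes_s P_s U_s\big)R_{l|0}\big(\bigotimes_s U_s^{\dagger}P_s\big)$, and only after summing over all eight outcomes, using $\sum_l R_{l|0}=\I$ and $\overline{P}(l|0)=1/8$, concludes that $P_s=\I_{s'}\otimes\sqrt{\sigma_{s''}/2}$, i.e.\ that the qubit factor of each source is maximally entangled. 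This global step is indispensable: a partially entangled source can be locally filtered by Eve so that \emph{individual} outcomes still exhibit perfect GHZ correlations, so no single-outcome reasoning, and certainly no appeal to anticommutation, can close it.

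The gap is repairable without copying the paper verbatim: since your step 1 certifies $U\rho^l_{ABC}U^{\dagger}=\proj{\phi_l}\otimes\tilde{\rho}^l_{A''B''C''}$ for every $l$, and Eve's measurement cannot alter Alice's marginal, averaging gives $U_A\rho_A U_A^{\dagger}=\sum_l \overline{P}(l|0)\,\Tr_{BC}[\,U\rho^l_{ABC}U^{\dagger}]=\tfrac{1}{2}\I_{A'}\otimes\sigma_{A''}$, after which uniqueness of purification (using purity of $\ket{\psi_{A\overline{A}}}$ and the standing full-rank assumption) yields the required $U_{\overline{A}}$; this would legitimately replace your appeal to ``the usual way.'' Two further cautions on your last step: first, it presupposes the source certification, so it inherits the gap above; second, the proportionality $\proj{\phi_l}\propto(\overline{U}R_{l|0}\overline{U}^{\dagger})^T$ holds only on the primed qubits --- on the junk factors $R_{l|0}$ appears sandwiched between the (non-maximally-entangled) junk Schmidt operators, so reaching $\proj{\phi_l}_{E'}\otimes\I_{E''}$ requires inverting those operators (full-rank assumption), positivity, $\sum_l R_{l|0}=\I$ and orthogonality of the $\ket{\phi_l}$, exactly as in the paper's part (c); reality of $\ket{\phi_l}$ under transposition alone forces nothing on $E''$.
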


In what follows we build on this result to show how to certify the other Eve's measurements $E_1$ and $E_2$ and also the third measurements of the external parties $A_2$, $B_2$ and $C_2$.
Let us then consider Eve's second measurement $E_1=\{R_{l|1}\}_{l=0}^7$. In order to certify that it is equivalent to the separable measurement $M_{\mathrm{NLWE}}$, the observed correlations $\p$, apart from the conditions stated in Theorem \ref{theorem1m}, must additionally satisfy
\begin{eqnarray}\label{NLWEstatm}
&&\hspace{-0.5cm}p(0100|0011)=p(0111|0011)=p(0012|1001)\nonumber\\
&&=p(1013|1001)=p(1004|0101)=p(1105|0101)\nonumber\\
&&=p(0006|0001)=p(1117|0001)=\frac{1}{8}.
\end{eqnarray}
Notice that these conditions are met in the ideal experiment outlined above.

Let us state formally our second result 
that together with Theorem \ref{theorem1m} provides 
a scheme for DI certification of
the separable measurement exhibiting NLWE in the least possible dimension (cf. Appendix C of \cite{SupMat} for a proof).
\begin{thm}\label{theorem3m}
Suppose that $\p$ generated in the network satisfies the assumptions of Theorem \ref{theorem1m} as well as the conditions in Eq. (\ref{NLWEstatm}). 
Then, for any $l$ it holds that $\overline{V}\, R_{l|1}\,
\overline{V}^{\dagger} =\proj{\delta_l}_{E'}\otimes\I_{E''}$
where $\overline{V}$ is the same unitary 
as in Theorem \ref{theorem1m} and $E=ABC$.
\end{thm}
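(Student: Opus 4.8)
The plan is to bootstrap from Theorem~\ref{theorem1m}. Its conclusions let me pass to the canonical picture in which, after applying the local unitaries $U_s,U_{\overline s}$, each source emits $\ket{\phi^+}_{s'\overline s'}\otimes\ket{\xi_{s''\overline s''}}$, the external observables act as $s_i'\otimes\I$ on the primed qubit for $i=0,1$, and the only unknown object left is Eve's transformed second measurement $\widetilde R_{l|1}:=\overline U R_{l|1}\overline U^{\dagger}$ on $E'E''$. Since correlations are unitarily invariant, the constraints in Eq.~\eqref{NLWEstatm} hold verbatim here, with the external elements being the projectors $P^s_{a|x}\otimes\I_{s''}$ onto the $(-1)^a$-eigenspace of $s'_x$. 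I would first show that each constraint forces both a uniform outcome distribution and a deterministic conditional response. Every listed probability is bounded by its marginal $\overline P(l|1)=\sum_{abc}p(abcl|xyz1)$ (independent of the external settings), and summing the eight equalities gives $1=\sum_l\tfrac18\le\sum_l\overline P(l|1)=1$, so every bound is saturated. Hence $\overline P(l|1)=\tfrac18$ for all $l$ and the conditional probability equals $1$ for the settings and outcomes appearing in Eq.~\eqref{NLWEstatm}.

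Next I would read off the steered states. Let $\widehat\sigma_l$ be the normalised state steered onto the external systems when Eve obtains $l$. A conditional probability equal to $1$ for a projective measurement means $\widehat\sigma_l$ lies in the range of the relevant product projector; for $l=0$, say, that range is $\ket{\overline 0}_{A'}\!\otimes\mathcal H_{A''}\otimes\ket{1}_{B'}\!\otimes\mathcal H_{B''}\otimes\ket{+}_{C'}\!\otimes\mathcal H_{C''}$, so $\widehat\sigma_0=\proj{\delta_0}_{A'B'C'}\otimes(\text{junk})$, and analogously $\widehat\sigma_l=\proj{\delta_l}_{A'B'C'}\otimes(\text{junk})$ for every $l$, precisely because the settings in Eq.~\eqref{NLWEstatm} single out the three local factors of $\ket{\delta_l}$. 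Tracing out the junk, the sub-normalised steered state on the primed external qubits is $\tfrac18\proj{\delta_l}$. I would then transfer this to Eve through the maximally entangled pairs: using $(\I\otimes M)\ket{\phi^+}=(M^{T}\otimes\I)\ket{\phi^+}$ on the primed systems and tracing the junk yields
\begin{equation}
\mathrm{Tr}_{E''}\!\big[\widetilde R_{l|1}(\I_{E'}\otimes\rho_{E''})\big]^{T}=\proj{\delta_l}_{E'},
\end{equation}
where $\rho_{E''}$ is the reduced junk state on Eve's side. Since each $\ket{\delta_l}$ is real the transpose is immaterial, so $\widetilde r_l:=\mathrm{Tr}_{E''}[\widetilde R_{l|1}(\I_{E'}\otimes\rho_{E''})]=\proj{\delta_l}_{E'}$.

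The final step, which I expect to be the main obstacle, is to upgrade this junk-averaged identity to the exact tensor form, and here the standing full-rank assumption is essential, i.e. $\rho_{E''}>0$. For $m\neq l$ the identity $\widetilde r_l=\proj{\delta_l}$ gives $\mathrm{Tr}_{E''}[\bra{\delta_m}\widetilde R_{l|1}\ket{\delta_m}\,\rho_{E''}]=0$; as $\bra{\delta_m}\widetilde R_{l|1}\ket{\delta_m}\ge 0$ and $\rho_{E''}>0$, this forces $\bra{\delta_m}\widetilde R_{l|1}\ket{\delta_m}=0$, and positivity of $\widetilde R_{l|1}$ then gives $\widetilde R_{l|1}(\ket{\delta_m}\otimes\I_{E''})=0$ for all $m\neq l$. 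Because $\{\ket{\delta_m}\}$ is a complete orthonormal basis of $E'$, the support of $\widetilde R_{l|1}$ is confined to $\ket{\delta_l}_{E'}\otimes\mathcal H_{E''}$, i.e. $\widetilde R_{l|1}=\proj{\delta_l}_{E'}\otimes T_l$ with $T_l\ge 0$. Finally, completeness $\sum_l\widetilde R_{l|1}=\I_{E'}\otimes\I_{E''}$, sandwiched between $\bra{\delta_m}$ and $\ket{\delta_m}$, yields $T_m=\I_{E''}$, so $\overline U R_{l|1}\overline U^{\dagger}=\proj{\delta_l}_{E'}\otimes\I_{E''}$, as claimed. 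The crux is that the junk registers cannot hide extra structure, which is exactly what full-rankness of $\rho_{E''}$ guarantees in this last step.
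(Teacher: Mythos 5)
Your proposal is correct, and it reaches the conclusion by a genuinely different organization of the argument than the paper's proof in Appendix~\ref{AppC}. The skeleton is shared: both bootstrap from Theorem~\ref{theorem1m}, transfer the conditions of Eq.~\eqref{NLWEstatm} onto Eve's operators through the maximally entangled pairs using $(\I\otimes M)\ket{\phi^+}=(M^T\otimes\I)\ket{\phi^+}$, and close with positivity, full-rankness of the auxiliary states, and completeness $\sum_l R_{l|1}=\I$. What you do differently is twofold. First, your opening saturation argument, $1=\sum_l\tfrac18\le\sum_l\overline{P}(l|1)=1$, which forces the uniform marginal and deterministic conditional responses, has no counterpart in the paper; there each joint probability is converted directly into the trace identity $\Tr[(\proj{\delta_l}\otimes\sigma^T_{A''}\otimes\sigma^T_{B''}\otimes\sigma^T_{C''})\overline{R}_l^T]=1$ of Eq.~\eqref{nlwest1}, whose normalization is absorbed by Lemma~\ref{theorem1.1}. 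Second, the block analysis runs in the opposite direction: the paper first pins the diagonal block $\bra{\delta_l}\overline{R}_l^T\ket{\delta_l}=\I$ via Lemma~\ref{theorem1.1}, then kills the off-diagonal blocks in row and column $l$ with the operator inequality $(\overline{R}_l^T)^2\le\overline{R}_l^T\le\I$, and only at the end removes the remaining positive blocks $\mathbbm{L}_l$ by summing over $l$; you instead show that every diagonal block $\bra{\delta_m}\widetilde{R}_{l|1}\ket{\delta_m}$ with $m\ne l$ vanishes (steering support plus the fact that a positive block with zero trace against the full-rank $\rho_{E''}$ is zero), let positivity of $\widetilde{R}_{l|1}$ wipe out all off-diagonal blocks at once, and invoke completeness only to fix the surviving block to $\I_{E''}$, thereby avoiding the squaring trick and the remainder operators $\mathbbm{L}_l$ altogether. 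Your route is leaner for this theorem, precisely because every outcome of $M_{\mathrm{NLWE}}$ is a rank-one product projector, so probability-one conditions localize the steered states completely; the paper's block machinery is the version that carries over to Appendix~\ref{AppD}, where the fifth outcome of $E_2$ projects onto a four-dimensional subspace, the corresponding steered state is mixed, and several correlation conditions must be combined block by block, so a single support argument of your kind would not pin that operator down.
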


Before proceeding to the final result which is self-testing of $M_{\mathrm{UPB}}$ in $E_2$, we need to 
introduce another condition that is necessary to prove a 
self-testing statement for $A_2$, $B_2$ and $C_2$. Precisely, the correlations $\{p(abc0|xyz0)\}$ with $x,y,z=1,2$ corresponding to the situation in which Eve observes the first outcome $l=0$ of $E_0$, the following condition is satisfied
\begin{equation}\label{BE2m}
\langle \widetilde{A}_1B_1 C_1-\widetilde{A}_1B_2 C_2-A_2 B_1C_2-A_2B_2 C_1\rangle=4,
\end{equation}
%
where $\widetilde{A}_1=(A_0-A_1)/\sqrt{2}$ and the above functional is inspired by the Mermin inequality \cite{Mermin}. This along with Theorem \ref{theorem1m} implies that (see Appendix B  of \cite{SupMat}) 
\begin{eqnarray}\label{stmea3m}
    U_s\,s_2\,U_{s}^{\dagger}=\pm Y_{s'}\otimes\I_{s''}\qquad (s=A,B,C).
\end{eqnarray}
With the above characterization at hand, we can finally move onto showing how to certify $M_{\mathrm{UPB}}$ in $E_2=\{R_{l|2}\}_{l=0}^4$. To this end, the observed correlations must satisfy
%
%
\begin{eqnarray}\label{betstat0m}
p(0100|0012)&=&p(0011|1002)=p(1002|0102)\nonumber\\ 
&=&p(1113|1112)=\frac{1}{8}
\end{eqnarray}
along with four other conditions stated in Appendix D of \cite{SupMat} as Eqs. (D4a)-(D4d);
we refer to them as Pr$_2$.
Notice again that correlations obtained within the reference experiments fulfill the above conditions.
Let us now state the following theorem.
\begin{thm}\label{theorem4m}
Assume that the assumptions of Theorem \ref{theorem3m} and the conditions in Eq. \eqref{betstat0m} and Pr$_2$ are satisfied. Then, 
the measurement $E_2=\{R_{l|2}\}$ is certified as
%
 $\overline{V}\, R_{l|2}\,\overline{V}^{\dagger} =\proj{\tau_l}_{E'}\otimes\I_{E''}$  
%
for $l=0,1,2,3,$ and,
%
 $\overline{V}\, R_{4|2}\,\overline{V}^{\dagger} =\Gamma_{E'}\otimes\I_{E''},  $ 
%
where $\ket{\tau_l}$ and $\Gamma$ are defined in Eqs. \eqref{UPBm} and \eqref{UPB1m}, respectively, and $\overline{V}$ is the same unitary operation as in Theorem \ref{theorem1m}. 
\end{thm}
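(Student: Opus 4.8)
The plan is to leverage Theorems~\ref{theorem1m} and \ref{theorem3m} together with the characterisation \eqref{stmea3m} of $A_2,B_2,C_2$, so that we may pass to the reference frame in which each source emits $\ket{\phi^+}_{s'\overline{s}'}\otimes\ket{\xi}_{s''\overline{s}''}$ and all external observables coincide with the ideal ones of Eq.~\eqref{idealABC}. Exactly as in the proof of Theorem~\ref{theorem3m}, the problem then reduces to characterising Eve's effective measurement on the reference subsystem $E'=\overline{A}'\overline{B}'\overline{C}'$, the block form $(\cdot)_{E'}\otimes\I_{E''}$ being inherited from the full-rank assumption on the junk. The tool is the steering property of $\ket{\phi^+}$: when Alice obtains outcome $a$ of the certified observable $A_x'$, Eve's $\overline{A}'$ is prepared in (the complex conjugate of) the $(-1)^a$-eigenstate of $A_x'$, and analogously for Bob and Charlie. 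Because each external reduced state is maximally mixed, every outcome triple $(a,b,c)$ occurs with marginal probability $1/8$, so $p(abcl|xyz2)=\tfrac18\,\bra{\eta}R_{l|2}\ket{\eta}$ with $\ket{\eta}_{E'}$ the steered product state; hence any observed value $1/8$ forces $\bra{\eta}R_{l|2}\ket{\eta}=1$.

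First I would read off the four conditions in Eq.~\eqref{betstat0m}. A direct check shows that the corresponding settings and outcomes steer Eve onto exactly $\ket{\tau_0},\ket{\tau_1},\ket{\tau_2},\ket{\tau_3}$ (the relevant eigenstates are real, so conjugation is trivial), with Eve returning outcomes $l=0,1,2,3$ respectively. Maximality of the probability then yields $R_{l|2}\ket{\tau_l}=\ket{\tau_l}$ for $l=0,1,2,3$. Since $\{\ket{\tau_l}\}_{l=0}^3$ is orthonormal and $\sum_{l=0}^4 R_{l|2}=\I$, evaluating $\sum_m\bra{\tau_l}R_{m|2}\ket{\tau_l}=1$ forces $R_{m|2}\ket{\tau_l}=0$ for every $m\neq l$. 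In particular $R_{4|2}\ket{\tau_l}=0$, so $\mathrm{ran}\,R_{4|2}\subseteq\mathcal{V}^\perp$, where $\mathcal{V}=\mathrm{span}\{\ket{\tau_l}\}$; as both are orthogonal projectors, this gives $R_{4|2}\le\Gamma$.

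The main obstacle is to show that $R_{4|2}$ fills out the whole completely entangled subspace $\mathcal{V}^\perp$, i.e. $R_{4|2}=\Gamma$. This is delicate precisely because $\mathcal{V}^\perp$ contains no product vectors, so no steered state lies inside it and one cannot probe it directly. Here the conditions Pr$_2$ enter: each of the four equations (D4a)-(D4d) fixes an external configuration---now necessarily involving the $Y$ settings certified in \eqref{stmea3m}, so that complex amplitudes are available---that steers a product state $\ket{\eta^{(k)}}_{E'}$ with nonzero projection onto $\mathcal{V}^\perp$, and demands that the outcome-$4$ probability attain the ideal value $\bra{\eta^{(k)}}\Gamma\ket{\eta^{(k)}}$. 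Since $\Gamma-R_{4|2}\ge0$, equality of $\bra{\eta^{(k)}}R_{4|2}\ket{\eta^{(k)}}$ and $\bra{\eta^{(k)}}\Gamma\ket{\eta^{(k)}}$ forces $(\Gamma-R_{4|2})\ket{\eta^{(k)}}=0$, whence $\Gamma\ket{\eta^{(k)}}\in\mathrm{ran}\,R_{4|2}$. The crux of the calculation is to verify that the four vectors $\{\Gamma\ket{\eta^{(k)}}\}_{k=1}^4$ are linearly independent and therefore span the four-dimensional $\mathcal{V}^\perp$; granting this, $\mathrm{ran}\,R_{4|2}=\mathcal{V}^\perp$ and $R_{4|2}=\Gamma$.

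Finally I would close the argument. With $R_{4|2}=\Gamma$ and the projectors orthogonal, $\mathrm{ran}\,R_{l|2}\perp\mathcal{V}^\perp$ for $l\le3$, hence $\mathrm{ran}\,R_{l|2}\subseteq\mathcal{V}$. Combined with $R_{l|2}\ket{\tau_l}=\ket{\tau_l}$, $R_{l|2}\ket{\tau_m}=0$ ($m\neq l$) and the fact that $\{\ket{\tau_m}\}$ is an orthonormal basis of $\mathcal{V}$, this pins down $R_{l|2}=\proj{\tau_l}$ for $l=0,1,2,3$. Reinstating the junk via the block structure yields $\overline{U}R_{l|2}\overline{U}^\dagger=\proj{\tau_l}_{E'}\otimes\I_{E''}$ and $\overline{U}R_{4|2}\overline{U}^\dagger=\Gamma_{E'}\otimes\I_{E''}$, as claimed. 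I expect the decisive step to be the spanning argument for $\mathcal{V}^\perp$ in the third paragraph---both the choice of the four probing product states and the verification that their $\Gamma$-projections are independent---since everything else follows from steering and elementary projector identities.
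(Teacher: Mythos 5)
Your handling of the first four outcomes is correct and is essentially the paper's own argument: the probabilities in Eq.~\eqref{betstat0m} steer Eve's reference system onto $\ket{\tau_0},\dots,\ket{\tau_3}$, and the full-rank-junk reasoning (Lemma~\ref{theorem1.1}) together with $\sum_m R_{m|2}=\I$ gives $R_{l|2}(\ket{\tau_l}\otimes\ket{\xi})=\ket{\tau_l}\otimes\ket{\xi}$ and $R_{m|2}(\ket{\tau_l}\otimes\ket{\xi})=0$ for $m\neq l$ in the rotated frame; the paper packages this instead as $\overline{R}_l^T=\proj{\tau_l}\otimes\I+\mathbbm{L}_l$ with $\mathbbm{L}_l\geq 0$ and disposes of the residuals only at the very end, but the content is identical, and your early conclusion $\overline{U}R_{4|2}\overline{U}^{\dagger}\leq\Gamma_{E'}\otimes\I_{E''}$ is a legitimate reorganization.

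The gap is in the fifth outcome, and it stems from guessing the form of Pr$_2$ wrongly. The actual conditions (D4a)--(D4d), i.e.\ Eqs.~\eqref{betstatfull}, are not single probabilities at steered product states: they are correlation equalities $\langle S_i\,R_{4|2}\rangle=c_i$ in which each $S_i$ is a Bell-type combination of the certified observables involving the $Y$ settings, e.g.\ $S_1$ is built from $\I+A_0B_0+B_0C_0+A_0C_0+A_1B_1C_1-A_1B_2C_2-A_2B_1C_2-A_2B_2C_1$. Once all nine external observables are certified, each $S_i$ becomes proportional to $\proj{\psi_{1,i}}$, a projection onto an \emph{entangled} GHZ-like vector, and the four vectors $\ket{\psi_{1,4}},\dots,\ket{\psi_{1,7}}$ of Eqs.~\eqref{psi14} and \eqref{psi15} form an orthonormal basis of the complementary subspace with $\sum_{i=4}^{7}\proj{\psi_{1,i}}=\Gamma$. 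Lemma~\ref{theorem1.1} then yields directly $\overline{R}_4=\Gamma\otimes\I+\mathbbm{L}'$, with no operator inequality, no saturation step and no spanning argument; this synthesis of entangled projectors out of purely local correlation data is exactly why $A_2,B_2,C_2$ had to be certified first, since no steered state can lie inside the completely entangled subspace. Your substitute mechanism --- keep $R_{4|2}\leq\Gamma$, demand saturation $\bra{\eta^{(k)}}R_{4|2}\ket{\eta^{(k)}}=\bra{\eta^{(k)}}\Gamma\ket{\eta^{(k)}}$ at steered product states, and show that the projections $\Gamma\ket{\eta^{(k)}}$ span $\mathcal{V}^{\perp}$ --- is mathematically sound as far as it goes (saturation of a positive-semidefinite difference does give $(\Gamma-R_{4|2})\ket{\eta^{(k)}}=0$, and since $\Gamma-R_{4|2}$ also annihilates $\mathcal{V}$ it would vanish identically); in fact, because $\Gamma$ is a real matrix and the real steered products of eigenstates of $(X\pm Z)/\sqrt{2}$, $Z$ and $X$ already span $\mathbb{R}^8$, suitable $\ket{\eta^{(k)}}$ exist even without the $Y$ settings. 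But this proves a theorem with different hypotheses from the one stated, and its decisive step --- exhibiting four concrete setting/outcome combinations and verifying linear independence of their $\Gamma$-projections --- is precisely what you leave unverified.
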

The proof can be found in Appendix D of \cite{SupMat}. This final result shows that the hybrid separable-entangled measurement constructed from a UPB can also be self-tested using our scheme. Most importantly, this is the minimal scenario possible to self-test such a measurement.

\textit{Bound entangled state.---}An interesting consequence of Theorem \ref{theorem4m} is that the considered network allows one to self-test a bound entangled state shared between the external parties. Assume that the states and Eve's measurement $E_2$ are certified as in Eq. \eqref{idealABC} and as in Theorem \ref{theorem4m}, respectively. The post-measurement state shared by the external parties that corresponds to the last outcome of $E_2$ is then given by [see Appendix D of \cite{SupMat}]
\begin{eqnarray}
U\,\rho_{ABC}\,U^{\dagger}=\frac{1}{4}\Gamma_{A'B'C'}\otimes \tilde{\rho}_{A''B''C''},
\end{eqnarray}
where $U=\bigotimes_sU_s$ and the unitaries $U_s$ are the same as in Theorem \ref{theorem1m}. As mentioned above, the state $\Gamma/4$ is bound entangled  \cite{Bennett99} and can be prepared by Eve in the external parties labs with a simple post-processing strategy. She first broadcasts her outcome of the measurement $E_2$ and then the external parties discard those runs of the experiment for which Eve observes any other outcome than the last one. 

\textit{Robustness.---} In a situation when the observed correlations are not ideal, we have also analyzed the noise robustness of our scheme for the particular Eve's measurement corresponding to $e=0$. Our findings can be summarized as the following theorem whose proof is deferred to Appendix E of \cite{SupMat}.
%
%
\begin{thm} Consider the Bell inequalities \eqref{BE1m} $\mathcal{I}_{l}$ is violated $\varepsilon$-close to the maximal violation, that is, $\mathcal{I}_{{l}}\geq\beta_Q-\varepsilon$ along with the probabilities of the central party for $e=0$ given by $ \left |\overline{P}(l|e=0)-1/8\right|\leq\varepsilon$. Then, the ideal Eve's measurement $\proj{\phi_l}_{E'}\otimes\I_{E''}$ is close to the actual one $R_{l|0}$ as
       \begin{eqnarray}\label{robures1}
        \left\|\Tr_{E''}(\Tilde{R}_l)- \proj{\phi_l}_{E'}\right\|\leq 17\left(\varepsilon+ 9\sqrt{2\varepsilon}\right)
    \end{eqnarray}
    for all $l$, where $\Tilde{R}_l=(\bigotimes_{s}\,V_{\overline{s}})\ R_l\ (\bigotimes_{s} V_{\overline{s}}^{\dagger})$.
\end{thm}
%
Let us notice that the proof of this statement exploits the robustness analysis of the network-based self-testing scheme for the GHZ-like states \eqref{GHZvecsm} derived recently in Ref. \cite{sarkar2025}.
In an analogous way, one can perform a similar analysis for the other two measurements of Eve.
%

\textit{Outlook.---}Inspired by the notion of NLWE, we can identify a larger set of orthogonal projectors which can be referred to as nonlocality without distillable entanglement (NLWDE). NLWDE is defined as a set of projectors that cannot be perfectly distinguished using local operations and classical communication, such that their normalized versions are positive under partial transpose with respect to every bipartition. In this work, the measurement composed of UPB and bound entanglement falls under this category. It will be interesting to further analyze the properties of such sets. 

Several other follow-up problems arise from our work. Designing certification schemes for GHZ bases and product bases exhibiting NLWE for $N$ qubits will be straightforward from our work. A more interesting question would be to generalize our scheme to certify any hybrid or even any composite projective measurement in the case of any number of qubits. An even more challenging problem will be to propose a quantum-networks-based scheme for non-projective composite measurements. Our work also establishes a way to self-test mixed entangled states. A natural question here would be to construct schemes to self-test any mixed entangled state; a possibility that does not seem to exist in the standard Bell scenario. This has now been achieved in \cite{sarkar2024universal}.

\textit{Acknowledgments---}
 This project was funded within the QuantERA II Programme (VERIqTAS project) that has received funding from the European Union’s Horizon 2020 research and innovation programme under Grant Agreement No 101017733 and from the Polish National Science Center (project No 2021/03/Y/ST2/00175). C.D. and S.H. acknowledge the support by the ``Quantum Optical Technologies'' project, carried out within the International Research Agendas program of the Foundation for Polish Science co-financed by the European Union under the European Regional Development Fund. C.D. also acknowledges the support from the German Federal
Ministry of Education and Research (BMBF) within the funding program ``quantum technologies -- from basic research to market'' in the joint project QSolid (grant
number 13N16163).

\vspace{2cm}
\appendix




\onecolumngrid
\begin{center}
    \textbf{SUPPLEMENTARY MATERIALS}
\end{center}
\tableofcontents
\vspace{.5cm}
We first prove some basic mathematical lemmas that are useful towards the presented self-testing statement. Then, we self-test the measurement corresponding to Eve's input $e=0$ along with the sources and the measurements of other parties in the quantum network. Using the certified states and measurements we then self-test Eve's measurement corresponding to inputs $e=1,2$. Finally, we analyse the robustness of our self-testing scheme towards experimental imperfections.


\section{General results}
Before proceeding to the proofs of the main results, we introduce an important lemma that is required to derive our results.
\begin{lem}\label{theorem1.1}
Consider a positive semi-definite matrix $M$ such that $M\leq\I$.  Also, consider a density matrix $\rho$ that satisfies $\Tr(M\rho)=1$. Then, $M$ is an identity matrix acting on the support of $\rho$.
\end{lem}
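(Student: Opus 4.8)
The plan is to exploit positivity twice: by hypothesis $\I-M\geq 0$ (since $M\leq\I$), and $\rho\geq 0$ as a density matrix. I would rewrite the trace condition as a statement about the operator $A:=\I-M$ and then argue that $\Tr(A\rho)=0$ together with positivity of both factors forces $A\rho=0$. Concretely, from $\Tr(M\rho)=1$ and $\Tr\rho=1$ one immediately gets $\Tr(A\rho)=0$ with $A\geq 0$.

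First I would diagonalise the full-rank state as $\rho=\sum_i\lambda_i\proj{i}$, where all $\lambda_i>0$ and $\{\ket{i}\}$ is an orthonormal basis of the whole Hilbert space (full-rankness guarantees this). Then $\Tr(A\rho)=\sum_i\lambda_i\bra{i}A\ket{i}$, a sum of nonnegative terms (each $\bra{i}A\ket{i}\geq 0$ because $A\geq 0$, and each $\lambda_i>0$) that is assumed to vanish; hence every summand vanishes, i.e. $\bra{i}A\ket{i}=0$ for all $i$.

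The key step is to upgrade the vanishing of the diagonal entries $\bra{i}A\ket{i}=0$ to the vanishing of the vectors $A\ket{i}$. This is precisely where positivity of $A$ is essential: writing $\bra{i}A\ket{i}=\|A^{1/2}\ket{i}\|^2$ shows $A^{1/2}\ket{i}=0$, and therefore $A\ket{i}=0$, for every $i$. Since the $\ket{i}$ span the entire space (full rank), $A=0$, i.e. $M=\I$. Equivalently, and without the spectral decomposition, one may note $\Tr(A\rho)=\Tr\!\big[(A^{1/2}\rho^{1/2})(A^{1/2}\rho^{1/2})^{\dagger}\big]=0$, so $A^{1/2}\rho^{1/2}=0$, whence $A\rho=0$ and, multiplying by $\rho^{-1}$, again $A=0$.

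I do not anticipate a genuine obstacle here, as the argument is short and self-contained; the only point worth flagging is the interpretation of the conclusion. Once full-rankness is used the statement "$M$ is the identity on the support of $\rho$" literally reads $M=\I$. If one instead wanted the analogous claim for a general (non-full-rank) $\rho$, the same reasoning would yield $M\ket{i}=\ket{i}$ only for $\ket{i}$ in the support, giving the projector-restricted form $P_\rho\, M\, P_\rho = P_\rho$ with $P_\rho$ the support projector.
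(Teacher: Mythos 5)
Your proposal is correct and follows essentially the same route as the paper: diagonalise the full-rank $\rho$, use positivity to force every diagonal matrix element of $\I-M$ to vanish, and upgrade this to the operator statement via the square-root argument $\bra{i}(\I-M)\ket{i}=\|(\I-M)^{1/2}\ket{i}\|^2$. If anything, you are slightly more careful than the paper, which asserts that $\bra{\psi_k}M\ket{\psi_k}=1$ makes $\ket{\psi_k}$ an eigenstate of $M$ without spelling out this square-root step, and your closing remark correctly notes that full-rankness makes the conclusion literally $M=\I$, with the support-projector form $P_\rho M P_\rho=P_\rho$ being the natural statement for general $\rho$.
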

\begin{proof}
Let us consider the eigendecomposition of $\rho$,  $\rho=\sum_kp_k\proj{\psi_k}$ such that $p_k>0$. After putting it into the condition $\Tr(M\rho)=1$, one obtains 
\begin{eqnarray}
\sum_kp_k\bra{\psi_k}M\ket{\psi_k}=1,
\end{eqnarray}
which by employing the fact that $\sum_kp_k=1$ can further be rewritten as
\begin{eqnarray}
\sum_kp_k(1-\bra{\psi_k}M\ket{\psi_k})=0.
\end{eqnarray}
Due to the facts that $0\leq M\leq \I$ and $p_k> 0$, the above equation can hold true only if $\bra{\psi_k}M\ket{\psi_k}=1$. Thus, every $\ket{\psi_k}$ is an eigenstate of $M$ with eigenvalue $1$. As a result, $M=\sum_k\proj{\psi_k}=\I_{\rho}$,
where $\I_{\rho}$ is an identity acting on the support of $\rho$.
\end{proof}
\begin{lem}\label{lem2}
Consider two unitary observables $A_0$ and $A_1$ such that $A_i^2=\mathbbm{1}$ acting on a Hilbert space $\mathcal{H}$. Consider then projections $\Pi_j$ onto $k$ subspaces $\mathcal{K}_j$ of $\mathcal{H}$ such that $\Pi_1+\dots+\Pi_k>0$. Let us denote $\overline{A}_{i}^{(j)}=\Pi_jA_i\Pi_j$ and assume that 
\begin{equation}\label{dupablada}
 \overline{A}_{0}^{(j)}\overline{A}_{1}^{(j)}+  \overline{A}_{1}^{(j)}\overline{A}_{0}^{(j)} =0\qquad (j=1,\ldots,k)
\end{equation}
and that $\overline{A}_{i}^{(j)}$ are unitary on the subspaces $\mathcal{K}_j$.
Then, the matrices $A_i$ anticommute, 
\begin{equation}
 \{A_0,A_1\}=0.
\end{equation}
%
%
%
\end{lem}
\begin{proof}
Let us first show that $A_i=\overline{A}_i^{(j)}\oplus \widetilde{E}_{i,j}$, where $\widetilde{E}_{i,j}$ with $i=0,1$ are matrices acting on the complementary subspace to $\mathcal{K}_j$. For this purpose, let us fix $j=1$ and express the observable $A_0$ as
\begin{equation}
    A_0=
    \begin{pmatrix}
    \overline{A}_0^{(1)}& B\\ C & \widetilde{E}_{0,1}
    \end{pmatrix},
\end{equation}
where $\widetilde{E}_{0,1}$, $B$ and $C$ are some matrices; in particular, 
$\widetilde{E}_{0,1}$ acts on $\mathcal{K}_1^{\perp}$.

Now, exploiting the conditions that $A_1A_1^{\dagger}=\I$ and $ \overline{A}_0^{(1)}$ is unitary we can conclude that $B=0$. Similarly, using  $A_0^{\dagger}A_0=\I$ one obtains that $C=0$, which gives us the desired form of $A_0$. Similar analyses with any $i,j$ gives us $A_i=\overline{A}_i^{(j)}\oplus \Tilde{E}_{i,j}$. 

The facts that $A_i$ decompose into blocks with respect to the 
decompositions $\mathcal{H}=\mathcal{K}_j\oplus\mathcal{K}_j^{\perp}$ and that with respect to the same decompositions, $\Pi_j$ can be expressed as
\begin{equation}
    \Pi_j=
    \begin{pmatrix}
    \mathbbm{1}_{K_j}& 0\\ 0 & 0
    \end{pmatrix},
\end{equation}
where $\mathbbm{1}_{K_j}$ is the identity on $\mathcal{K}_j$, we can 
rewrite the conditions (\ref{dupablada}) as
\begin{equation}\label{dupablada1}
    \{\overline{A}_{0}^{(j)},\overline{A}_{1}^{(j)}\}=\Pi_j\{A_0,A_1\}=0.
\end{equation}

After summing \eqref{dupablada1} over all subspaces we obtain
\begin{equation}\label{dupablada5}
    (\Pi_1+\ldots+\Pi_k)\{A_0,A_1\}=0,
\end{equation}
which by using the fact that $\Pi_1+\ldots+\Pi_k$ is a full rank positive semi-definite matrix, implies that $A_0$ and $A_1$ anticommute on $\mathcal{H}$.
%
\end{proof}

\section{Self-testing of GHZ bases, measurements of external parties, and the states prepared by the preparation devices}
\label{AppB}

In this section, we explain the certification of the GHZ basis, measurements of the external parties and the states distributed by the sources.

\subsection{The GHZ basis and measurements $A_x$, $B_y$ and $C_z$ with $x,y,z=0,1$}

Let us first consider the following eight Bell inequalities 
\begin{eqnarray}\label{BE1}
\mathcal{I}_{l}=2(-1)^{l_1}\left\langle(A_0+A_1) B_1 C_1\right\rangle+(-1)^{l_2+l_1}\left\langle(A_0-A_1) B_0\right\rangle+(-1)^{l_3+l_1}\left\langle(A_0-A_1) C_0\right\rangle\leq 4,
\end{eqnarray}
where $l=l_1l_2l_3$ with $l_i\in\{0,1\}$ for each $i=1,2,3$. 
The inequality for $l_1=l_2=l_3=0$ was constructed in \cite{Flavio}, whereas the remaining seven are its variants obtained by making the signs in front of each expectation value depend on the parameters $l_i$. We can now state the following fact which is concerned with Tsirelson's bound of the inequalities in Eq. \eqref{BE1}.

\begin{fakt}The maximal quantum value of the Bell expression $\mathcal{I}_l$ is $4\sqrt{2}$ and it is achieved by the following observables 
\begin{eqnarray}\label{GHZObs}
A_0=\frac{X+Z}{\sqrt{2}},\quad A_1= \frac{X-Z}{\sqrt{2}},\quad B_0=Z,\quad B_1=X,\quad C_0=Z,\quad C_1=X
\end{eqnarray}
as well as the GHZ-like state 
\begin{equation}\label{GHZvecs}
\ket{\phi_l}=\frac{1}{\sqrt{2}}(\ket{l_1l_2l_3}+(-1)^{l_{1}}|\overline{l}_1\overline{l}_2\overline{l}_3\rangle),
\end{equation}
where $l\equiv l_1l_2l_3$ with $l_1,l_2,l_3=0,1$ and $\overline{l}_i$ is the negation of the bit $l_i$, i.e., $\overline{l}_i=1-l_i$. 
\end{fakt}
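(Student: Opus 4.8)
The plan is to prove the Tsirelson bound in two stages: an operator inequality obtained from a sum-of-squares (SOS) decomposition giving $\mathcal{I}_l\le 4\sqrt2$ for \emph{every} quantum realization, followed by a direct evaluation on the GHZ state showing the bound is attained. I would use only that the observables are dichotomic, $A_x^2=B_y^2=C_z^2=\I$, and that operators belonging to distinct parties commute; this keeps the argument dimension-independent, which is what a genuine Tsirelson bound requires.

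First I would pass to the rotated Alice operators $\widetilde{A}_0=(A_0+A_1)/\sqrt2$ and $\widetilde{A}_1=(A_0-A_1)/\sqrt2$, whose only relevant algebraic properties are $\{\widetilde{A}_0,\widetilde{A}_1\}=0$ and the norm relation $\widetilde{A}_0^2+\widetilde{A}_1^2=2\I$. The Bell operator then reads $\mathcal{B}_l=2\sqrt2(-1)^{l_1}\widetilde{A}_0B_1C_1+\sqrt2(-1)^{l_1+l_2}\widetilde{A}_1B_0+\sqrt2(-1)^{l_1+l_3}\widetilde{A}_1C_0$, and I would introduce the shifted Hermitian operators
\begin{equation}
T_1=\widetilde{A}_0-(-1)^{l_1}B_1C_1,\quad T_2=\widetilde{A}_1-(-1)^{l_1+l_2}B_0,\quad T_3=\widetilde{A}_1-(-1)^{l_1+l_3}C_0,
\end{equation}
which are precisely the stabilizer relations expected at the optimum. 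Expanding the squares with $(B_1C_1)^2=B_0^2=C_0^2=\I$ and collecting the Alice terms through $\widetilde{A}_0^2+\widetilde{A}_1^2=2\I$, I would verify the identity
\begin{equation}
4\sqrt2\,\I-\mathcal{B}_l=\sqrt2\,T_1^2+\tfrac{1}{\sqrt2}\left(T_2^2+T_3^2\right),
\end{equation}
where the constant $4\sqrt2$ arises as $2\sqrt2$ from the identity contributions plus $2\sqrt2$ from the $\widetilde{A}$ squares. Since each $T_i$ is Hermitian and the coefficients are positive, the right-hand side is positive semidefinite, so $\langle\mathcal{B}_l\rangle\le 4\sqrt2$ for all states and observables.

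For achievability I would substitute the claimed observables, note that they yield $\widetilde{A}_0=X$ and $\widetilde{A}_1=Z$, and evaluate $\mathcal{B}_l$ on $\ket{\phi_l}$ using the GHZ stabilizer relations $X_AX_BX_C\ket{\phi_l}=(-1)^{l_1}\ket{\phi_l}$, $Z_AZ_B\ket{\phi_l}=(-1)^{l_1+l_2}\ket{\phi_l}$ and $Z_AZ_C\ket{\phi_l}=(-1)^{l_1+l_3}\ket{\phi_l}$. Each signed term then loses its sign, giving $2\sqrt2+\sqrt2+\sqrt2=4\sqrt2$, so the SOS bound is saturated for every $l$.

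The main obstacle is guessing the correct SOS: the asymmetric weights $\sqrt2$ and $1/\sqrt2$ together with the factor of two on the triple-product term, and in particular ensuring that the $\widetilde{A}$-dependent contributions cancel \emph{exactly} via $\widetilde{A}_0^2+\widetilde{A}_1^2=2\I$ rather than merely being bounded termwise. Once the ideal stabilizer relations dictate the shifts $T_i$, the coefficients are forced by matching $\mathcal{B}_l$, so the remaining work is a routine expansion; the sign bookkeeping for general $l$ simply propagates through both the $T_i$ and the stabilizer eigenvalues.
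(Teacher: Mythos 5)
Your proposal is correct and follows essentially the same route as the paper: your shifted operators $T_i$ have exactly the same squares as the paper's $(\I-P_{i})$ terms, so your sum-of-squares identity with weights $\sqrt{2}$ and $1/\sqrt{2}$ is the paper's decomposition verbatim, and the achievability check via the GHZ stabilizer relations matches as well. (Minor remark: the anticommutation $\{\widetilde{A}_0,\widetilde{A}_1\}=0$ you list is automatic for dichotomic observables and is not actually needed; only $\widetilde{A}_0^2+\widetilde{A}_1^2=2\I$ enters the expansion.)
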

\begin{proof}
We follow the results of Ref. \cite{Flavio}. First, to each of the Bell expressions $\mathcal{I}_l$ we associate a Bell operator of the following form
%
\begin{eqnarray}\label{BE1Nop}
\mathcal{\hat{I}}_{l_1l_2l_3}=2(-1)^{l_1}(A_0+A_1)\otimes B_1\otimes C_1+(-1)^{l_2+l_1}(A_0-A_1)\otimes B_0+(-1)^{l_3+l_1}(A_0-A_1)\otimes C_0,
\end{eqnarray}
where $A_x$, $B_y$ and $C_z$ are arbitrary $\pm1$-valued quantum observables of 
arbitrary dimensions. Second, for each $\mathcal{\hat{I}}_l$ we construct the following sum-of-squares (SOS) decomposition,
\begin{eqnarray}\label{SOS1}
   4\sqrt{2}\,\I-\mathcal{\hat{I}}_{l_1l_2l_3}=\sqrt{2}\left(\I-P_{1,l_1}\right)^2+\frac{1}{\sqrt{2}}\left[\left(\I-P_{2,l_1,l_2}\right)^2+\left(\I-P_{3,l_1,l_3}\right)^2\right], 
\end{eqnarray}
where 
\begin{subequations}\label{SOS2}
\begin{equation}\label{SOS2_1}
    P_{1,l_1}=(-1)^{l_1}\frac{A_{0}+A_{1}}{\sqrt{2}} \otimes B_1\otimes C_1,
\end{equation}
\begin{equation}\label{SOS2_2}
    P_{2,l_1,l_2}=(-1)^{l_1+l_2}\frac{A_{0}-A_{1}}{\sqrt{2}}\otimes B_0,
\end{equation}
\begin{equation}\label{SOS2_3}
    P_{3,l_1,l_3}=(-1)^{l_1+l_3}\frac{A_{0}-A_{1}}{\sqrt{2}}\otimes C_0.
\end{equation}
\end{subequations}
It directly follows from Eq. \eqref{SOS1} that 
$4\sqrt{2}\,\mathbbm{1}-\mathcal{\hat{I}}_l\geq 0$ and thus $4\sqrt{2}$ is an upper bound on the maximal quantum value of $\mathcal{I}_{l}$, which means that  $\langle\psi|\mathcal{\hat{I}}_l|\psi\rangle\leq 4\sqrt{2}$ for arbitrary state $\ket{\psi}$. To finally show that the latter inequality is tight
and that $4\sqrt{2}$ is in fact Tsirelson's bound of the inequalities
in Eq. \eqref{BE1} it suffices to observe that $\mathcal{I}_l$ achieves the value $4\sqrt{2}$ for the GHZ-like state $\ket{\phi_l}$ and the observables given in Eq. (\ref{GHZObs}).  
\end{proof}

Crucially, the SOS decomposition in Eq. \eqref{SOS1} implies that any state $\ket{\psi}$ and any observables $A_x$, $B_y$ and $C_z$ that achieve the quantum bound $\beta_Q=4\sqrt{2}$ of $\mathcal{I}_l$ must satisfy the following relations
\begin{eqnarray}\label{relSOS}
   P_{1,l_1}\ket{\psi}=\ket{\psi}\qquad\text{and}\qquad P_{i,l_1,l_i}\ket{\psi}=\ket{\psi}\qquad (i=2,3).
\end{eqnarray}
%
which, by virtue of the relations in Eqs. (\ref{SOS2_1}), (\ref{SOS2_2}) and (\ref{SOS2_3}),
can be rewritten as 
\begin{subequations}\label{SOSrel}
\begin{equation}\label{SOSrel1}
   (-1)^{l_1}\frac{A_{0}+A_{1}}{\sqrt{2}}\otimes B_1\otimes C_1\ket{\psi}=\ket{\psi},
\end{equation}
\begin{equation}\label{SOSrel2}
   (-1)^{l_1+l_2}\frac{A_{0}-A_{1}}{\sqrt{2}}\otimes B_0\ket{\psi}=\ket{\psi},
\end{equation} 
\begin{equation}\label{SOSrel3}
   (-1)^{l_1+l_3}\frac{A_{0}-A_{1}}{\sqrt{2}}\otimes C_0\ket{\psi}=\ket{\psi}.
\end{equation}
\end{subequations}
%
The above relations are used in the proof of Theorem 1 stated in the main text as well as in the preceding subsection.

\subsection{Proof of self-testing}

\setcounter{thm}{0}
\begin{thm}\label{theorem1}
Consider the network scenario outlined in the main text and assume that the observed correlations $\vec{p}$ achieve the maximal quantum value of $\mathcal{I}_l$ in Eq. \eqref{BE1} for each outcome $l$ of Eve's first measurement $E_0$ and that each outcome $l$ occurs with probability $\overline{P}(l|e=0)=1/8$. 
Then,
\begin{itemize}
    \item[(i)] All six Hilbert spaces decompose as $\mathcal{H}_{s}=\mathcal{H}_{s'}\otimes\mathcal{H}_{s''}$,
    and $\mathcal{H}_{\overline{s}}=\mathcal{H}_{\overline{s}'}\otimes\mathcal{H}_{\overline{s}''}$  with $s=A,B,C$, where $\mathcal{H}_{s'}$ and $\mathcal{H}_{\overline{s}'}$ are one-qubit Hilbert spaces.
  
    \item[(ii)] There exist local unitary transformations $U_{s}:\mathcal{H}_{s}\rightarrow\mathcal{H}_{s}$ and $V_{\overline{s}}:\mathcal{H}_{\overline{s}}\rightarrow\mathcal{H}_{\overline{s}}$ 
    such that 
\begin{eqnarray}\label{statest1}
U_{s}\otimes V_{\overline{s}}\ket{\psi_{s\overline{s}}}=|\phi^+_{s'\overline{s}'}\rangle\otimes\ket{\xi_{s''\overline{s}''}}
\end{eqnarray}
for each $s=A,B,C$.
\item[(iii)] Then, Eve's first measurement $E_0=\{R_{l|0}\}$ satisfies
\begin{eqnarray}\label{stmea1}
 (V_{\overline{A}}\otimes V_{\overline{B}}\otimes V_{\overline{C}})\,  R_{l|0}\,(V_{\overline{A}}\otimes V_{\overline{B}}\otimes V_{\overline{C}})^{\dagger} =\proj{\phi_l}_{E'}\otimes\I_{E''},
\end{eqnarray}
where $E=\overline{A}\overline{B}\overline{C}$, $\ket{\phi_l}$ are the GHZ-like states given in Eq. \eqref{GHZvecs} and the measurements of all other parties are given by
\begin{equation}\label{stmea2}
\begin{split}
U_{A}\,A_{0}\,U_{A}^{\dagger}&=\left(\frac{X+Z}{\sqrt{2}}\right)_{A'}\otimes\I_{A''}, \quad U_{A}\,A_{1}\,U_{A}^{\dagger}=\left(\frac{X-Z}{\sqrt{2}}\right)_{A'}\otimes\I_{A''},\\
U_{B}\,B_0\,U_{B}^{\dagger}&=Z_{B'}\otimes\I_{B''},\quad\quad\ \ \ \ \  U_{B}\,B_{1}\,U_{B}^{\dagger}=X_{B'}\otimes\I_{B''},\\
U_{C}\,C_0\,U_{C}^{\dagger}&=Z_{C'}\otimes\I_{C''},\quad\quad\ \ \ \ \  U_{C}\,C_{1}\,U_{C}^{\dagger}=X_{C'}\otimes\I_{C''}.
\end{split}
\end{equation}
\end{itemize}
\end{thm}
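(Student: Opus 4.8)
The plan is to reduce the whole statement to a family of conditional Bell tests and then propagate the resulting rigidity from the external trio $A,B,C$ to the central node $E$. First I would fix an outcome $l$ and let $\sigma^{(l)}$ denote the normalised state of $ABC$ conditioned on Eve registering $l$ in $E_0$; by assumption $\sigma^{(l)}$ attains the value $4\sqrt2$ of $\mathcal I_l$. Inserting $\sigma^{(l)}$ into the SOS decomposition \eqref{SOS1} forces $\mathrm{Tr}[(\I-P_j)^2\sigma^{(l)}]=0$ for each of the three squares, so that the vector identities \eqref{SOSrel1}--\eqref{SOSrel3} hold on $\mathrm{supp}(\sigma^{(l)})$. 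The crucial bookkeeping fact is that $\sum_l\widetilde\rho^{(l)}=\rho_{ABC}=\rho_A\otimes\rho_B\otimes\rho_C$ is full rank, so the supports of the $\sigma^{(l)}$ jointly span $\mathcal H_{ABC}$; hence any \emph{$l$-independent} operator identity holding on each $\mathrm{supp}(\sigma^{(l)})$ in fact holds on all of $\mathcal H_{ABC}$.

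Next I would extract the qubit structure, i.e.\ (i) and the observable part of (iii). Squaring the involution relation \eqref{SOSrel2} removes $B_0$ and yields $\widetilde A_1^{\,2}\otimes\I=\I$ on $\mathrm{supp}(\sigma^{(l)})$, which is exactly $\{A_0,A_1\}=0$ there; being $l$-independent, it holds on $\mathcal H_A$ by the spanning remark. Feeding this back, I would multiply \eqref{SOSrel1} by \eqref{SOSrel2} (resp.\ by \eqref{SOSrel3}) and evaluate the commutator on $\mathrm{supp}(\sigma^{(l)})$: the anticommutation of $\widetilde A_0,\widetilde A_1$ makes the $A$- and $C$-factors drop out and leaves $\{B_0,B_1\}=0$ (resp.\ $\{C_0,C_1\}=0$) on the full-rank local supports. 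Two anticommuting $\pm1$ observables generate a qubit, so $\mathcal H_s=\mathcal H_{s'}\otimes\mathcal H_{s''}$ and there are local unitaries $U_s$ sending $\widetilde A_0,\widetilde A_1\mapsto X_{A'},Z_{A'}$ and $B_0,B_1,C_0,C_1\mapsto Z,X$; undoing the $\widetilde A$ rotation gives $A_0,A_1\mapsto(X\pm Z)/\sqrt2$, which is precisely \eqref{stmea2}.

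For the state and Eve's measurement I would first identify the conditional states. After the swaps, \eqref{SOSrel1}--\eqref{SOSrel3} become the three stabiliser generators $(-1)^{l_1}X_{A'}X_{B'}X_{C'}$, $(-1)^{l_1+l_2}Z_{A'}Z_{B'}$, $(-1)^{l_1+l_3}Z_{A'}Z_{C'}$ acting as $\I$, whose unique common eigenvector in $(\mathbb C^2)^{\otimes3}$ is $\ket{\phi_l}$ of \eqref{GHZvecs}; hence $\sigma^{(l)}=\proj{\phi_l}_{A'B'C'}\otimes\tau^{(l)}_{A''B''C''}$. Summing over $l$ with the weights $\overline P(l|0)=1/8$ collapses $\sum_l\proj{\phi_l}$ to $\I$ and gives $\rho_{A'B'C'}=\tfrac18\I$, so each primed marginal $\rho_{s'}=\tfrac12\I$ is maximally mixed. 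Combining this with the independence of the sources and the fact that Eve steers $A'B'C'$ into the full orthonormal GHZ basis with uniform weights forces each source to be maximally entangled across the swapped cut, i.e.\ $(U_s\otimes U_{\overline s})\ket{\psi_{s\overline s}}=\ket{\phi^+}_{s'\overline s'}\otimes\ket{\xi}_{s''\overline s''}$; this proves (ii) and defines the Eve-side unitaries $U_{\overline s}$.

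Finally I would certify $E_0$. Writing $\widehat R_l=\overline U R_{l|0}\,\overline U^{\dagger}$ on $E'E''$ and transporting the projector $\proj{\phi_l}_{A'B'C'}$ across the three maximally entangled pairs (the transpose trick, using that the GHZ vectors are real), the certainty of the conditional state translates into $\mathrm{Tr}[(\proj{\phi_l}_{E'}\otimes\rho_{\xi,E''})\,\widehat R_l]=1$, where $\rho_{\xi,E''}$ is full rank by the standing full-rank assumption on Eve's local state. Since $0\le\widehat R_l\le\I$, Lemma~\ref{theorem1.1} gives $\widehat R_l\ge\Pi_l:=\proj{\phi_l}_{E'}\otimes\I_{E''}$, and completeness $\sum_l\widehat R_l=\I=\sum_l\Pi_l$ with all summands positive forces $\widehat R_l=\Pi_l$, which is (iii) for Eve. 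The main obstacle is the maximal-entanglement step ending the third paragraph: the functionals $\mathcal I_l$ never involve Eve's observables, so her measurement and the source states are pinned only indirectly, through entanglement swapping, and turning \emph{maximally mixed marginal plus steerability to an orthonormal family of pure GHZ states} into the rigid Bell-pair form (while respecting source independence) is the delicate part on which the transfer to $\widehat R_l$ rests.
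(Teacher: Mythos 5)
Your handling of parts (i), the observable certification \eqref{stmea2}, the identification of the conditional states as $\proj{\phi_l}_{A'B'C'}\otimes\tilde\rho^l_{A''B''C''}$, and the final step for Eve's measurement all track the paper's proof: the SOS relations, the anticommutation argument $\{A_0,A_1\}=\{B_0,B_1\}=\{C_0,C_1\}=0$, and the stabiliser identification of $\ket{\phi_l}$ are essentially identical to part (a) of the paper's proof, and your closing argument (transpose trick, then Lemma~\ref{theorem1.1} to get $\widehat R_l\ge\proj{\phi_l}_{E'}\otimes\I_{E''}$, then completeness $\sum_l\widehat R_l=\I$ to force equality) is a legitimate, and arguably cleaner, packaging of part (c) --- \emph{provided} part (ii) and the unitaries $U_{\overline s}$ are already in hand.

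The genuine gap is exactly where you flag it: the certification of the source states, Eq.~\eqref{statest1}. Your argument for it --- maximally mixed marginals $\rho_{s'}=\I/2$, plus steerability to the orthonormal GHZ family, plus source independence, ``forces'' $(U_s\otimes U_{\overline s})\ket{\psi_{s\overline s}}=\ket{\phi^+}_{s'\overline s'}\otimes\ket{\xi}_{s''\overline s''}$ --- is an assertion, not a proof, and the premise you lean on is far too weak on its own: $\rho_{s'}=\I/2$ is equally consistent with the qubit $s'$ being entangled with $s''$, or with a subsystem of Eve's other than $\overline s'$, or being only classically correlated with $\overline s$. None of these alternatives is excluded by marginals or by hand-waving about steering; what excludes them in the paper is a concrete computation for which your proposal has no counterpart. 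Namely: Schmidt-decompose each $\ket{\tilde\psi_{s\overline s}}$ (Eq.~\eqref{thestate}); define $U_{\overline s}$ so the state becomes $(\I_s\otimes P_{\overline s})\ket{\phi^+_{d_s}}$ with an \emph{unknown} positive matrix $P_{\overline s}$ (Eq.~\eqref{genstate3}); push the steering condition \eqref{cond1} through the transpose trick to get the operator identity \eqref{cond4}; and then --- crucially --- sum over $l$ and use the completeness $\sum_l R_{l|0}=\I$ of Eve's measurement to reach \eqref{dupa7}, whose partial traces pin down $P_s=\I_{s'}\otimes\sqrt{\sigma_{s''}/2}$ (Eq.~\eqref{Pj}). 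It is this last identity, not maximal mixedness, that rigidly places the Bell pair across $s'\overline s'$, and it simultaneously delivers the tensor decomposition $\mathcal H_{\overline s}=\mathcal H_{\overline s'}\otimes\mathcal H_{\overline s''}$ (the Eve side of part (i)) and the unitaries $U_{\overline s}$ that your fourth paragraph presupposes when it writes $\widehat R_l=\overline U R_{l|0}\overline U^\dagger$. A steering-based route (Schr\"odinger--HJW) could likely be made rigorous, but it would still have to invoke the completeness of $E_0$ and the product structure of the three sources in an essential way; as written, your proposal assumes the conclusion at this step, so the proof of the theorem is incomplete.
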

\begin{proof} Before we proceed with the proof we first notice that the post-measurement state that $A$, $B$ and $C$ share after Eve performs her first measurement $E_0$ and obtains the outcome $l$ is given by
\begin{eqnarray}\label{54}
\rho^l_{{ABC}}=\frac{1}{\overline{P}(l)}\Tr_{\overline{ABC}}\left[\left(\I_{ABC}\otimes R_{l|0}\right)\bigotimes_{s=A,B,C}\proj{\psi_{s\overline{s}}}\right].
\end{eqnarray}

We divide the proof into a few steps and the first one is concerned with determining the form of the states $\rho^l_{{ABC}}$ for any $l$ from the observed maximal violations of the inequalities in Eq. \eqref{BE1}. Building on this result we then find the form of the states generated by the sources $\ket{\psi_{s\overline{s}}}$ for any $s=A,B,C$. Finally, using both of these results we obtain the form of the entangled measurement $\{R_{l|0}\}$. For simplicity, in the rest of the proof, we represent $R_{l|0}$ as $R_{l}$.\\

\noindent\textbf{(a) Post-measurement states $\rho^l_{{ABC}}$.} 
To determine the form of the post-measurement states $\rho^l_{{ABC}}$ we exploit the relations given in Eq. \eqref{SOSrel}. First, let us consider a purification of $\rho^l_{{ABC}}$, denoted $|\psi_l\rangle_{ABCE}$, which is a pure state satisfying 
\begin{eqnarray}
   \rho^l_{{ABC}}=\Tr_E\left(|\psi_l\rangle\!\langle\psi_l|_{{ABCE}}\right).
\end{eqnarray}
For simplicity, in what follows we drop the subscript from the above state. 
From the assumption that $\rho^l_{{ABC}}$ maximally violates the Bell inequality in Eq. \eqref{BE1} it follows that the relations in Eq. \eqref{SOSrel} are satisfied by the purification $\ket{\psi_l}$, which after taking into account that $B_y^2=C_z^2=\I$ can be stated as
\begin{subequations}\label{SOSrelp}
\begin{eqnarray}
   && (-1)^{l_1}\frac{A_{0}+A_{1}}{\sqrt{2}} \ket{\psi_l}=B_1\otimes C_1\ket{\psi_l}\label{SOSrel1p}\\
 &&  (-1)^{l_1+l_2}\frac{A_{0}-A_{1}}{\sqrt{2}} \ket{\psi_l}=B_0\ket{\psi_l},\label{SOSrel2p}\\ && (-1)^{l_1+l_3}\frac{A_{0}-A_{1}}{\sqrt{2}} \ket{\psi_l}=C_0\ket{\psi_l}.\label{SOSrel3p}
\end{eqnarray}
\end{subequations}
In the above equations as well as in the following considerations we omit the identities acting on the remaining subsystems, including the $G$ one. 

Let us now denote by $\Pi_l^A$ projection onto the supports of the local reduced density matrix of Alice $\rho_A^l=\Tr_{BCE}\psi_l$. Let us then consider 
Eqs. (\ref{SOSrel1p}) and (\ref{SOSrel2p}) and project them onto $\Pi_l^A$, which results in the following equations
\begin{equation}\label{dupalinder1}
    (-1)^{l_1}\left(\overline{A}_{0}^{(l)}+\overline{A}_{1}^{(l)}\right) \ket{\psi_l}=\sqrt{2}\,B_1\otimes C_1\ket{\psi_l}
\end{equation}
and
\begin{equation}\label{dupalinder2}
   (-1)^{l_1+l_2}\left(\overline{A}_{0}^{(l)}-\overline{A}^{(l)}_{1}\right) \ket{\psi_l}=\sqrt{2}\,B_0\ket{\psi_l}
\end{equation}
for any $l$, where $\overline{A}_{i}^{(l)}=\Pi_l^A \,A_i\,\Pi_l^A$.
By applying then $B_1\otimes C_1$ and $B_0$ to both sides of 
Eq. (\ref{dupalinder1}) and (\ref{dupalinder2}), respectively, 
we arrive at
\begin{eqnarray}\label{SOSrel4p}
  \left(\overline{A}^{(l)}_{0}+\overline{A}^{(l)}_{1}\right)^2 \ket{\psi_l}=2\ket{\psi_l},
\end{eqnarray}
and
\begin{eqnarray}\label{SOSrel4p}
  \left(\overline{A}^{(l)}_{0}-\overline{A}^{(l)}_{1}\right)^2 \ket{\psi_l}=2\ket{\psi_l},
\end{eqnarray}
which directly imply that

%
%
\begin{eqnarray}\label{EXTRA1}
\left(\overline{A}_{0}^{(l)}+\overline{A}_{1}^{(l)}\right)^2=2\,\Pi_l^A,\qquad \left(\overline{A}_{0}^{(l)}-\overline{A}_{1}^{(l)}\right)^2=2\,\Pi_l^A.
\end{eqnarray}
Expanding the above formulas 
we get that 
\begin{equation}
 \left(\overline{A}_0^{(l)}\right)^2=\left(\overline{A}_1^{(l)}\right)^2=\Pi_l^A  
\end{equation}
and
\begin{equation}
  \left\{\overline{A}_{0}^{(l)},\overline{A}_{1}^{(l)}\right\}=0  
\end{equation}
for all $l$. 

Now, since $\sum_l \Pi_l^A$ is a full rank operator on $\mathcal{H}_A$, which is a consequence of the facts that the states $\rho_{ABC}^l$ come from a quantum measurement and that the reduced density matrices of the initial state are full rank, Lemma \ref{lem2} implies that 
\begin{eqnarray}
    \{A_0,A_1\}=0,
\end{eqnarray}
which by employing the standard arguments implies that 
the local Hilbert space $\mathcal{H}_A$ is even dimensional, i.e., 
$\mathcal{H}_A=\mathcal{H}_{A'}\otimes\mathcal{H}_{A''}$, where
$\mathcal{H}_{A'}$ is a qubit Hilbert space and 
there exist a unitary operation $U_{A}:\mathcal{H}_{A}\rightarrow\mathcal{H}_{A}$ such that
%
%
%
\begin{eqnarray}\label{A1}
U_{A}\,A_{0}\,U_{A}^{\dagger}&=&\frac{X+Z}{\sqrt{2}}\otimes\I_{A''}, \qquad U_{A}\,A_{1}\,U_{A}^{\dagger}=\frac{X-Z}{\sqrt{2}}\otimes\I_{A''}.
\end{eqnarray}
%

Let us now move on to characterizing the other parties' observables and use Eq. \eqref{A1} to rewrite the relations in Eqs. \eqref{SOSrelp} as
\begin{subequations}\label{SOSrelp1}
\begin{eqnarray}
  && (-1)^{l_1}\,X_{A'} \ket{\psi_l'}=\overline{B}_1^{(l)}\otimes C_1\ket{\psi_l'},\label{SOSrel5p}\\
 &&   (-1)^{l_1+l_2}\,Z_{A'} \ket{\psi_l'}=\overline{B}_0^{(l)}\ket{\psi_l'},\label{SOSrel6p}\\ && (-1)^{l_1+l_3}Z_{A'}\ket{\psi_l'}=C_0\ket{\psi_l'}.\label{SOSrel7p}
\end{eqnarray}
\end{subequations}
where $\ket{\psi_l'}=U_{A}\ket{\psi_l}$, $\overline{B}_i^{(l)}=\Pi_{l}^B\,B_i\,\Pi_{l}^B$ such that $\Pi_{l}^B$ represents the projector onto the support of $\rho_B^l=\Tr_{ACE}\psi^l$. For simplicity, we also omitted the identity acting on the $A''$ subsystem. Let us then consider the relation in Eq. \eqref{SOSrel5p} and multiply it with $Z_{A'}$. Then using the relation in Eq. \eqref{SOSrel6p} on the right-hand side of the obtained expression, we get
\begin{eqnarray}\label{SOSrel8p}
    (-1)^{l_2}(ZX)_{A'}\ket{\psi_l'}= \overline{B}^{(l)}_1\overline{B}^{(l)}_0\otimes C_1\ket{\psi_l'}
\end{eqnarray}
Then, after multiplying Eq. \eqref{SOSrel6p} with $X_{A'}$ and using Eq. \eqref{SOSrel5p} on the right-hand side of the obtained expression, we get
\begin{eqnarray}\label{SOSrel9p}
    (-1)^{l_2}(XZ)_{A'} \ket{\psi_l'}= (-1)^{l_2}(ZX)_{A'}\ket{\psi_l'}= \overline{B}^{(l)}_0\overline{B}^{(l)}_1\otimes C_1\ket{\psi_l'}.
\end{eqnarray}
Adding Eqs. \eqref{SOSrel8p} and \eqref{SOSrel9p}, and using the fact that $ZX+XZ=0$, we finally arrive at 
\begin{eqnarray}
  \left \{ \overline{B}^{(l)}_0,\overline{B}^{(l)}_1\right\}\otimes C_1\ket{\psi_l'}=0.
\end{eqnarray}
Exploiting the facts that $C_1$ is unitary, we can remove it from the above equation, and further conclude that we get that 
\begin{equation}
    \left\{ \overline{B}^{(l)}_0,\overline{B}^{(l)}_1\right\}=0.
\end{equation}
Also, applying $X_{A'}$ to Eq. \eqref{SOSrel5p} and $Z_{A'}$ to Eq. \eqref{SOSrel6p}, one can easily conclude that $(\overline{B}_i^{(l)})^2=\Pi_l^B$.
Thus, we can again exploit Lemma \ref{lem2} to infer that 
\begin{equation}
    \{B_0,B_1\}=0,
\end{equation}
which together with $B_0^2=B_1^2=\mathbbm{1}_B$ implies that 
$\mathcal{H}_B=(\mathbbm{C}^2)_{B'}\otimes\mathcal{H}_{B''}$ and that there exists a unitary operation $U_{B}:\mathcal{H}_{B}\rightarrow\mathcal{H}_{B}$ for which
\begin{eqnarray}\label{An}
U_{B}\,B_{0}\,U_{B}^{\dagger}=Z_{B'}\otimes\I_{B''},\qquad  U_{B}\,B_{1}\,U_{B}^{\dagger}=X_{B'}\otimes\I_{B''}.
\end{eqnarray}

Proceeding exactly in the same manner, we can conclude that $C_0$ and $C_1$ that appear in Eqs. \eqref{SOSrel5p} and \eqref{SOSrel7p} also anticommute. Thus, as before, $\mathcal{H}_{C}=(\mathbbm{C}^2)_{C'}\otimes\mathcal{H}_{C''}$ and there exists a unitary $U_{C}:\mathcal{H}_{C}\rightarrow\mathcal{H}_{C}$ such that
\begin{eqnarray}\label{An2}
   U_C\,C_{0}\,U_{C}^{\dagger}=Z_{C'}\otimes\I_{C''},\qquad  U_C\,C_{1}\,U_{C}^{\dagger}=X_{C'}\otimes\I_{C''}.
\end{eqnarray}

Let us now characterise the state $\rho^l_{ABC}$. For this purpose, we plug the forms of the observables from Eqs. \eqref{A1}, \eqref{An} and \eqref{An2} into Eqs. \eqref{relSOS} to obtain
\begin{subequations}\label{SOSrelp2}
\begin{eqnarray}
  && (-1)^{l_1} X_{A'}\otimes X_{B'}\otimes X_{C'}\ket{\tilde{\psi}_l}=\ket{\tilde{\psi}_l}, \label{SOSrel10}\\
  && (-1)^{l_1+l_2}Z_{A'}\otimes Z_{B'}\ket{\tilde{\psi}_l}=\ket{\tilde{\psi}_l}, \label{SOSrel11}\\
  &&(-1)^{l_1+l_3}Z_{A'}\otimes Z_{C'}\ket{\tilde{\psi}_l}=\ket{\tilde{\psi}_l},\label{SOSrel12}
\end{eqnarray}
\end{subequations}
where $\ket{\tilde{\psi}_l}=U_A\otimes U_B\otimes U_C\ket{\psi_l}$. As already concluded above each local Hilbert space $\mathcal{H}_{s}$ $(s=A,B,C)$ decomposes as $\mathcal{H}_{s}=(\mathbbm{C}^2)_{s'}\otimes\mathcal{H}_{s''}$. Thus, $\ket{\tilde{\psi}_l}$ can be decomposed as 
\begin{eqnarray}\label{genstate4}
\ket{\tilde{\psi}_l}=\sum_{i_1,i_2,i_3=0,1}\ket{i_1i_2i_3}_{A'B'C'}|\phi^l_{i_1i_2i_3}\rangle_{A''B''C''G},
\end{eqnarray}
where the normalisation factors are included in $|\phi^l_{i_1i_2i_3}\rangle$. Putting the above form of $\ket{\tilde{\psi}_l}$ in Eqs. \eqref{SOSrel11} and \eqref{SOSrel12}, we obtain
\begin{eqnarray}
   (-1)^{l_1+l_2}\sum_{i_1,i_2,i_3=0,1}(-1)^{i_1+i_2}\ket{i_1i_2i_3}|\phi^l_{i_1i_2i_3}\rangle=\sum_{i_1,i_2, i_3=0,1}\ket{i_1i_2 i_3}|\phi^l_{i_1i_2 i_3}\rangle,
\end{eqnarray}
and
\begin{eqnarray}
   (-1)^{l_1+l_3}\sum_{i_1,i_2,i_3=0,1}(-1)^{i_1+i_3}\ket{i_1i_2i_3}|\phi^l_{i_1i_2i_3}\rangle=\sum_{i_1,i_2, i_3=0,1}\ket{i_1i_2 i_3}|\phi^l_{i_1i_2 i_3}\rangle.
\end{eqnarray}
For brevity, we dropped subscripts denoting the subsystems. Projecting both the above formulas on $\bra{i_1i_2i_3}$, we obtain the following relations
\begin{eqnarray}\label{ststate2}
   (-1)^{l_1+l_2}(-1)^{i_1+i_2}|\phi^l_{i_1i_2i_3}\rangle=|\phi^l_{i_1i_2i_3}\rangle \qquad\mbox{and} \qquad (-1)^{l_1+l_3}(-1)^{i_1+i_3}|\phi^l_{i_1i_2i_3}\rangle=|\phi^l_{i_1i_2i_3}\rangle,
\end{eqnarray}
which allow us to conclude that $|\phi^l_{i_1i_2i_3}\rangle=0$ whenever $(l_1+l_{2}+i_1+i_2)$ mod $2=1$ and $(l_1+l_{3}+i_1+i_3)$ mod $2=1$. 
Thus, the state in Eq. \eqref{genstate4} that satisfies the conditions in Eq. \eqref{ststate2} must be of the form
\begin{eqnarray}\label{genstate5}
   \ket{\tilde{\psi}_l}=\ket{l_1l_2 l_3}|\phi_{l_1l_2 l_3}\rangle+|\overline{l}_1\overline{l}_2 \overline{l}_3\rangle|\phi_{\overline{l}_1\overline{l}_2 \overline{l}_3}\rangle
\end{eqnarray}
where $l_i=0,1$ for any $i=1,2,3$ and $\overline{l}_i=1-l_i$. Now, putting this state in Eq. \eqref{SOSrel10}, we obtain the following relation
\begin{eqnarray}
   (-1)^{l_1}|\overline{l}_1\overline{l}_2 \overline{l}_3\rangle\ket{\phi_{l_1l_2l_3}}+ (-1)^{l_1}\ket{l_1l_2 l_3}|\phi_{\overline{l}_1\overline{l}_2\overline{l}_3}\rangle=\ket{l_1l_2 l_3}\ket{\phi_{l_1l_2 l_3}}+|\overline{l}_1\overline{l}_2 \overline{l}_3\rangle|\phi_{\overline{l}_1\overline{l}_2 \overline{l}_3}\rangle,
\end{eqnarray}
which implies that
\begin{eqnarray}
  (-1)^{l_1}\ket{\phi_{l_1l_2 l_3}}=|\phi_{\overline{l}_1\overline{l}_2\overline{l}_3}\rangle.
\end{eqnarray}
Thus, from Eq. \eqref{genstate5} we conclude that the state $\ket{\tilde{\psi}_l}$, by putting the appropriate normalisation constant, is given by
\begin{eqnarray}
   \ket{\tilde{\psi}_l}=\frac{1}{\sqrt{2}}\left[\ket{l_1l_2 l_3}+(-1)^{l_1}|\overline{l}_1\overline{l}_2 \overline{l}_3\rangle\right]_{A'B'C'}\otimes\ket{\phi_{l_1l_2l_3}}_{A''B''C''E}.
\end{eqnarray}
Tracing out the ancillary subsystem $E$, we finally obtain 
\begin{eqnarray}\label{projstate1}
U_{A}\otimes U_B\otimes U_C\,\rho^l_{ABC}\,
(U_{A}\otimes U_B\otimes U_C)^{\dagger}=\proj{\phi_l}_{A'B'C'}\otimes\tilde{\rho}^{l}_{A''B''C''},
\end{eqnarray}
where $\tilde{\rho}^l_{A''B''C''}$ denotes the auxiliary state acting on $\mathcal{H}_{A''}\otimes \mathcal{H}_{B''}\otimes\mathcal{H}_{C''}$. Putting the above relation back into Eq. \eqref{54} and taking the unitaries to the right-hand side, we see that
\begin{eqnarray}\label{cond1}
\proj{\phi_l}_{A'B'C'}\otimes\tilde{\rho}^l_{A''B'' C''}=8\ \Tr_{\overline{ABC}}\left[\left(\I_{ABC}\otimes R_l\right)\bigotimes_{s=A,B,C}\proj{\tilde{\psi}_{s\overline{s}}}\right],
\end{eqnarray}
where we denoted $\ket{\tilde{\psi}_{s\overline{s}}}= U_{s}\ket{\psi_{s\overline{s}}}$ and used the fact that $\overline{P}(l)=1/8$. \\

\noindent\textbf{(b) States $\ket{\psi_{s\overline{s}}}$ 
    generated by the sources.}\label{SGS}
Let us now exploit the relations in Eq. (\ref{cond1}) to determine the form of the the states $\ket{\tilde{\psi}_{s\overline{s}}}$ $(s=A,B,C)$. To this end, we first express them using their Schmidt decompositions, 
\begin{eqnarray}\label{thestate}
\ket{\tilde{\psi}_{s\overline{s}}}=\sum_{i=0}^{d_{s}-1}\alpha_i^s\ket{e_i}_{s}\ket{f_i}_{\overline{s}},
\end{eqnarray}
where $d_{s}$ denotes the dimension of the Hilbert space $\mathcal{H}_{s}$ and $\{\ket{e_i}_{s}\}$ and $\{\ket{f_i}_{\overline{s}}\}$ are some local bases corresponding to the subsystems $s$ and $\overline{s}$, respectively; recall that, as proven above, $d_{s}$ is even for any $s$ as the Hilbert spaces of the external parties decompose as $\mathcal{H}_{s}=\mathbbm{C}^2\otimes\mathcal{H}_{s''}$. Moreover, the Schmidt coefficients satisfy $\alpha_i^s> 0$ and $\sum_i(\alpha_i^s)^2=1$.

Let us now consider a unitary operation $V_{\overline{s}}:\mathcal{H}_{\overline{s}}\rightarrow\mathcal{H}_{\overline{s}}$ such that $V_{\overline{s}}\ket{f_i}_{\overline{s}}=|e_i^*\rangle_{\overline{s}}$ for any $i$, where the asterisk stands for the complex conjugation. By using this unitary we can bring Eq.  \eqref{thestate} to the following form
%
\begin{eqnarray}\label{genstate2}
\ket{\overline{\psi}_{s\overline{s}}}=\I_{s}\otimes V_{\overline{s}}\ket{\tilde{\psi}_{s\overline{s}}}=\sum_{i=0}^{d_{s}-1}\alpha^s_i\ket{e_i}_{s}\ket{e_i^*}_{\overline{s}}.
\end{eqnarray}
Introducing then the following matrix
\begin{eqnarray}
P_{\overline{s}}=\sum_{i=0}^{d_{s}-1}\alpha_i^s\proj{e_i^*}_{\overline{s}},
\end{eqnarray}
we can rewrite the state in Eq. \eqref{genstate2} as
\begin{eqnarray}\label{genstate3}
\ket{\overline{\psi}_{s\overline{s}}}=\sqrt{d_{s}}\ (\I_{s}\otimes P_{\overline{s}})\,|\phi^{+}_{d_{s}}\rangle_{s\overline{s}}\qquad (s=A,B,C),
\end{eqnarray}
where $|\phi^{+}_{d_s}\rangle_{s\overline{s}}$ denotes the maximally entangled state of local dimension $d_{s}$, that is,
\begin{eqnarray}\label{dupa3}
|\phi^{+}_{d_s}\rangle_{s\overline{s}}=\frac{1}{\sqrt{d_{s}}}\sum_{i=0}^{d_{s}-1}\ket{e_i}_{s}\ket{e_i^*}_{\overline{s}}=\frac{1}{\sqrt{d_{s}}}\sum_{i=0}^{d_{s}-1}\ket{i}_{s}\ket{i}_{\overline{s}},
\end{eqnarray}
where $\ket{e_i^*}$ is the complex conjugate of $\ket{e_i}$ and the last equality follows from the fact that the maximally entangled state is invariant under the action of $U\otimes U^{*}$ for any unitary operation $U$. 

Putting now Eq. (\ref{dupa3}) back into Eq. \eqref{cond1}, we conclude that
\begin{eqnarray}\label{cond2}
\proj{\phi_l}_{A'B'C'}\otimes\tilde{\rho}^l_{A''B''C''}=8\Tr_{E}\left[\left(\I_{ABC}\otimes \overline{R}_l\right)\bigotimes_{s}\proj{\phi^{+}_{d_s}}_{s\overline{s}}\right],
\end{eqnarray}
where we denoted 
\begin{eqnarray}\label{rl}
\overline{R}_l=d_{A}d_Bd_C\left(\bigotimes_{s}P_{\overline{s}}\,V_{\overline{s}}\right)\ R_l\ \left(\bigotimes_{s} V_{\overline{s}}^{\dagger}P_{\overline{s}}\right).
\end{eqnarray}
Now, notice that the tensor product of the three maximally entangled states appearing in (\ref{cond2}) can also be understood as a single maximally entangled state across the bipartition $ABC|\overline{A}\overline{B}\overline{C}$ whose local dimension is $d_Ad_Bd_C$, that is,
\begin{eqnarray}\label{45}
|\phi^{+}_{d_{s}}\rangle_{A\overline{A}}\otimes |\phi^{+}_{d_{s}}\rangle_{B\overline{B}}\otimes |\phi^{+}_{d_{s}}\rangle_{C\overline{C}}
%
=|\phi^{+}_{d_{A}d_{B} d_{C}}\rangle_{ABC|\overline{A}\overline{B}\overline{C}},
\end{eqnarray}
\begin{eqnarray}
|\phi^{+}_{d_{A}d_Bd_C}\rangle_{ABC|\overline{ABC}}=\frac{1}{\sqrt{d_{A}d_Bd_C}}\sum_{i=0}^{d_{A}d_Bd_C-1}\ket{i}_{ABC}\ket{i}_{\overline{ABC}}.
\end{eqnarray}

Substituting Eq. \eqref{45} in Eq. \eqref{cond2} and using the well known fact that $\I_A\otimes Q_{B}|\phi^{+}_{D}\rangle_{A|B}=Q^T_A\otimes\I_B|\phi^{+}_{D}\rangle_{A|B}$ for any matrix $Q$, we can rewrite \eqref{cond2} as
\begin{eqnarray}
\proj{\phi_l}_{A'B'C'}\otimes\tilde{\rho}^l_{A''B''C''}=8\Tr_{\overline{ABC}}\left[\left(\overline{R}^T_{l}\right)_{ABC}\otimes\I_{\overline{ABC}}\proj{\phi^{+}_{d_{A}d_Bd_C}}_{ABC|\overline{ABC}}\right],
\end{eqnarray}
which implies that
\begin{eqnarray}\label{cond3}
\proj{\phi_l}_{A'B'C'}\otimes\tilde{\rho}^l_{A''B''C''}=\frac{8}{d_{A}d_Bd_C}\overline{R}^T_{l}.
\end{eqnarray}
Now, applying the transposition map to both sides and then using Eq. \eqref{rl}, we arrive at
\begin{eqnarray}\label{cond4}
\proj{\phi_l}_{A'B'C'}\otimes\left(\tilde{\rho}^{l}_{A''B''C''}\right)^T=8\left(\bigotimes_{s}P_{s}V_{\overline{s}}\right)\ R_l\ \left(\bigotimes_{s}V_{\overline{s}}^{\dagger} P_{s}\right),
\end{eqnarray}
which after summing over $l$ and 
using the fact that $\sum_{l=0}^{7}R_l=\I$ gives 
\begin{eqnarray}\label{dupa7}
\sum_{l=0}^{7}\proj{\phi_l}_{A'B'C'}\otimes\left(\tilde{\rho}^{l}_{A''B''C''}\right)^T=8\,P_{A}^2\otimes P_B^2\otimes P_C^2.
\end{eqnarray}

We can now employ the fact that $\Tr(P_B^2)=\Tr(P^2_C)=1$ to 
take a partial trace of the above expression over the subsystems $BC$ and obtain
\begin{eqnarray}
\frac{\I_{A'}}{2}\otimes\sigma_{A''}=P_{A}^2,
\end{eqnarray}
which directly implies that 
\begin{eqnarray} P_{A}=\I_{A'}\otimes\sqrt{\frac{\sigma_{A''}}{2}},
\end{eqnarray}
where
\begin{eqnarray}
 \sigma_{A''}=\frac{1}{8}\sum_{l=0}^{7}\Tr_{B''C''}\left[\left(\tilde{\rho}^{l}_{A''B''C''}\right)^T\right].
\end{eqnarray}

Analogously we can determine the other matrices $P_s$ with $s=B,C$. Precisely, by taking partial traces of Eq. (\ref{dupa7}) 
over all subsystems except the $s$th one, one obtains
\begin{eqnarray}\label{Pj}
P_s=\I_{s'}\otimes\sqrt{\frac{\sigma_{s''}}{2}}\qquad (s=A,B,C)
\end{eqnarray}
with
\begin{eqnarray}\label{sigma1}
 \sigma_{s''}=\frac{1}{8}\sum_{l=0}^{7}\Tr_{A''B''C'' \setminus\{s''\}}\left[\left(\tilde{\rho}^{l}_{A''B''C''}\right)^T\right],
\end{eqnarray}
where $\Tr_{A''B''C''\setminus\{s''\}}$ represents the partial trace over the systems $A''B''C''$ except the one labelled by $s''$; For instance $\Tr_{A''B''C''\setminus\{A''\}}\equiv\Tr_{B''C''}$. 

Now, we can substitute $P_{s}$ given by Eq. (\ref{Pj}) into Eq. \eqref{genstate3} and also use the fact that $d_{s}=2d_s''$ for some positive integer $d_s''$ to obtain 
\begin{eqnarray}
 \ket{\overline{\psi}_{s\overline{s}}}=\sqrt{d_{s}''}\ \I_{s'\overline{s}'}\otimes \I_{s''}\otimes\sqrt{\sigma_{\overline{s}''}}\,|\phi^{+}_{d_{s}}\rangle_{s\overline{s}}
\end{eqnarray}
for any $s$. Using again the fact that $|\phi^{+}_{d_{s}}\rangle_{s\overline{s}}=|\phi^{+}\rangle_{s'\overline{s}'}|\phi^{+}_{d_{s}''}\rangle_{s''\overline{s}''}$, we finally conclude that 
the states generated by the sources admit the following form
\begin{eqnarray}
  U_{s}\otimes V_{\overline{s}}\ket{\psi_{s\overline{s}}}=\ket{\overline{\psi}_{s\overline{s}}}=|\phi^{+}\rangle_{s'\overline{s}'}\ket{\xi_{s''\overline{s}''}}\qquad (s=A,B,C),
\end{eqnarray}
where the auxiliary state
$\ket{\xi_{s''\overline{s}''}}$ is given by
\begin{eqnarray}\label{junkst1}
  \ket{\xi_{s''\overline{s}''}}= \sqrt{d_s''}\,\I_{s''}\otimes\sqrt{\sigma_{\overline{s}''}}\,|\phi^{+}_{d_{s}''}\rangle_{s''\overline{s}''}.
\end{eqnarray}
\ \\

\noindent\textbf{(c) Entangled measurement  $E_0$.} Let us now characterise the measurement $E_0=\{R_l\}$. For this purpose, we notice 
that the states $\sigma_{s''}$ are invertible because we assumed 
all the reduced density matrices of the joint states to be full rank. 
This implies that the matrices $P_s$ [cf. Eq. (\ref{Pj})] are invertible too and therefore we can act with $P_{s}^{-1}$ on both sides of  Eq. (\ref{cond4})
to bring to the following form

\begin{eqnarray}
  \left(\bigotimes_{s}V_{\overline{s}}\right)\ R_l\ \left(\bigotimes_{s} V_{\overline{s}}^{\dagger}\right)=\proj{\phi_l}_{A'B'C'}\otimes\left(\bigotimes_{s}\sigma^{-1/2}_{s''}\right)\tilde{\rho}^{l,T}_{A''B''C''}\left(\bigotimes_{s}\sigma^{-1/2}_{s''}\right).
\end{eqnarray}
We thus conclude that
\begin{eqnarray}\label{Rl}
 \left(\bigotimes_{s}V_{\overline{s}}\right)\ R_l\ \left(\bigotimes_{s} V_{\overline{s}}^{\dagger}\right)=\proj{\phi_l}_{A'B'C'}\otimes\left(\tilde{R}_{l}\right)_{A''B''C''},
\end{eqnarray}
for all $l=0,\ldots,7$, where $\tilde{R}_{l}$ is defined as
\begin{eqnarray}\label{Rl1}
\tilde{R}_{l}=\left(\bigotimes_{s}\sigma^{-1/2}_{s''}\right)\left(\tilde{\rho}^{l}_{A''B''C''}\right)^T\left(\bigotimes_{s}\sigma^{-1/2}_{s''}\right).
\end{eqnarray}
Notice that $\tilde{R}_l\geq 0$. Moreover, the fact that $R_l\leq\I$ implies via Eq. \eqref{Rl} that $ \tilde{R}_l\leq\I$. Taking then the sum over $l$ on both sides of Eq. \eqref{Rl} and employing the fact that $\sum_lR_l=\I$ allows us to conclude that
\begin{eqnarray}
  \I_{ABC}=\sum_{l=0}^{7}\proj{\phi_l}_{A'B'C'}\otimes\left(\tilde{R}_{l}\right)_{A''B''C''}.
\end{eqnarray}
Now, we can take advantage of the fact that 
the GHZ-like states are mutually orthogonal 
and therefore by projecting the $A'B'C'$
subsystems onto $\ket{\phi_k}$ we 
deduce that for every $k$,
\begin{eqnarray}\label{Rl2}
  \tilde{R}_{k}=\I_{A''B''C''}.
\end{eqnarray}
After plugging the above relation into Eq. \eqref{Rl} we finally conclude that there exist local unitary transformations such that the measurement operators $R_l$ admit the following form
\begin{eqnarray}
 (V_{\overline{A}}\otimes V_{\overline{B}}\otimes V_{\overline{C}})\,  R_{l|0}\,(V_{\overline{A}}\otimes V_{\overline{B}}\otimes V_{\overline{C}})^{\dagger}=\proj{\phi_l}_{A'B'C'}\otimes\I_{A''B''C''}\qquad (l=0,\ldots,7),
\end{eqnarray}
which is exactly what we promised in Eq. (\ref{stmea1}). This completes the proof.
\end{proof}

An interesting consequence of the above theorem is that the states $\tilde{\rho}^{l}_{A''B''C''}$ in Eq. \eqref{projstate1} are separable for all $l$.
\setcounter{thm}{0}
\begin{cor}
Assume that the sources $P_1, P_2, P_3$ generate states that are certified as in Eq. \eqref{statest1} and the measurements of all the parties are certified as in Eqs. \eqref{stmea1} and \eqref{stmea2}. Then, the post-measurement state $\rho^l_{ABC}$ when Eve gets an outcome $l$ is given by
\begin{eqnarray}\label{projstate2}
\left(\bigotimes_{s} U_{s} \right)\rho^l_{{ABC}}\left(\bigotimes_{s} U_{s}^{\dagger} \right)=\overline{\rho}^l_{{ABC}}=\proj{\phi_l}_{A'B'C'}\otimes\tilde{\rho}_{A''}\otimes\tilde{\rho}_{B''}\otimes\tilde{\rho}_{C''}\qquad \forall l.
\end{eqnarray}
\end{cor}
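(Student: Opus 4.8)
The plan is to compute the post-measurement state directly from the certified forms, rather than re-deriving the junk part from scratch. Starting from the expression \eqref{54} for $\rho^l_{ABC}$, I would conjugate both sides by $U=\bigotimes_s U_s$, which acts only on the systems $A,B,C$ and can therefore be pulled inside the partial trace $\Tr_{\overline{ABC}}$. The key move is then to insert the Eve-side unitary $\overline{U}=\bigotimes_s U_{\overline{s}}$: since $\Tr_{\overline{ABC}}$ is invariant under conjugation by any unitary supported on $\overline{ABC}$, one has $\Tr_{\overline{ABC}}[M]=\Tr_{\overline{ABC}}[\overline{U}M\overline{U}^\dagger]$, so I may freely dress the integrand with $\overline{U}$ and $\overline{U}^\dagger$. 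The combined operator $U\otimes\overline{U}$ then acts as a local unitary on each source, and by \eqref{statest1} the source projectors become $\bigotimes_s\proj{\overline{\psi}_{s\overline{s}}}$ with $\ket{\overline{\psi}_{s\overline{s}}}=\ket{\phi^+}_{s'\overline{s}'}\otimes\ket{\xi_{s''\overline{s}''}}$; since $U$ meets only the identity on $ABC$, the measurement element becomes $\I_{ABC}\otimes\big(\proj{\phi_l}_{E'}\otimes\I_{E''}\big)$ by \eqref{stmea1}.

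Next I would split the calculation along the tensor-product structure $s=s'\otimes s''$. The reorganised source state factorises as $\big(\bigotimes_s\proj{\phi^+}_{s'\overline{s}'}\big)\otimes\big(\bigotimes_s\proj{\xi}_{s''\overline{s}''}\big)$, and since the measurement acts as $\proj{\phi_l}$ on $E'=\overline{A}'\overline{B}'\overline{C}'$ and trivially on $E''$, the trace over $E=E'\otimes E''$ splits into a primed and a double-primed factor. For the primed factor I would regard $\bigotimes_s\ket{\phi^+}_{s'\overline{s}'}$ as a single maximally entangled state across $A'B'C'\,|\,E'$ and apply the transpose trick, giving $\bra{\phi_l}_{E'}\big(\bigotimes_s\proj{\phi^+}_{s'\overline{s}'}\big)\ket{\phi_l}_{E'}=\tfrac18\proj{\phi_l^*}_{A'B'C'}$; because the GHZ-like states \eqref{GHZvecs} have real amplitudes, $\ket{\phi_l^*}=\ket{\phi_l}$. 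For the double-primed factor, tracing out $E''$ of $\bigotimes_s\proj{\xi}_{s''\overline{s}''}$ yields the product of reduced states $\tilde{\rho}_{A''}\otimes\tilde{\rho}_{B''}\otimes\tilde{\rho}_{C''}$, where $\tilde{\rho}_{s''}=\Tr_{\overline{s}''}[\proj{\xi_{s''\overline{s}''}}]$. Collecting the two factors and using $\overline{P}(l)=1/8$ to cancel the prefactor produces exactly \eqref{projstate2}.

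The calculation is essentially bookkeeping, so the only genuine care is needed in two places. First, one must correctly reorganise the interleaved registers $A\overline{A}B\overline{B}C\overline{C}$ into the grouped form $A'B'C'\,|\,E'$ and $A''B''C''\,|\,E''$ before applying the entangled-state identity; keeping track of the ordering of the $\overline{s}'$ factors inside $E'$ is what makes the transpose trick land on $\proj{\phi_l^*}$ rather than on a permuted state. Second, the conceptual content of the corollary, namely that the junk part is both a product across $A''B''C''$ \emphalt{and} independent of $l$, is transparent along this route precisely because $\ket{\xi_{s''\overline{s}''}}$ is a fixed feature of the source and the measurement acts trivially on every $\overline{s}''$; I would emphasise this, as it is what upgrades the generic junk state $\tilde{\rho}^l_{A''B''C''}$ of \eqref{projstate1} to the stated product form. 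A final normalisation check, $\Tr[\tilde{\rho}_{s''}]=1$ (which follows from $\ket{\overline{\psi}_{s\overline{s}}}$ and $\ket{\phi^+}$ both being unit vectors), confirms consistency with $\overline{P}(l)=1/8$.
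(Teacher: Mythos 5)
Your proposal is correct, but it proves the corollary by a genuinely different route than the paper does. The paper's own proof is a two-line backward extraction from the internals of the proof of Theorem~\ref{theorem1}: it combines the definition of $\tilde{R}_l$ in Eq.~\eqref{Rl1} with the conclusion $\tilde{R}_l=\I_{A''B''C''}$ of Eq.~\eqref{Rl2}, solves for the junk factor appearing in Eq.~\eqref{projstate1}, and obtains $\tilde{\rho}^l_{A''B''C''}=\bigotimes_s\sigma^T_{s''}$ with $\sigma_{s''}$ given in Eq.~\eqref{sigma1}; the product structure and the $l$-independence are thus read off from intermediate objects constructed mid-proof of the theorem. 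You instead run the computation forward using only the certified objects: conjugate Eq.~\eqref{54} by $U$, insert $\overline{U}$ under the partial trace, substitute the certified state \eqref{statest1} and measurement \eqref{stmea1}, factor the trace along the primed/double-primed split, and finish with the transpose trick on the primed block --- where the reality of the GHZ vectors \eqref{GHZvecs} and the register ordering you flag are indeed exactly the points requiring care. The two routes agree quantitatively: your $\tilde{\rho}_{s''}=\Tr_{\overline{s}''}[\proj{\xi_{s''\overline{s}''}}]$ equals the paper's $\sigma^T_{s''}$ by Eq.~\eqref{junkst1} and one more application of the transpose trick. What the paper's route buys is brevity, at the cost of self-containment: its proof cannot be read without the notation of Theorem~\ref{theorem1}'s proof, and strictly speaking it invokes more than the corollary's stated hypotheses. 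What your route buys is modularity and transparency: it uses exactly the hypotheses as stated, it makes evident why the junk part is a product and independent of $l$ (independent sources, measurement acting trivially on $E''$), and it is essentially the same calculation the paper performs later for the bound-entanglement corollary in Appendix~\ref{AppD}, so both corollaries are unified under one computation. A further small dividend you correctly note: $\overline{P}(l|e=0)=1/8$ need not be imported as a separate assumption, since the unnormalised post-measurement operator you compute has trace $1/8$, so this value is itself a consequence of the certified forms.
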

\begin{proof}
Let us first observe that by combining Eqs. \eqref{Rl1} and \eqref{Rl2} we can write 
\begin{eqnarray}
\left(\bigotimes_{s}\sigma^{-1/2}_{s''}\right)\left(\tilde{\rho}^{l}_{A''B''C''}\right)^T\left(\bigotimes_{s}\sigma^{-1/2}_{s''}\right)=\tilde{R}_{l}=\I_{A''B''C''},
\end{eqnarray}
which after rearranging the terms and taking transposition gives us
\begin{eqnarray}
   \tilde{\rho}^{l}_{A''B'' C''}=\bigotimes_{s}\sigma^T_{s''} \qquad (l=0,\ldots,7).
\end{eqnarray}
Recall that $\sigma_{s''}$ from Eq. \eqref{sigma1} are valid quantum states as they positive semi-definite and satisfy $\Tr(\sigma_{s''})=1$ for any $s=A,B,C$. This completes the proof.
\end{proof}

\subsection{Self-testing of the extra measurements $A_2$, $B_2$ and $C_2$}\label{appendix_extra_measurement}

Now using the above theorem, let us self-test the additional measurements $A_2$, $B_2$ and $C_2$. For this purpose, we consider a functional inspired by the well-known Mermin Bell inequality \cite{Mermin}:
\begin{eqnarray}\label{BE2}
\mathcal{I}_{2}=\left\langle \frac{1}{\sqrt{2}}(A_0+A_1)B_1 C_1-\frac{1}{\sqrt{2}}(A_0+A_1) B_2 C_2-A_2 B_1C_2-A_2B_2 C_1\right\rangle.
\end{eqnarray}
%

\begin{cor}\label{theorem2}
 Assume that the sources $P_1, P_2, P_3$ generate states that are certified as in Eq. \eqref{statest1} and the measurements of all the external parties are certified as in Eqs. \eqref{stmea1} and \eqref{stmea2}. If $\mathcal{I}_2$ achieves the value four for the state shared by $A$, $B$ and $C$ that corresponds to the outcome $l=000$ of Eve's first measurement $E_0$, then the observables $A_2$, $B_2$ and $C_2$ can have two possible forms
\begin{equation}\label{stmea3}
U_{A}\,A_2\,U_{A}^{\dagger}=Y_{A'}\otimes\I_{A''},\quad U_{B}\,B_2\,U_{B}^{\dagger}=Y_{B'}\otimes\I_{B''},\quad
U_{C}\,C_2\,U_{C}^{\dagger}=Y_{C'}\otimes\I_{C''},
\end{equation}
or
\begin{equation}\label{stmea32}
U_{A}\,A_2\,U_{A}^{\dagger}=-Y_{A'}\otimes\I_{A''},\quad U_{B}\,B_2\,U_{B}^{\dagger}=-Y_{B'}\otimes\I_{B''},\quad
U_{C}\,C_2\,U_{C}^{\dagger}=-Y_{C'}\otimes\I_{C''}.
\end{equation}
\end{cor}
\begin{proof}Let us first consider the functional $\mathcal{I}_2$ for the particular observables 
$A_{0}$, $A_1$, $B_1$ and $C_1$ that are given by Eq. \eqref{stmea2} as well as a particular state 
corresponding to the $l=000$ outcome of Eve's first measurement, i.e., $\rho_{ABC}^{000}$,
\begin{eqnarray}\label{BE3}
\mathcal{I}'_{2}&=&\langle X_{A'}\otimes X_{B'}\otimes X_{C'}\otimes\I_{A''B''C''}\rangle- \langle X_{A'}\otimes B'_2\otimes C'_2\otimes\I_{A''}\rangle-\langle A'_2\otimes X_{B'}\otimes C'_2\otimes\I_{B''}\rangle\nonumber\\ 
&&-\langle A'_2\otimes B'_2\otimes X_{C'}\otimes\I_{C''}\rangle_{\overline{\rho}^{000}_{ABC}},
\end{eqnarray}
where $A_2'=U_A\,A_2\,U^{\dagger}_A$ etc., and $\overline{\rho}^{000}_{ABC}$ is a 'rotated version' of $\rho_{ABC}^{000}$ and is given in Eq. \eqref{projstate2}; to simplify the notation from now on we denote $\varrho_0\equiv \overline{\rho}^{000}_{ABC}$.

Now, taking into account the fact that the observables $A'_2$, $B'_2$ and $C'_2$ are unitary it is not difficult to realise that $\mathcal{I}_2'$ attains the value four if and only if the first expectation value in Eq. (\ref{BE3}) equals $1$ whereas the remaining three are $-1$. Again, given that $A'_2$, $B'_2$ and $C'_2$ are unitary this implies that the following conditions are satisfied,
\begin{eqnarray}
 (X_{A'}\otimes X_{B'}\otimes X_{C'}\otimes\I_{A''B''C''})\, \varrho_0&=& \varrho_0,\label{Ystab1}\\
 (X_{A'}\otimes B'_2 \otimes C'_2 \otimes \mathbbm{1}_{A''}) \ \varrho_0 &=&-  \varrho_0,\label{Ystab2}\\
 (A'_2\otimes X_{B'}\otimes C'_2\otimes\I_{B''})\ \varrho_0    &=&-\ \varrho_0,\label{Ystab3}\\
 (A'_2\otimes B'_2\otimes X_{C'}\otimes\I_{C''})\ \varrho_0    &=&-\ \varrho_0.\label{Ystab4}
 \end{eqnarray}
Let us now consider the relation in Eq. \eqref{Ystab2}. After acting on it with $ X_{A'}\otimes X_{B'}\otimes X_{C'}$ and then using Eq. \eqref{Ystab1}, we obtain
\begin{eqnarray}
\left[\I_A\otimes \left( X_{B'}\otimes\I_{B''}\right)B'_2\otimes \left( X_{C'}\otimes\I_{C''}\right)C'_2\right]\ \varrho_0
 &=&-\ \varrho_0.
\end{eqnarray}
Next, after multiplication by $ C'_2\left( X_{C'}\otimes\I_{C''}\right)$, the above relation can be brought to
\begin{eqnarray}\label{Yst1}
\left( X_{B'}\otimes\I_{B''}\right)B'_2\ \varrho_0
 &=&-C'_2\left( X_{C'}\otimes\I_{C''}\right)\ \varrho_0.
\end{eqnarray}
Let us then consider the relation in Eq. \eqref{Ystab1}. After multiplying it with $ X_{A'}\otimes B'_2\otimes C'_2$, using Eq. \eqref{Ystab2}, and then multiplying the resulting relation with $\left( X_{C'}\otimes\I_{C''}\right)C'_2$, we arrive at
\begin{eqnarray}\label{Yst2}
B'_2\left( X_{B'}\otimes\I_{B''}\right) \ \varrho_0
 &=&-\left( X_{C'}\otimes\I_{C''}\right)C'_2\ \varrho_0.
\end{eqnarray}
By adding Eqs. \eqref{Yst1} and \eqref{Yst2}, one obtains
\begin{eqnarray}\label{Yst5}
\{B'_2,\left( X_{B'}\otimes\I_{B''}\right)\}\ \varrho_0=-\{C'_2,\left( X_{C'}\otimes\I_{C''}\right)\}\ \varrho_0.
\end{eqnarray}

Now, we consider Eq. \eqref{Ystab3}. By acting on it with $A'_2\otimes B'_2\otimes X_{C'}$, using Eq. \eqref{Ystab4}, and then multiplying the resulting formula with $ C'_2\left( X_{C'}\otimes\I_{C''}\right)$, we obtain 
\begin{eqnarray}\label{Yst3}
B'_2\left( X_{B'}\otimes\I_{B''}\right)\ \varrho_0
 &=&C'_2\left( X_{C'}\otimes\I_{C''}\right)\ \varrho_0.
\end{eqnarray}
Similarly, we then consider Eq. \eqref{Ystab4}, multiply it with $A'_2\otimes (X_{B'}\otimes\I_{B''})\otimes C'_2$, use Eq. \eqref{Ystab3}, and finally multiply resulting formula with $ (X_{C'}\otimes\I_{C''})C'_2$ to obtain 
\begin{eqnarray}\label{Yst4}
\left( X_{B'}\otimes\I_{B''}\right)B'_2 \ \varrho_0
 &=&\left( X_{C'}\otimes\I_{C''}\right)C'_2\ \varrho_0.
\end{eqnarray}
Adding Eqs. \eqref{Yst3} and \eqref{Yst4}, we get
\begin{eqnarray}\label{Yst6}
\{B'_2,\left( X_{B'}\otimes\I_{B''}\right)\}\ \varrho_0=\{C'_2,\left( X_{C'}\otimes\I_{C''}\right)\}\ \varrho_0
\end{eqnarray}

We have thus obtained two similar relations, \eqref{Yst5} and \eqref{Yst6}, but with the opposite signs. By adding them we thus conclude that 
\begin{eqnarray}
\{B'_2,\left( X_{B'}\otimes\I_{B''}\right)\}\ \varrho_0=0,
\end{eqnarray}
which, by taking into account that all reduced density matrices of $\varrho_0$ are full rank eventually implies that
\begin{eqnarray}
\{B'_2,\left( X_{B'}\otimes\I_{B''}\right)\}=0.
\end{eqnarray}

In the exact same manner, one can use Eqs. (\ref{Ystab1})--(\ref{Ystab4}) to obtain similar relations for the observables $A'_2$, and $C'_2$:
\begin{eqnarray}\label{Ycon1}
\{A'_2,\left( X_{A'}\otimes\I_{A''}\right)\}=0,\qquad\{C'_2,\left( X_{C'}\otimes\I_{C''}\right)\}=0.
\end{eqnarray}

Let us exploit the above anticommutation relations to determine the forms of 
$A'_2$, $B'_2$ and $C'_2$. We begin with $A'_2$. As characterized before, the Hilbert space of Alice is given by $\mathcal{H}_A=\mathbbm{C}^2\otimes\mathcal{H}_{A''}$. Thus, any observable acting on such a Hilbert space can be decomposed as
\begin{eqnarray}
A'_2=\I_{2}\otimes Q_{0}+Z\otimes Q_{1}+X\otimes Q_{2}+Y\otimes Q_{3},
\end{eqnarray}
where for simplicity we have omitted the subscripts $A'$ and $A''$.
Putting it back into Eq. \eqref{Ycon1}, we get that
\begin{eqnarray}
X\otimes Q_{0}+\I\otimes Q_{2}=0,
\end{eqnarray}
which implies that $Q_{0}=Q_{2}=0$, and consequently $A_2$ expresses as
\begin{equation}\label{A2form}
 A'_2=Z\otimes Q_{1}+Y\otimes Q_{3},
\end{equation}
where, due to the fact that $A_2^2=\mathbbm{1}$, the matrices $Q_1$ and $Q_3$ obey 
the following relations
\begin{equation}
    Q_1^2+Q_3^2=\mathbbm{1},\qquad [Q_1,Q_3]=0.
\end{equation}

One can similarly find that 
\begin{equation}
 B'_2=Z\otimes R_{1}+Y\otimes R_{3}  ,\qquad C'_2=Z\otimes S_{1}+Y\otimes S_{3}   
\end{equation}
for some matrices $R_1$, $R_3$, $S_1$ and $S_3$ such that $R_1^2+R_3^2=\mathbbm{1}$
and $[R_1,R_3]=0$, and $S_1^2+S_3^2=\mathbbm{1}$ and $[S_1,S_3]=0$.

Now, after putting these forms of the observables into Eq. \eqref{Ystab2}, we get
\begin{eqnarray}
 [X_{A'}\otimes \left(Z_{B'}\otimes R_{1,B''}+Y_{B'}\otimes R_{3,B''}\right)\otimes \left(Z_{C'}\otimes S_{1,C''}+Y_{C'}\otimes S_{3,C''}\right)\otimes\I_{A''}]\ \varrho_0
 &=&-\ \varrho_0,
\end{eqnarray}
which on expansion and substituting the state $\tilde{\rho}_{ABC}^{000}$ from Eq. \eqref{projstate2} gives
\begin{eqnarray}\label{Yst8}
\left[ X_{A'}Z_{B'}Z_{C'}\otimes \I_{A''}R_{1,B''}S_{1,C''}+X_{A'}Z_{B'}Y_{C'}\otimes \I_{A''}R_{1,B''}S_{3,C''}+X_{A'}Y_{B'}Z_{C'}\otimes \I_{A''}R_{3,B''}S_{1,C''}\right.\nonumber\\\left.+X_{A'}Y_{B'}Y_{C'}\otimes \I_{A''}R_{3,B''}S_{3,C''}\right]\ \proj{\phi_{0}}_{A'B'C'}\otimes\tilde{\rho}_{A''}\tilde{\rho}_{B''}\tilde{\rho}_{C''} 
 =-\proj{\phi_{0}}_{A'B'C'}\otimes\tilde{\rho}_{A''}\tilde{\rho}_{B''}\tilde{\rho}_{C''} ,
\end{eqnarray}
where for simplicity, we are representing the index $000$ as $0$ and the symbol of the tensor products are removed.
Notice that the following relations hold true
\begin{subequations}\label{Yst7}
\begin{eqnarray}
X_{A'}Z_{B'}Z_{C'}\ket{\phi_{000}}_{A'B'C'}&=&\ket{\phi_{011}}_{A'B'C'},\\
X_{A'}Z_{B'}Y_{C'}\ket{\phi_{000}}_{A'B'C'}&=&\mathbbm{i}\ket{\phi_{010}}_{A'B'C'},\\
X_{A'}Y_{B'}Z_{C'}\ket{\phi_{000}}_{A'B'C'}&=&\mathbbm{i}\ket{\phi_{001}}_{A'B'C'},\\
X_{A'}Y_{B'}Y_{C'}\ket{\phi_{000}}_{A'B'C'}&=&-\ket{\phi_{000}}_{A'B'C'}
\end{eqnarray}
\end{subequations}
where $\ket{\phi_l}$ for any $l$ can be found in Eqs. \eqref{GHZvecs}. Using these relations and the condition in Eq. \eqref{Yst8}, we get four different relations
\begin{eqnarray}\label{Yst9}
R_{1,B''}\tilde{\rho}_{B''}\otimes S_{1,C''}\tilde{\rho}_{C''}=0,\qquad R_{1,B''}\tilde{\rho}_{B''}\otimes S_{3,C''}\tilde{\rho}_{C''}=0,\qquad
R_{3,B''}\tilde{\rho}_{B''}\otimes S_{1,C''}\tilde{\rho}_{C''}=0,
\end{eqnarray}
and
\begin{eqnarray}\label{Yst10}
\tilde{\rho}_{A''}\otimes R_{3,B''}\tilde{\rho}_{B''}\otimes S_{3,C''}\tilde{\rho}_{C''}=\tilde{\rho}_{A''}\otimes\tilde{\rho}_{B''}\otimes \tilde{\rho}_{C''}.
\end{eqnarray}
All three reduced density matrices $\tilde{\rho}_{s''}$ $(s=A,B,C)$ are full rank, it follows from Eq. (\ref{Yst10}) that $R_3$ and $S_3$ are non-zero. 
Consequently, the last two relations in Eq. (\ref{Yst9}) imply that 
$S_1=0$ and $R_1=0$. 
Analogously, after plugging $A_2$ as given in Eq. (\ref{A2form}) and $C_2=Y\otimes S_3$ into Eq. (\ref{Ystab3}) we infer that $Q_{1}=0$ and
\begin{eqnarray}\label{Yst12}
 Q_{3,A''}\tilde{\rho}_{A''}\otimes S_{3,C''}\tilde{\rho}_{C''}=\tilde{\rho}_{A''}\otimes \tilde{\rho}_{C''}.
\end{eqnarray}
Thus, we obtain that
\begin{eqnarray}
A'_2=Y\otimes Q,\qquad B'_2=Y\otimes R,\qquad C'_2=Y\otimes S,
\end{eqnarray}
where $Q$, $R$ and $S$ are some hermitian matrices such that $Q^2=\mathbbm{1}$, 
$R^2=\mathbbm{1}$ and $S^2=\mathbbm{1}$, which makes them also unitary; for simplicity we dropped the subscripts from them. 

Let us now exploit the fact that $Q$, $R$ and $S$ are hermitian and square to the identity to decompose them as 
\begin{equation*}
Q=Q_{+}-Q_-, \qquad R=R_{+}-R_-,\qquad S=S_+-S_- ,   
\end{equation*}
where $Q_{\pm}$, $R_{\pm}$ and $S_{\pm}$ are projectors onto the
eigenspaces of $A'_2$, $B'_2$ and $C'_2$ corresponding to eigenvalues $\pm1$. 
Now using the fact that $Q^2=Q_++Q_-=\mathbb{1}$ and $S^2=S_++S_-=\mathbb{1}$ and tracing the $A''$ out, we obtain from Eqs. \eqref{Yst10} that
 \begin{eqnarray}\label{Yst101}
2(R_{+}\tilde{\rho}_{B''})\otimes (S_{+}\tilde{\rho}_{C''})=(R_+\tilde{\rho}_{B''})\otimes\tilde{\rho}_{C''}+
\tilde{\rho}_{B''}\otimes(S_+\tilde{\rho}_{C''}).
\end{eqnarray}   
One concludes from the above relation that $(R_+\tilde{\rho}_{B''})\otimes(S_-\tilde{\rho}_{C''})=0$ and $(R_-\tilde{\rho}_{B''})\otimes(S_+\tilde{\rho}_{C''})=0$, which by taking into 
account the fact that both $\tilde{\rho}_{B''}$ and $\tilde{\rho}_{C''}$ are 
full rank, implies that either $R_+=0$ and $S_+=0$ or $R_-=0$ and $S_-=0$.
In the same spirit, one can exploit the second relation in Eq. \eqref{Yst12} to 
conclude that either $Q_+=0$ and $S_+=0$ or $Q_-=0$ and $S_-=0$. Taking into account all the possibilities listed above, one deduces that either 
$Q_-=R_-=S_-=0$ (in which case $Q_+=\mathbbm{1}_{A''}$, $R_+=\mathbbm{1}_{B''}$ and $S_+=\mathbbm{1}_{C''}$) or $Q_+=R_+=S_+=0$ (in which case $Q_-=\mathbbm{1}_{A''}$, $R_-=\mathbbm{1}_{B''}$ and $S_-=\mathbbm{1}_{C''}$), which directly leads us to 
two possible forms that the observables $A'_2$, $B'_2$ and $C'_2$ can take:
\begin{eqnarray}
A_2=Y_{A'}\otimes\I_{A''},\qquad B_2=Y_{B'}\otimes\I_{B''},\qquad C_2=Y_{C'}\otimes\I_{C''}
\end{eqnarray} 
or
\begin{eqnarray}
A_2=-Y_{A'}\otimes\I_{A''},\qquad B_2=-Y_{B'}\otimes\I_{B''},\qquad C_2=-Y_{C'}\otimes\I_{C''},
\end{eqnarray} 
from which one recovers Eqs. (\ref{stmea3}) and (\ref{stmea32}). This completes the proof.

\end{proof}

\section{Self-testing the three-qubit NLWE basis}
\label{AppC}

In this section, we show that using the certified states in Eq. \eqref{statest1} generated by the sources $P_i$ $(i=1,2,3)$ and measurements in Eqs. \eqref{stmea1} and \eqref{stmea2} along with some additional statistics, one can self-test the measurement corresponding to the input $e=1$ with the central party to be NLWE basis given below in Eq. \eqref{NLWEbasis} upto some additional degrees of freedom. 

Before proceeding, let us denote the eigenvectors of $(X+Z)/\sqrt{2}$ as
\begin{eqnarray}\label{X+Z}
\ket{\overline{0}}=\cos(\pi/8)\ket{0}+\sin(\pi/8)\ket{1},\qquad \ket{\overline{1}}=-\sin(\pi/8)\ket{0}+\cos(\pi/8)\ket{1},
\end{eqnarray}
and the eigenvectors of
$(X-Z)/\sqrt{2}$ as 
\begin{eqnarray}\label{X-Z}
\ket{\overline{+}}=\sin(\pi/8)\ket{0}+\cos(\pi/8)\ket{1},\qquad \ket{\overline{-}}=\cos(\pi/8)\ket{0}-\sin(\pi/8)\ket{1}.
\end{eqnarray}
Recall also that the NLWE measurement $M_{\mathrm{NLWE}}=\{\proj{\delta_l}\}$ is defined via the following fully product vectors 
\begin{equation}\label{NLWEbasis}
\begin{split}
\ket{\delta_0}&=\ket{\overline{0}}\ket{1}\ket{+},\qquad \ket{\delta_1}=\ket{\overline{0}}\ket{1}\ket{-},\\
\ket{\delta_2}&=\ket{\overline{+}}\ket{0}\ket{1},\qquad \ket{\delta_3}=\ket{\overline{-}}\ket{0}\ket{1},\\
\ket{\delta_4}&=\ket{\overline{1}}\ket{+}\ket{0},\qquad \ket{\delta_5}=\ket{\overline{1}}\ket{-}\ket{0},\\
\ket{\delta_6}&=\ket{\overline{0}}\ket{0}\ket{0},\qquad\  \ket{\delta_7}=\ket{\overline{1}}\ket{1}\ket{1}.
\end{split}
\end{equation}
Notice that the above vectors are equivalent to the standard NLWE vectors introduced in Ref. \cite{Bennett99} up to a local unitary transformation applied to the first qubit.

To certify the second Eve's measurement $E_1$, the correlations observed by the parties $\{p(a,b,c,l|x,y,z,1)\}$ must satisfy the following conditions,
\begin{equation}\label{NLWEstat}
\begin{split}
p(0,1,0,0|0,0,1,1)=\frac{1}{8},\qquad p(0,1,1,1|0,0,1,1)=\frac{1}{8},\\
p(0,0,1,2|1,0,0,1)=\frac{1}{8},\qquad p(1,0,1,3|1,0,0,1)=\frac{1}{8},\\
p(1,0,0,4|0,1,0,1)=\frac{1}{8},\qquad p(1,1,0,5|0,1,0,1)=\frac{1}{8},\\
p(0,0,0,6|0,0,0,1)=\frac{1}{8},\qquad p(1,1,1,7|0,0,0,1)=\frac{1}{8}.
\end{split}
\end{equation}
Notice that the above distribution can be realized if the sources $P_i$ $(i=1,2,3)$ generate the maximally entangled state of two qubits $\ket{\phi^+}=(1/\sqrt{2})(\ket{00}+\ket{11})$ and the measurement $E_1$ is exactly $M_{\mathrm{NLWE}}=\{\proj{\delta_l}\}$, whereas the external parties perform the following measurements
\begin{eqnarray}
A_0=\frac{X+Z}{\sqrt{2}},\quad A_1= \frac{X-Z}{\sqrt{2}},\quad B_0=Z,\quad B_1=X,\quad C_0=Z,\quad C_1=X.
\end{eqnarray}

Next, we show that the above probabilities along with certification of the states and measurements presented in Theorem \ref{theorem1} are sufficient to fully characterize the unknown measurement $E_{1}=\{R_{l|1}\}$ where $R_l$ denotes the measurement element corresponding to outcome $l$. The theorem stated below is labeled as Theorem 2 in the manuscript.

\setcounter{thm}{1}
\begin{thm}\label{theorem3}
Assume that the correlations $\vec{p}$ generated in the
network satisfy the assumptions of Theorem \ref{theorem1} as well as the conditions in Eq. (\ref{NLWEstat}). Then, for any $l$ it holds that 
\begin{equation}\label{stmeaNLWE1}
 (V_{\overline{A}}\otimes V_{\overline{B}}\otimes V_{\overline{C}})\,  R_{l|1}\,(V_{\overline{A}}\otimes V_{\overline{B}}\otimes V_{\overline{C}})^{\dagger} =|\delta_l \rangle
\!\langle\delta_l |_{E'} \otimes \mathbbm{1}_{E''},   
\end{equation}
where $V_{\overline{s}}$ are the same unitary as in Theorem \ref{theorem1}
and $E=\overline{A}\overline{B}\overline{C}$.
\end{thm}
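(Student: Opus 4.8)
The plan is to carry the whole analysis into the ``rotated'' frame supplied by Theorem~\ref{theorem1}. Since each probability is a trace that is invariant under the local unitaries, I may insert $U_s^{\dagger}U_s$ and $U_{\overline s}^{\dagger}U_{\overline s}$ and replace the shared states by $\bigotimes_s\proj{\phi^+}_{s'\overline s'}\otimes\proj{\xi}_{s''\overline s''}$, the external observables by their certified forms $s_i'\otimes\I_{s''}$, and Eve's second measurement by $\tilde R_l:=\overline U\,R_{l|1}\,\overline U^{\dagger}$, where $\overline U:=U_{\overline A}\otimes U_{\overline B}\otimes U_{\overline C}$ is the unitary $U$ of Theorem~\ref{theorem1}. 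The $\tilde R_l$ still form a measurement, so $0\le\tilde R_l\le\I$ and $\sum_l\tilde R_l=\I_E$, and the goal reduces to showing $\tilde R_l=\proj{\delta_l}_{E'}\otimes\I_{E''}$ for every $l$.

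First I would convert each of the eight conditions in Eq.~\eqref{NLWEstat} into a single overlap condition on $\tilde R_l$. Reading off the certified eigenstates of $A_x',B_y',C_z'$ from Eq.~\eqref{stmea2}, the external outcomes $(a,b,c)$ and settings $(x,y,z)$ appearing in the $l$-th line of Eq.~\eqref{NLWEstat} project the primed qubits $A'B'C'$ onto exactly the three single-qubit factors of $\ket{\delta_l}$. Using the maximally-entangled identity $\Tr_{s'}\big[(\proj{\alpha}_{s'}\otimes\I_{\overline s'})\proj{\phi^+}_{s'\overline s'}\big]=\tfrac{1}{2}\proj{\alpha^*}_{\overline s'}$ together with the fact that every factor of every $\ket{\delta_l}$ in Eq.~\eqref{NLWEbasis} has real amplitudes (so $\ket{\alpha^*}=\ket{\alpha}$), the three external projections collapse $E'$ onto $\proj{\delta_l}_{E'}$ with weight $(1/2)^3$, while the identity action on the junk legs leaves $E''$ in the full-rank reduced state $\rho_{E''}:=\bigotimes_s\tilde\rho_{\overline s''}$. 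Each condition then reads
\begin{equation}
\tfrac{1}{8}\,\Tr\big[\big(\proj{\delta_l}_{E'}\otimes\rho_{E''}\big)\,\tilde R_l\big]=\tfrac{1}{8},
\end{equation}
equivalently $\Tr\big[(\proj{\delta_l}_{E'}\otimes\rho_{E''})\tilde R_l\big]=1$.

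Next I would invoke Lemma~\ref{theorem1.1}. Taking $M=\tilde R_l$ and the density matrix $\rho=\proj{\delta_l}_{E'}\otimes\rho_{E''}$, which is full-rank on its support $\ket{\delta_l}_{E'}\otimes\mathcal H_{E''}$ by the standing full-rank assumption on the local states, the lemma forces $\tilde R_l$ to act as the identity on that support, i.e. $\tilde R_l\ge\proj{\delta_l}_{E'}\otimes\I_{E''}$. I would then close the argument by completeness: the NLWE vectors form an orthonormal basis of $E'=(\mathbbm{C}^2)^{\otimes3}$, so $\sum_l\proj{\delta_l}_{E'}\otimes\I_{E''}=\I_E=\sum_l\tilde R_l$; summing the eight operator inequalities $\tilde R_l-\proj{\delta_l}_{E'}\otimes\I_{E''}\ge0$ yields $0$, whence each nonnegative summand vanishes and $\tilde R_l=\proj{\delta_l}_{E'}\otimes\I_{E''}$, which is the claim.

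The routine but error-prone part is the bookkeeping of the second step: matching each outcome string and setting in Eq.~\eqref{NLWEstat} to the correct eigenstate of the certified observables and correctly tracking the complex conjugation induced by the maximally entangled states. Here it is essential that the chosen $\ket{\delta_l}$ have real amplitudes, so that the state appearing on Eve's side is $\ket{\delta_l}$ itself rather than its conjugate. By contrast, the conceptual content is light: a single overlap value $1/8$ per outcome certifies only the one-sided bound $\tilde R_l\ge\proj{\delta_l}_{E'}\otimes\I_{E''}$, and the equality is recovered for free from orthonormality of the basis and completeness of the measurement, which is precisely what makes the scheme economical in the number of statistics it requires.
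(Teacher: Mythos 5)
Your proposal is correct and follows essentially the same route as the paper's proof: each condition in Eq.~\eqref{NLWEstat}, combined with the states and observables certified in Theorem~\ref{theorem1}, is converted into a unit-overlap condition $\Tr[(\proj{\delta_l}_{E'}\otimes\rho_{E''})\tilde{R}_l]=1$ with $\rho_{E''}$ full rank, Lemma~\ref{theorem1.1} then forces $\tilde{R}_l\geq\proj{\delta_l}_{E'}\otimes\I_{E''}$, and completeness $\sum_l\tilde{R}_l=\I$ together with orthonormality of the $\ket{\delta_l}$ and positivity yields equality. The only differences are cosmetic bookkeeping: the paper transfers Eve's operators to the external side via the transpose trick and kills the off-diagonal blocks of $\overline{R}_l^T$ using $(\overline{R}_l^T)^2\leq\overline{R}_l^T\leq\I$, whereas you trace out the external systems directly and deduce the operator inequality from $\tilde{R}_l\leq\I$ acting as the identity on the support, which is the same argument in different packaging.
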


\begin{proof}
For simplicity, we represent $R_{l|1}$ as $R_{l}$ throughout the proof. Let us first consider the first relation in Eq. (\ref{NLWEstat}), that is, 
\begin{eqnarray}
p(0,1,0,0|0,0,1,1)=\frac{1}{8}
\end{eqnarray}
and then expand it by using the fact that the observables $A_i,B_i,C_i$ for $i=0,1$ are certified as in Eq. \eqref{stmea2}. This gives us
\begin{eqnarray}\label{stcond1NLWE}
\left\langle\left(\bigotimes_{s=A,B,C} U_{s}^{\dagger}\right) \left(\proj{\overline{0}}_{A'}\otimes\proj{1}_{B'}\otimes\proj{+}_{C'}\otimes\I_{A''B''C''}\right)\left(\bigotimes_{s=A,B,C} U_{s}\right)\otimes R_0\right\rangle_{\psi_{ABCE}}=\frac{1}{8}.
\end{eqnarray}
Let us then use Theorem \ref{theorem1} to represent the global state $\ket{\psi_{ABCE}}$ as [cf. Eq. \eqref{statest1}] 
\begin{eqnarray}\label{state1}
\ket{\psi_{ABCE}}=\bigotimes_s\ket{\psi_{s\overline{s}}}=\bigotimes_sU_s^{\dagger}\otimes V_{\overline{s}}^{\dagger}|\phi^+_{s'\overline{s}'}\rangle\otimes\ket{\xi_{s''\overline{s}''}}.
\end{eqnarray}
Notice also that by virtue of Eq. \eqref{junkst1} the junk states $\ket{\xi_{s''\overline{s}''}}$ can be represented as
\begin{eqnarray}\label{junkst2}
 \ket{\xi_{s''\overline{s}''}}= (\I_{s''}\otimes P_{\overline{s}''} )|\phi^{+}_{d_{s}''}\rangle_{s''\overline{s}''},
\end{eqnarray}
where $P_{\overline{s}''}=\sqrt{d_s''\sigma_{\overline{s}''}}$.
The joint state in Eq. \eqref{state1} can be further written as [c.f. Eq. \eqref{45}],
\begin{eqnarray}\label{115}
\left(\bigotimes_{s=A,B,C}U_s\otimes V_{\overline{s}}\right)|\psi_{ABC|E}\rangle=\left(P_{\overline{A}''}\otimes P_{\overline{B}''}\otimes P_{\overline{C}''}\right)|\phi^{+}_{8d_A''d_B''d_C''}\rangle_{ABC|E}
\end{eqnarray}
where the bipartition is between the subsystem $ABC$ and $E\equiv\overline{A}\overline{B}\overline{C}$ and the local dimension of the state is $8d_A''d_B''d_C''$. After plugging this state into the Eq. \eqref{stcond1NLWE}, and then using the fact that $(\I\otimes Q)|\phi^{+}_{d}\rangle=(Q^T\otimes\I)|\phi^{+}_{d}\rangle$ for any matrix $Q$, we get that
\begin{eqnarray}\label{116}
\left\langle\left(\proj{\overline{0}}\otimes\proj{1}\otimes\proj{+}\otimes \left(P_{A''}^T\right)^2\otimes \left(P_{B''}^T\right)^2\otimes \left(P_{C''}^T\right)^2\right)\overline{R}_0^T\otimes\I_{E}\right\rangle_{|\phi^{+}_{d_Ad_Bd_C}\rangle_{ABC|E}}=\frac{1}{8},
\end{eqnarray}
where 
\begin{eqnarray}\label{R0eq}
\overline{R}_0=\left(\bigotimes_{s=A,B,C} V_{\overline{s}} \right) R_0\left(\bigotimes_{s=A,B,C} V_{\overline{s}}^{\dagger}\right).
\end{eqnarray}
Expanding the above term, we arrive at 
\begin{eqnarray}
\frac{1}{d_A''d_B''d_C''}\Tr\left[\left(\proj{\overline{0}}\otimes\proj{1}\otimes\proj{+}\otimes \left(P_{A''}^T\right)^2\otimes \left(P_{B''}^T\right)^2\otimes \left(P_{C''}^T\right)^2\right)\overline{R}_0^T\right]=1.
\end{eqnarray}
Then using the fact that $P_{\overline{s}''}=\sqrt{d_s''\sigma_{\overline{s}''}}$ for any $s$, we obtain
\begin{eqnarray}\label{nlwest1}
\Tr\left[\left(\proj{\overline{0}}\otimes\proj{1}\otimes\proj{+}\otimes \sigma^T_{A''}\otimes \sigma^T_{B''}\otimes \sigma^T_{C''}\right)\overline{R}_0^T\right]=1.
\end{eqnarray}
Recall that one can characterize measurements only on the support of the state or equivalently the states $\sigma_{s''}$ are full-rank. Since $\overline{R}_0^T$ acts on the Hilbert space $\mathbbm{C}^8\otimes\mathcal{H}_{{A''B''C''}}$, we can express it using the basis given in Eq. \eqref{NLWEbasis} as
\begin{eqnarray}
\overline{R}_0^T=\sum_{l,l'=0}^7\ket{\delta_l}\!\!\bra{\delta_{l'}}\otimes \tilde{R}_{l,l'}
\end{eqnarray}
where $\tilde{R}_{l,l'}$ act on $\mathcal{H}_{{A''B''C''}}$, and, moreover, 
$\tilde{R}_{l,l'}$ are such that $0\leq \tilde{R}_{l,l'}\leq\mathbbm{1}$.
Then, using the cyclic property of trace and Lemma \ref{theorem1.1} we conclude from Eq. \eqref{nlwest1} that $\tilde{R}_{0,0}=\I$ and thus we finally get that
\begin{eqnarray}\label{120}
\overline{R}_0^T&=&\proj{\overline{0}}\otimes\proj{1}\otimes\proj{+}\otimes\I_{A''B''C''}+\mathbbm{L}_0\nonumber\\
&=&\proj{\delta_0}_{A'B'C'}\otimes\I_{A''B''C''}+\mathbbm{L}_0,
\end{eqnarray}
where $\mathbbm{L}_0$ stands for an operator given by
\begin{equation}\label{L0}
\mathbbm{L}_0=\sum_{\substack{l,l'=0\\l\ne0|l'=0\ l'\ne0|l=0\ }}^7\ket{\delta_l}\!\!\bra{\delta_{l'}}\otimes \tilde{R}_{l,l'},
\end{equation}
where $l\ne0|l'=0$ denotes that $l\ne 0$ if $l'=0$. 
Let us now show that $\mathbbm{L}_0$ is positive semi-definite. For this purpose, we consider the state $\ket{\delta_0}\otimes\ket{\xi} $, where $\ket{\xi}$ is an arbitrary state from $\mathcal{H}_{A''B''C''}$, and act on this state with $\overline{R}_0^T$. Taking into account Eqs. \eqref{120} and \eqref{L0}, we obtain
\begin{equation}\label{Lo1}
\overline{R}_0^T\ket{\delta_0}\ket{\xi}=\ket{\delta_0}\ket{\xi}+\sum_{\substack{l=1}}^7\ket{\delta_l}\tilde{R}_{l,0}\ket{\xi}.
\end{equation}
Now, multiplying the above formula in Eq. \eqref{Lo1} with its complex conjugate we obtain that
\begin{eqnarray}
\bra{\delta_0}\bra{\xi}\left(\overline{R}_0^T\right)^2\ket{\delta_0}\ket{\xi}=1+\sum_{\substack{l=1}}^7\bra{\xi}\tilde{R}^{\dagger}_{l,0}\tilde{R}_{l,0}\ket{\xi}
\end{eqnarray}
which after using the fact that $\left(\overline{R}_0^T\right)^2\leq\overline{R}_0^T\leq\I$, gives us
\begin{equation}
\sum_{\substack{l=1}}^7\bra{\xi}\tilde{R}^{\dagger}_{l,0}\tilde{R}_{l,0}\ket{\xi}\leq0.
\end{equation}
As $\tilde{R}^{\dagger}_{l,0}\tilde{R}_{l,0}\geq 0$,  it follows that $\bra{\xi}\tilde{R}^{\dagger}_{l,0}\tilde{R}_{l,0}\ket{\xi}=0$ for any $\ket{\xi}\in\mathcal{H}_{{A''B''C''}}$, and consequently $\tilde{R}_{l,0}=0$ for any $l=1,\ldots,7$. Since $\mathbbm{L}_0$ is hermitian, the above implies
also that $\tilde{R}_{0,l}=0$ for $l=1,\ldots,7$, and hence
\begin{eqnarray}
\mathbbm{L}_0=\sum_{\substack{l,l'=1}}^7\ket{\delta_l}\!\!\bra{\delta_{l'}}\otimes \tilde{R}_{l,l'}.
\end{eqnarray}
This means that $\overline{R}_0^T$ can now be expressed in the block form as
\begin{eqnarray}\label{upR0}
\overline{R}_0^T=\proj{\delta_0}\otimes\mathbbm{1}_{A''B''C''}+\mathbbm{L}_0=
\begin{pmatrix}
1 & 0\\
0 & \mathbbm{L}_0
\end{pmatrix},
\end{eqnarray}
where both components act on orthogonal subspaces of the three-qubit Hilbert space.
Owing to the fact that $\overline{R}_0^T\geq 0$, we thus obtain $\mathbbm{L}_0\geq 0$.

A similar analysis using all the other probabilities in Eq. \eqref{NLWEstat} can be done for the other operators $\overline{R}_l^T$, and thus we arrive at
\begin{eqnarray}\label{stcond2NLWE}
\overline{R}_l^T=\proj{\delta_l}\otimes\I_{A''B''C''}+\mathbbm{L}_l\qquad (l=0,\ldots,7),
\end{eqnarray}
where $\mathbbm{L}_l$ is a positive semi-definite operator that with respect to $A'B'C'$ subsystem is defined a subspace of $\mathbbm{C}^8$ which is orthogonal to $\ket{\delta_l}$. 

Now, the fact that $\sum_l\overline{R}_l^T=\mathbbm{1}$ implies that
\begin{eqnarray}
\sum_l\mathbbm{L}_l=0.
\end{eqnarray}
As each $\mathbbm{L}_l$ is positive semi-definite, the only way the above condition is satisfied is that $\mathbbm{L}_l=0$ for any $l$. Thus, we finally obtain from Eq. \eqref{stcond2NLWE} that
\begin{eqnarray}
\overline{R}_l=\proj{\delta_l}\otimes\I,
\end{eqnarray}
which by taking into account Eq. (\ref{R0eq})
gives the desired result
form Eq. \eqref{stmeaNLWE1}, completing the proof.
\end{proof}

\section{Self-testing the measurement constructed from the UPB}
\label{AppD}

Let us finally provide the proof of Theorem 4 stated in the main text. 
To this end, we recall that the measurement constructed from the UPB
is defined as $M_{\mathrm{UPB}}=\{\proj{\tau_0},\proj{\tau_1},\proj{\tau_2},\proj{\tau_3},\Gamma\}$, where
\begin{equation}\label{UPB}
\begin{split}
\ket{\tau_0}&=\ket{\overline{0}}\ket{1}\ket{+},\qquad \ket{\tau_1}=\ket{\overline{+}}\ket{0}\ket{1}\\
\ket{\tau_2}&=\ket{\overline{1}}\ket{+}\ket{0},\qquad \ket{\tau_3}=\ket{\overline{-}}\ket{-}\ket{-},
\end{split}
\end{equation}
and
\begin{equation}
    \Gamma=\I-\sum_{i=0}^3\proj{\tau_i}.
\end{equation}
Notice that $\ket{\tau_i}$ form a four-element UPB which is obtained from the UPB introduced in \cite{Bennett99-1} by applying a local unitary to Alice's qubit. 

Now, to self-test the five-outcome measurement $E_2$, the correlations observed
by the parties $\{p(a,b,c,l|x,y,z,2)\}$ must satisfy 
%
%
\begin{equation}\label{betstat0}
\begin{split}
p(0100|0012)=\frac{1}{8},\qquad p(0011|1002)=\frac{1}{8},\\
p(1002|0102)=\frac{1}{8},\qquad p(1113|1112)=\frac{1}{8},
\end{split}
\end{equation}
as well as the following conditions that are expressed in the correlation picture:
%
\begin{subequations}\label{betstatfull}
\begin{eqnarray}
\langle\left(\I+A_0B_0+B_0C_0+A_0C_0+A_1B_1C_1-A_1B_2C_2-A_2B_1C_2-A_2B_2C_1\right)R_4\rangle&=&1\label{betstat1}\\
\left\langle(\I+ A_0)\left[-B_1C_1+B_2C_2+B_1(\I+C_0)-\frac{1}{2}(\I-B_0)C_1+(\I+B_0)C_0+\frac{1}{2}B_0+\frac{3}{2}\I\right] R_4\right\rangle&=&\frac{3}{2},\label{betstat2}\\
\left\langle(\I+ B_0)\left[-A_1C_1+A_2C_2+(\I+A_0)C_1-\frac{1}{2}A_1(\I-C_0)+A_0(\I+C_0)+\frac{1}{2}C_0+\frac{3}{2}\I\right] R_4\right\rangle&=&\frac{3}{2},\label{betstat3}\\
\left\langle(\I+ C_0)\left[-A_1B_1+A_2B_2+A_1(\I+B_0)-\frac{1}{2}(\I-A_0)B_1+(\I+A_0)B_0+\frac{1}{2}A_0+\frac{3}{2}\I\right] R_4\right\rangle&=&\frac{3}{2},\label{betstat4}
\end{eqnarray}
\end{subequations}
where $E_2=\{R_l\}_{l=0}^4$.

Notice that the above conditions are met in a situation in which the sources $P_i$ $(i=1,2,3)$ distribute the state $\ket{\phi^+_2}$ and Eve's measurements $E_2$ is the ideal measurement $M_{\mathrm{UPB}}$ given in Eq. \eqref{UPB}, whereas the external parties measure the following observables
\begin{eqnarray}
A_0=\frac{X+Z}{\sqrt{2}},\quad A_1= \frac{X-Z}{\sqrt{2}},\quad A_2=Y,\quad B_0=Z,\quad B_1=X,\quad B_2=Y,\quad C_0=Z,\quad C_1=X\quad C_2=Y.
\end{eqnarray}

In what follows we show that Theorem \ref{theorem1} and Theorem \ref{theorem2} 
together with the conditions in Eqs. (\ref{betstat0}) and (\ref{betstatfull}) enable 
self-testing $M_{\mathrm{UPB}}$ is $E_{2}=\{R_l\}$ where $R_l$ denotes the measurement element corresponding to outcome $l$. The theorem stated below is labeled as Theorem 3 in the manuscript.

\begin{thm}\label{theorem4}
Assume that the correlations $\vec{p}$ observed in the network
satisfy the assumptions of Theorems \ref{theorem1} and \ref{theorem2} as well as conditions in Eqs. 
(\ref{betstat0}) and (\ref{betstatfull}). Then,
\begin{eqnarray}\label{stmeaupb1}
 (V_{\overline{A}}\otimes V_{\overline{B}}\otimes V_{\overline{C}})\, R_{l|2}\,(V_{\overline{A}}\otimes V_{\overline{B}}\otimes V_{\overline{C}} )^{\dagger}=\proj{\tau_l}_{E'}\otimes\I_{E''}\qquad  (l=0,1,2,3),
\end{eqnarray}
and,
\begin{eqnarray}\label{stmeaupb2}
(V_{\overline{A}}\otimes V_{\overline{B}}\otimes V_{\overline{C}})\,R_{4|2}\,(V_{\overline{A}}\otimes V_{\overline{B}}\otimes V_{\overline{C}})=\Gamma_{E'}\otimes\I_{E''},
\end{eqnarray}
where the unitary operations $V_{\overline{s}}$ $(s=A,B,C)$ are the same as in Theorem 1.
\end{thm}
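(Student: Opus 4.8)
The plan is to follow the same three-stage strategy as in the proof of Theorem~\ref{theorem3}, treating the four product outcomes $l=0,1,2,3$ and the fifth (entangled) outcome $l=4$ separately. First I would handle the product outcomes. Each condition in Eq.~\eqref{betstat0} singles out one $\ket{\tau_l}$ (for instance $p(0100|0012)$ projects $A,B,C$ onto $\ket{\overline 0},\ket 1,\ket +$, i.e.\ onto $\ket{\tau_0}$). Using the certified states \eqref{statest1} and observables \eqref{stmea1}--\eqref{stmea2}, the junk-state form \eqref{junkst2}, and the transpose identity $(\I\otimes Q)\ket{\phi^+}=(Q^T\otimes\I)\ket{\phi^+}$ exactly as in Eqs.~\eqref{115}--\eqref{nlwest1}, I would rewrite each condition as $\Tr[(\proj{\tau_l}\otimes\sigma^T_{A''}\otimes\sigma^T_{B''}\otimes\sigma^T_{C''})\overline{R}_l^T]=1$. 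Lemma~\ref{theorem1.1} then forces the diagonal block of $\overline{R}_l^T$ on $\proj{\tau_l}$ to equal $\I_{E''}$, and the positivity argument of Eqs.~\eqref{Lo1}--\eqref{upR0} annihilates the off-diagonal blocks, giving
\[
\overline{R}_l^T=\proj{\tau_l}_{E'}\otimes\I_{E''}+\mathbbm{L}_l,\qquad l=0,1,2,3,
\]
with $\mathbbm{L}_l\geq 0$ supported, on the $E'$ factor, in the orthogonal complement of $\ket{\tau_l}$.

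Second I would invoke completeness. Since $\sum_{l=0}^4\overline{R}_l=\I$, transposition yields $\overline{R}_4^T=\Gamma_{E'}\otimes\I_{E''}-\sum_{l=0}^3\mathbbm{L}_l$, where $\Gamma$ projects onto the four-dimensional completely entangled subspace $\mathcal{W}$ orthogonal to the UPB. From $\overline{R}_4^T\geq 0$ and $\mathbbm{L}_l\geq 0$ I obtain $0\leq\sum_l\mathbbm{L}_l\leq\Gamma\otimes\I$, so $\sum_l\mathbbm{L}_l$, and hence each $\mathbbm{L}_l$, is supported inside $\mathcal{W}\otimes\mathcal{H}_{E''}$. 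The only freedom left is a possible leakage of the product-outcome operators into the entangled subspace $\mathcal{W}$, which it remains to exclude.

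Third---and this is the heart of the argument---I would use the extra conditions \eqref{betstatfull} to force $\sum_l\mathbbm{L}_l=0$. After substituting the certified observables \eqref{stmea2} and \eqref{stmea3} and applying the same transpose trick, each of Eqs.~\eqref{betstat1}--\eqref{betstat4} becomes an expectation of $\overline{R}_4^T$ against a fixed three-qubit operator on $E'$, dressed by the full-rank junk states on $E''$. The structural fact I must establish is that the tuned combinations in \eqref{betstat1}--\eqref{betstat4} (with their $-\tfrac12$ and $\tfrac32$ coefficients), evaluated on the certified observables, together reconstruct $\Gamma$ on $\mathcal{W}$; equivalently, their simultaneous saturation is precisely the statement that the expectation of $\overline{R}_4^T$ against $\Gamma_{E'}\otimes\sigma^T_{A''B''C''}$ attains the maximal value compatible with $\overline{R}_4^T\leq\Gamma\otimes\I$. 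A Lemma~\ref{theorem1.1}-type argument then gives $(\Gamma\otimes\I-\overline{R}_4^T)=0$ on $\mathcal{W}\otimes\mathcal{H}_{E''}$, i.e.\ $\sum_l\mathbbm{L}_l=0$, and by positivity each $\mathbbm{L}_l=0$.

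Combining the three stages gives $\overline{R}_l^T=\proj{\tau_l}_{E'}\otimes\I_{E''}$ for $l=0,1,2,3$ and $\overline{R}_4^T=\Gamma_{E'}\otimes\I_{E''}$; transposing back and recalling $\overline{R}_l=(\bigotimes_s U_{\overline{s}})R_l(\bigotimes_s U_{\overline{s}}^{\dagger})$ produces exactly Eqs.~\eqref{stmeaupb1}--\eqref{stmeaupb2}. The two sign choices for $A_2,B_2,C_2$ in Theorem~\ref{theorem2}, Eqs.~\eqref{stmea3} and \eqref{stmea32}, do not affect the conclusion, since in every term of \eqref{betstat1}--\eqref{betstat4} the index-$2$ observables occur in pairs, so each condition is invariant under the global flip $A_2,B_2,C_2\to -A_2,-B_2,-C_2$, while the target operators $\proj{\tau_l}$ and $\Gamma$ are fixed. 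I expect the third stage to be the main obstacle: identifying the exact operator identity that expresses $\Gamma$ through the combinations in \eqref{betstat1}--\eqref{betstat4}, verifying that the reference measurement saturates them, and showing that saturation uniquely pins down the entangled outcome on the four-dimensional subspace $\mathcal{W}$.
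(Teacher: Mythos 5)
Your proposal reproduces the paper's proof strategy in its essentials: stage one is verbatim the paper's first step (the conditions in Eq.~\eqref{betstat0} plus the Theorem~\ref{theorem3} machinery give $\overline{R}_{l}^{T}=\proj{\tau_l}\otimes\I_{A''B''C''}+\mathbbm{L}_l$ with $\mathbbm{L}_l\ge 0$ for $l=0,1,2,3$), and the closing step (a sum of positive semi-definite operators equal to zero forces each of them to vanish) is also the paper's. Your genuine additions are all sound: the localization $0\le\sum_l\mathbbm{L}_l\le\Gamma\otimes\I$, hence support inside $\mathcal{W}\otimes\mathcal{H}_{E''}$ where $\mathcal{W}$ denotes the range of $\Gamma$; the remark that every term of Eqs.~\eqref{betstat1}--\eqref{betstat4} contains an even number of index-$2$ observables, so the sign ambiguity between Eqs.~\eqref{stmea3} and \eqref{stmea32} left by Theorem~\ref{theorem2} is harmless (a point the paper uses silently); and the idea of applying Lemma~\ref{theorem1.1} once to the aggregated state $\tfrac14\Gamma\otimes\sigma^T_{A''}\otimes\sigma^T_{B''}\otimes\sigma^T_{C''}$, which yields $\overline{R}_4\,v=v$ for every $v\in\mathcal{W}\otimes\mathcal{H}_{E''}$, hence $\overline{R}_4\ge\Gamma\otimes\I$; this is marginally tidier than the paper's block-by-block route, since it dispenses with the separate argument the paper needs to kill the off-diagonal blocks $\tilde{R}_{l,l'}$ with $l\neq l'$, $l,l'\in\{4,\ldots,7\}$.

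The gap is exactly the one you flag yourself, and it is not a deferrable technicality---it is the entire content of this theorem beyond Theorem~\ref{theorem3}. You must compute what the functionals of Eq.~\eqref{betstatfull} become once the certified observables \eqref{stmea2} and \eqref{stmea3} are substituted, and show that they are rank-one on the $E'$ factor. The paper does precisely this: condition \eqref{betstat1} becomes $\frac{1}{8}\Tr\left[\left(S_{1}\otimes\sigma^T_{A''}\otimes\sigma^T_{B''}\otimes\sigma^T_{C''}\right)\overline{R}_4^T\right]=1$ with $S_1$ the explicit three-qubit operator of Eq.~\eqref{stab11}, and one verifies the identity $S_{1}=8\proj{\psi_{1,4}}$ with $\ket{\psi_{1,4}}=\frac{1}{\sqrt{2}}(\ket{\overline{0}00}+\ket{\overline{1}11})$; likewise conditions \eqref{betstat2}--\eqref{betstat4}, with their saturation values $3/2$, become expectations equal to $1$ against $\proj{\psi_{1,5}}$, $\proj{\psi_{1,6}}$, $\proj{\psi_{1,7}}$ for the three further entangled vectors of Eq.~\eqref{psi15}, and only the identity $\sum_{l=4}^{7}\proj{\psi_{1,l}}=\Gamma$ makes your aggregated density matrix proportional to $\Gamma\otimes\sigma^T_{A''}\otimes\sigma^T_{B''}\otimes\sigma^T_{C''}$. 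Without exhibiting these four vectors and checking the projector identities (including the normalizations $8$ and $6$ that convert the observed values into unit expectations), Lemma~\ref{theorem1.1} has nothing to act on, and the claim that saturation of \eqref{betstatfull} pins $\overline{R}_4$ to $\Gamma\otimes\I$ on $\mathcal{W}\otimes\mathcal{H}_{E''}$ is unsupported. These identities do hold---the conditions were engineered so that they would---and once they are verified your plan closes correctly; as written, however, it is an outline of the proof rather than a proof.
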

\begin{proof}
For simplicity, we represent $R_{l|2}$ as $R_{l}$ throughout the proof. Let us first consider the conditions in Eq. \eqref{betstat0}. As was done in the previous section in the proof of Theorem \ref{theorem3}, we can conclude from these conditions that for an unknown measurement $\{R_l\}$ we have that
\begin{eqnarray}\label{R0123}
\overline{R}_l^T=\proj{\tau_l}\otimes\I_{A''B''C''}+\mathbbm{L}_l\qquad l=0,1,2,3,
\end{eqnarray}
where
\begin{eqnarray}\label{unitary}
\overline{R}_l=\left(\bigotimes_{s=A,B,C} V_{\overline{s}} \right) R_l\left(\bigotimes_{s=A,B,C} V_{\overline{s}}^{\dagger}\right)\qquad l=0,1,2,3,4
\end{eqnarray}
and $\mathbbm{L}_l\geq0$. Let us now consider Eq. \eqref{betstat1}
\begin{eqnarray}\label{Raimat1}
\langle\left(\I+A_0B_0+B_0C_0+A_0C_0+A_1B_1C_1-A_1B_2C_2-A_2B_1C_2-C_1A_2B_2\right)R_4\rangle=1
\end{eqnarray}
and expand it by using the fact that the observables $A_i,B_i,C_i$  for $i=0,1,2$ are certified as in Eqs. \eqref{stmea2} and \eqref{stmea3}. This gives us
\begin{eqnarray}\label{stcond1}
\left\langle\left(\bigotimes_{s=A,B,C} U_{s}^{\dagger}\right) S_{1,A'B'C'}\otimes\I_{A''B''C''}\left(\bigotimes_{s=A,B,C} U_{s}\right)\otimes R_4\right\rangle=1,
\end{eqnarray}
where 
\begin{eqnarray}\label{stab11}
    S_{1,A'B'C'}=\I_{{A'B'C'}}+\frac{(X+Z)_{A'}}{\sqrt{2}}Z_{B'}+Z_{B'}Z_{C'}+\frac{(X+Z)_{A'}}{\sqrt{2}} Z_{C'}+\frac{(X-Z)_{A'}}{\sqrt{2}}X_{B'}X_{C'}&-&\frac{(X-Z)_{A'}}{\sqrt{2}}Y_{B'}Y_{C'}-Y_{A'}X_{B'}Y_{C'}\nonumber\\&-&Y_{A'}Y_{B'}X_{C'}.
\end{eqnarray}
The states generated by the sources $P_1,P_2,P_3$ have already been certified to be of the form in Eq. \eqref{statest1}. Thus, the joint state of all the parties can be represented as
\begin{eqnarray}
\ket{\psi_{ABCE}}=\bigotimes_s\ket{\psi_{s\overline{s}}}=\bigotimes_s
(U_s^{\dagger}\otimes V_{\overline{s}}^{\dagger})\ket{\phi^+}_{s'\overline{s}'}\ket{\xi_{s''\overline{s}''}}
\end{eqnarray}
which can be simplified to
\begin{eqnarray}\label{138}
\left(\bigotimes_sU_s\otimes V_{\overline{s}}\right)\ket{\psi_{ABC|E}}=\left(P_{\overline{A}''}\otimes P_{\overline{B}''}\otimes P_{\overline{C}''}\right)\ket{\phi^{+}_{8d_A''d_B''d_C''}}_{ABC|E},
\end{eqnarray}
where $P_{\overline{s}''}=\sqrt{d_s''\sigma_{\overline{s}''}}$ [c. f. Eqs. \eqref{state1}-\eqref{115}]. Now, following exactly the same steps from Eqs. \eqref{116}-\eqref{nlwest1}, we obtain from Eq. \eqref{stcond1} that
\begin{eqnarray}\label{stab1}
\frac{1}{8}\Tr\left[ \left(S_{1,A'B'C'}\otimes \sigma^T_{A''}\otimes \sigma^T_{B''}\otimes \sigma^T_{C''}\right) \overline{R}_4^T\right]=1.
\end{eqnarray}
%

Let us now consider a state given by
\begin{eqnarray}\label{psi14}
\ket{\psi_{1,4}}=\frac{1}{\sqrt{2}}(\ket{\overline{0}00}+\ket{\overline{1}11}).
\end{eqnarray}
One checks that  
\begin{equation}\label{S1proj}
S_{1,A'B'C'}=8\proj{\psi_{1,4}}.    
\end{equation}
The above fact also explains why we imposed the condition in Eq. (\ref{Raimat1}).
Taking into account Eq. (\ref{S1proj}) we can rewrite Eq. (\ref{stab1}) as
\begin{eqnarray}\label{D18}
\Tr\left[\left(\proj{\psi_{1,4}}_{A'B'C'}\otimes\sigma_{A''}\otimes \sigma_{B''}\otimes \sigma_{C''}\right)\overline{R}_4\right]=1,
\end{eqnarray}
where we have also used the fact that $\Tr[X^TY^T]=\Tr[XY]$ for any pair
of matrices $X$ and $Y$ and that the state $\ket{\psi_{1,4}}$ is real.

Consider now an orthonormal basis $\{\ket{\varphi_l}\}$ in $\mathbbm{C}^8$ 
in which $\ket{\varphi_l}=\ket{\tau_l}$ for $l=0,1,2,3$, where $\ket{\tau_l}$
form the UPB given in Eq. \eqref{UPB}, and $\ket{\phi_l}=\ket{\psi_{1,l}}$ $(l=4,5,6,7)$, where
$\ket{\psi_{1,4}}$ is given in \eqref{psi14} 
whereas the remaining vectors are defined as
\begin{eqnarray}\label{psi15}
    \ket{\psi_{1,5}}&=&\frac{1}{\sqrt{6}}(-2\ket{\overline{0}00}-\ket{\overline{0}10}+\ket{\overline{0}11}),\nonumber\\ \ket{\psi_{1,6}}&=&\frac{1}{\sqrt{6}}\left(-2\ket{\overline{0}00}-\ket{\overline{0}01}+\ket{\overline{1}01}\right), \nonumber\\ \ket{\psi_{1,7}}&=&\frac{1}{\sqrt{6}}\left(-2\ket{\overline{0}00}-\ket{\overline{1}00}+\ket{\overline{1}10}\right).
\end{eqnarray}
Notice that $\sum_{l=4}^7\proj{\psi_{1,l}}=\Gamma$. Since $\overline{R}_4$ acts on the Hilbert space $\mathbbm{C}^8\otimes\mathcal{H}_{{A''B''C''}}$, we can express it using the above mentioned basis as
\begin{eqnarray}
\overline{R}_0=\sum_{l,l'=0}^7\ket{\varphi_l}\!\!\bra{\varphi_{l'}}\otimes \tilde{R}_{l,l'},
\end{eqnarray}
where $\tilde{R}_{l,l'}$ are some matrices acting on $\mathcal{H}_{{A''B''C''}}$.
Recalling that the local states are full-rank and then using the cyclic property of trace and Lemma \ref{theorem1.1} we conclude from Eq. \eqref{nlwest1} that $\tilde{R}_{0,0}=\I$ and thus we get that
\begin{eqnarray}\label{R41}
\overline{R}_4=\proj{\psi_{1,4}}\otimes\I_{A''B''C''}+\mathbbm{L}_{4},
\end{eqnarray}
where 
\begin{equation}\label{L0upb}
\mathbbm{L}_{4}=\sum_{\substack{l,l'=0\\l\ne4|l'=4\ l'\ne4|l=4}}^7\ket{\varphi_l}\!\!\bra{\varphi_{l'}}\otimes \tilde{R}_{l,l'}
\end{equation}
where $l\ne4|l'=4$ denotes $l\ne4$ if $l'=4$. For simplicity, we denote it later as $l=l'\ne k$ to express $l\ne k$ if $l'=k$.
Let us now consider Eq.  \eqref{betstat2}
\begin{eqnarray}
\left\langle(\I+ A_0)\left[-B_1C_1+B_2C_2+B_1(\I+C_0)-\frac{1}{2}(\I-B_0)C_1+(\I+B_0)C_0+\frac{1}{2}B_0+\frac{3}{2}\I\right] R_4\right\rangle=\frac{3}{2}
\end{eqnarray}
and then expand it by using the fact that the observables $A_i,B_i,C_i$  for $i=0,1,2$ are certified as in Eqs. \eqref{stmea2} and \eqref{stmea3}. This gives us
\begin{eqnarray}\label{stcond2}
\left\langle\left(\bigotimes_{s=A,B,C} U_{s}^{\dagger}\right) S_{2,A'B'C'}\otimes\I_{A''B''C''}\left(\bigotimes_{s=A,B,C} U_{s}\right)\otimes R_4\right\rangle=\frac{3}{4}
\end{eqnarray}
where 
\begin{equation}\label{stab21}
    S_{2,A'B'C'}=\proj{\overline{0}}_{A'}\left[-X_{B'}X_{C'}+Y_{B'}Y_{C'}+2X_{B'}\proj{0}_{C'}-\proj{1}_{B'}X_{C'}+2\proj{0}_{B'} Z_{C'}+\frac{1}{2}Z_{B'}+\frac{3}{2}\I_{A'B'C'}\right].
\end{equation}
It is direct to verify that $S_{2,A'B'C'}$ is proportional to 
the projection onto $\ket{\psi_{1,5}}$; precisely, 
\begin{equation}
        S_{2,A'B'C'}=6\proj{\psi_{1,5}}.
\end{equation}
Using then the above form of $S_{2,A'B'C'}$
as well as the fact that $\ket{\psi_{1,5}}$ is real,
we can simplify Eq. (\ref{stcond2}) to 
%
%
\begin{eqnarray}\label{stab2}
\Tr\left[ \left(\proj{\psi_{1,5}}\otimes\sigma_{A''}\otimes \sigma_{B''}\otimes \sigma_{C''}\right) \overline{R}_4\right]=1.
\end{eqnarray}
Now, expanding $\overline{R}_4$ using Eq. \eqref{R41}, we can conclude that
\begin{eqnarray}\label{R42}
\overline{R}_4=\proj{\psi_{1,4}}\otimes\I_{A''B''C''}+\proj{\psi_{1,5}}\otimes\I_{A''B''C''}+\mathbbm{L}_{5},
\end{eqnarray}
where 
\begin{equation}\label{L0upb2}
\mathbbm{L}_{5}=\sum_{\substack{l,l'=0\\l=l'\ne4,5}}^7\ket{\varphi_l}\!\!\bra{\varphi_{l'}}\otimes \tilde{R}_{l,l'}.
\end{equation}
Similarly, we can conclude from the other two conditions in Eqs. \eqref{betstat3} and \eqref{betstat4}, and using Eq. \eqref{R42} that
\begin{eqnarray}\label{R4}
\overline{R}_4&=&\sum_{i=4}^7\proj{\psi_{1,i}}\otimes\I_{A''B''C''}+\mathbbm{L}'\nonumber\\
&=&\Gamma\otimes\mathbbm{1}_{A''B''C''}+\mathbbm{L}',
\end{eqnarray}
where 
\begin{equation}\label{L0upb3}
\mathbbm{L}'=\sum_{\substack{l,l'=0\\l=l'\ne4,5,6,7}}^7\ket{\varphi_l}\!\!\bra{\varphi_{l'}}\otimes \tilde{R}_{l,l'}.
\end{equation}
Now, following the same steps as between Eqs. \eqref{Lo1}-\eqref{upR0} using the states $\ket{\varphi_l}\ket{\xi}$ for $l=4,5,6,7$ and any $\ket{\xi}\in\mathcal{H}_{A''B''C''}$ we can conclude that $\tilde{R}_{l,l'}=0$ for any $l\ne l'$ such that $l',l=4,5,6,7$, and $\mathbbm{L}'$ simplifies to 
\begin{equation}\label{L0upb4}
\mathbbm{L}'=\sum_{\substack{l,l'=0}}^3\ket{\varphi_l}\!\!\bra{\varphi_{l'}}\otimes \tilde{R}_{l,l'}.
\end{equation}
The fact that $\overline{R}_4\geq 0$ implies that $\mathbbm{L}'\geq0$. 
Now, adding Eqs. \eqref{R0123} and \eqref{R4} and using the fact that $\sum_l\overline{R}_l=\I$, 
we get that
\begin{eqnarray}
\sum_{l=0}^3\mathbbm{L}_l+\mathbbm{L}'=0,
\end{eqnarray}
which, after taking into account the fact that $\mathbbm{L}_l$ as well as $\mathbbm{L}'$ are positive semi-definite, implies that $\mathbbm{L}_l=\mathbbm{L}'=0$.

Thus, from Eqs. \eqref{R0123} and \eqref{R4}, we obtain
\begin{eqnarray}
\overline{R}_l=\proj{\tau_l}_{E'}\otimes\I_{E''}\qquad  (l=0,1,2,3)
\end{eqnarray}
and,
\begin{eqnarray}
\overline{R}_4=\left(\I-\sum_{i=0}^3\proj{\tau_i}\right)_{E'}\otimes\I_{E''}=\Gamma_{E'}\otimes\I_{E''}.
\end{eqnarray}
By virtue of Eq. (\ref{unitary}), we finally arrive at the desired forms of Eqs. \eqref{stmeaupb1} and \eqref{stmeaupb2}, completing the proof.
\end{proof}
An interesting consequence of the above theorem is the certification of a bound entangled state when Eve observes the final outcome of her measurement $E_2$.
\setcounter{thm}{2}
\begin{cor}
Assume that the states are certified as in Eq. \eqref{statest1} and Eve's measurement $E_2$ is certified as in Eqs. \eqref{stmeaupb1} and \eqref{stmeaupb2}. Consequently, when Eve observes the last outcome of her measurement, the post-measurement state with the external parties is given by
\begin{eqnarray}
U\,\rho_{ABC}\,U^{\dagger}=\frac{1}{4}\Gamma_{A'B'C'}\otimes \tilde{\rho}_{A''B''C''},
\end{eqnarray}
where $U=\bigotimes_sU_s$ and the unitaries $U_s$ are the same as in Eq. 
\eqref{statest1}.
\end{cor}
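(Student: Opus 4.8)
The plan is to feed the two certifications already in hand---the source states of Eq.~\eqref{statest1} and the fifth measurement operator of Eq.~\eqref{stmeaupb2}---into the expression for the conditional state. Writing $\rho_{ABC}\equiv\rho^4_{ABC}$ for the state the external parties hold when Eve reads off her last outcome, I start from the analogue of Eq.~\eqref{54},
\begin{equation}
\rho^4_{ABC}=\frac{1}{\overline{P}(4)}\Tr_{\overline{ABC}}\left[\left(\I_{ABC}\otimes R_{4|2}\right)\bigotimes_{s}\proj{\psi_{s\overline{s}}}\right].
\end{equation}
First I would conjugate both sides by $U=\bigotimes_s U_s$, which acts only on $ABC$ and hence passes through $\Tr_{\overline{ABC}}$. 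Because the partial trace over $E=\overline{ABC}$ is invariant under any unitary on the traced-out systems, I may at the same time insert $\overline{U}=\bigotimes_s U_{\overline{s}}$ on Eve's side at no cost. This replaces $R_{4|2}$ by $\overline{U}R_{4|2}\overline{U}^{\dagger}=\Gamma_{E'}\otimes\I_{E''}$ via Eq.~\eqref{stmeaupb2}, and each $\proj{\psi_{s\overline{s}}}$ by $\proj{\phi^+}_{s'\overline{s}'}\otimes\proj{\xi_{s''\overline{s}''}}$ via Eq.~\eqref{statest1}.

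After this substitution the operator under the trace splits into a primed sector on $A'B'C'E'$ and a double-primed sector on $A''B''C''E''$: indeed $\bigotimes_s\proj{\phi^+}_{s'\overline{s}'}$ lives entirely in the primed systems, $\bigotimes_s\proj{\xi_{s''\overline{s}''}}$ in the double-primed ones, and $\Gamma_{E'}\otimes\I_{E''}$ respects the same cut, so $\Tr_{\overline{ABC}}=\Tr_{E'}\circ\Tr_{E''}$ acts sector by sector. For the primed part I would view $\bigotimes_s\proj{\phi^+}_{s'\overline{s}'}$ as a single maximally entangled state of dimension $8$ across $A'B'C'\,|\,E'$ and apply the identity $\I\otimes Q\,|\phi^+_D\rangle=Q^T\otimes\I\,|\phi^+_D\rangle$ invoked after Eq.~\eqref{45}, obtaining $\Tr_{E'}[(\I\otimes\Gamma_{E'})\proj{\phi^+_8}]=\tfrac18\Gamma^T_{A'B'C'}$. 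For the double-primed part I would simply trace out Eve's junk to define $\tilde{\rho}_{A''B''C''}=\bigotimes_s\Tr_{\overline{s}''}[\proj{\xi_{s''\overline{s}''}}]$, which is automatically of product form since the junk states in Eq.~\eqref{junkst1} factorise over $s$.

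The one substantive point is the transpose that the entangled-state identity produces: I must check $\Gamma^T=\Gamma$. This holds because every UPB vector in Eq.~\eqref{UPB} has real coefficients in the computational basis (the eigenvectors $\ket{\overline{0}},\ket{\overline{1}},\ket{\overline{\pm}},\ket{\pm}$ are all real), so $\Gamma=\I-\sum_i\proj{\tau_i}$ is a real symmetric projector equal to its own transpose. Collecting the two sectors then gives $U\rho^4_{ABC}U^{\dagger}=\tfrac{1}{8\overline{P}(4)}\,\Gamma_{A'B'C'}\otimes\tilde{\rho}_{A''B''C''}$, and the last step is to pin down the prefactor by normalisation: since $\Tr[\Gamma]=8-4=4$ (the four $\ket{\tau_i}$ being orthonormal) while $\Tr[\tilde{\rho}_{A''B''C''}]=1$, unit trace forces $\overline{P}(4)=1/2$ and reproduces the claimed factor $1/4$. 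I expect the only genuinely delicate parts to be the bookkeeping of which unitary ($U$ versus $\overline{U}$) acts on which subsystem together with the accompanying partial-trace-invariance step; the transpose observation is the single nontrivial algebraic fact, and the remaining manipulations are routine.
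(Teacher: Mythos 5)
Your proposal is correct and follows essentially the same route as the paper: substitute the certified states and $\overline{U}R_{4|2}\overline{U}^{\dagger}=\Gamma_{E'}\otimes\I_{E''}$ into the post-measurement-state formula, move $\Gamma$ across the maximally entangled state via $(\I\otimes Q)\ket{\phi^+}=(Q^T\otimes\I)\ket{\phi^+}$ together with $\Gamma^T=\Gamma$, and trace out Eve. The only cosmetic difference is that the paper plugs in $\overline{P}(4|e=2)=1/2$ directly, whereas you recover it from the unit-trace condition (using $\Tr[\Gamma]=4$), which is an equivalent and slightly more self-contained way to fix the prefactor $1/4$.
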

\begin{proof}
The post-measurement state when Eve observes the last outcome of her measurement $E_2$ is given by
    \begin{eqnarray}
        \rho_{{ABC}}=\frac{1}{\overline{P}(4|e=2)}\Tr_{E}\left[\left(\I_{ABC}\otimes R_{4|2}\right)\bigotimes_{s=A,B,C}\proj{\psi_{s\overline{s}}}\right].
    \end{eqnarray}
Now, substituting the states $\ket{\psi_{s\overline{s}}}$ from Eq. \eqref{statest1} and the measurement element $R_{4|2}$ from Eq. \eqref{stmeaupb2} and then using the fact that $\overline{P}(4|e=2)=1/2$, we get that
    \begin{eqnarray}
        U\,\rho_{{ABC}}\,U^{\dagger}=2\,\Tr_{E}\left[\left(\I_{ABC}\otimes \Gamma_{E'}\otimes\I_{E''}\right)\bigotimes_{s=A,B,C}\proj{\phi^+}_{s'\overline{s}'}\otimes \proj{\xi_{s''\overline{s}''}}\right].
    \end{eqnarray}
Again using the identity $(\I\otimes Q)\ket{\phi^{+}}=(Q^T\otimes\I)\ket{\phi^{+}}$, we get
\begin{eqnarray}
U\,\rho_{{ABC}}\,U^{\dagger}=2\,\Tr_{E}\left[\left(\Gamma_{{A'B'C'}}\otimes\mathbbm{1}_{A''B''C''}\otimes\I_{E}\right)\bigotimes_{s=A,B,C}\proj{\phi^+}_{s'\overline{s}'}\otimes \proj{\xi_{s''\overline{s}''}}\right],
\end{eqnarray}
where we also used the fact that 
$\Gamma^T=\Gamma$. After tracing the $E$ subsystem we arrive at
\begin{eqnarray}
    U\,\rho_{ABC}\,U^{\dagger}=\frac{1}{4}\Gamma_{A'B'C'}\otimes \tilde{\rho}_{A''B''C''},
\end{eqnarray}
where $\tilde{\rho}_{A''B''C''}=\Tr_{E''}\left[\bigotimes_{s=A,B,C}\proj{\xi_{s''\overline{s}''}}\right]$.
\end{proof}

\section{Robustness}

Let us now comment on the robustness of our self-testing scheme. Even if there exist a few techniques to analyze robustness of self-testing schemes, most of them are restricted to the simplest situations both in the standard Bell scenario with two-outcome measurements \cite{Scarani, Wu_2014, Flavio, Yang} or the network one \cite{Marco}. Although interesting from a theoretical perspective, 
turning a self-testing statement into one that takes into account robustness to noises is an extremely difficult challenge, in particular in the network scenario. Roughly speaking, a robust self-testing statement corresponds to a situation in which the value of a Bell functional is slightly lower than the maximal one, that is, $\mathcal{B}\geq\beta_Q-\varepsilon$ where $\mathcal{B},\beta_Q$ denote a Bell functional and its maximal quantum value respectively and $\varepsilon$ is a very small positive number, and allows to asses how far the state and the measurements depart, in some norm, from the optimal quantum realization. 

We now present the robust version of the above-described self-testing scheme using a similar definition as \cite{Scarani} when Eve's input is $e=0$. For this purpose, we first utilise the fidelity bounds of the ideal post-measurement states with the external parties to the actual ones that can be obtained as shown in \cite{sarkar2025}. We do not find them here, but one can straightaway employ it here \cite{sarkar2025}. Consequently, the robustness of the ideal post-measurement states to the actual ones can be stated as
\begin{fakt}\label{fact2}
    Consider the Bell inequalities \eqref{BE1} $\mathcal{I}_{l}$ is violated $\varepsilon$-close to the maximal violation, that is, $\mathcal{I}_{{l}}\geq\beta_Q-\varepsilon$. Then, the ideal post-measurement states are close to the actual ones as
    \begin{eqnarray}
        \left\| (U_A\otimes U_B\otimes U_C)\  \rho^l_{{ABC}}\, (U_A\otimes U_B\otimes U_C)^{\dagger}-\proj{\phi_l}_{A'B'C'}\otimes\tilde{\rho}^{l}_{A''B''C''}\right\|\leq f(\varepsilon)\qquad \forall l.
    \end{eqnarray}
\end{fakt}
From Theorem 4 in Ref. \cite{sarkar2025}, $f(\varepsilon)= 72\sqrt{2\varepsilon}$.
Let us now proceed towards robust self-testing of Eve's measurement corresponding to the input $e=0$. For our purpose, we consider an additional assumption on the junk state of the sources in the noiseless case.

\setcounter{thm}{3}

\begin{thm} Consider the Bell inequalities \eqref{BE1} $\mathcal{I}_{l}$ is violated $\varepsilon$-close to the maximal violation, that is, $\mathcal{I}_{{l}}\geq\beta_Q-\varepsilon$ along with the probabilities of the central party for $e=0$ given by $ \left |\overline{P}(l|e=0)-1/8\right|\leq\varepsilon$. Then, the ideal Eve's measurement $\proj{\phi_l}_{E'}\otimes\I_{E''}$ is close to the actual one $R_{l|0}$ as
       \begin{eqnarray}\label{robures1}
        \left\|\Tr_{E''}(\Tilde{R}_l)- \proj{\phi_l}_{E'}\right\|\leq 17\left(\varepsilon+ \frac{f(\varepsilon)}{8}\right)\qquad \forall l
    \end{eqnarray}
    where $\Tilde{R}_l=(\bigotimes_{s}\,V_{\overline{s}})\ R_l\ (\bigotimes_{s} V_{\overline{s}}^{\dagger})$.
\end{thm}

\begin{proof} For the proof, we refer $R_{l|0}$ as $R_l$.
    Let us first consider the expression \eqref{54}
    \begin{eqnarray}
   \overline{P}(l)\rho^l_{{ABC}}=\ \Tr_{\overline{ABC}}\left[\left(\I_{ABC}\otimes R_l\right)\bigotimes_{s=A,B,C}\proj{\psi_{s\overline{s}}}\right].
    \end{eqnarray}
    Using the simplified form of the above expression as in \eqref{cond3} and then using \eqref{rl}, we obtain from the above expression that
    \begin{eqnarray}\label{F4}
    \overline{P}(l)(\sigma^l_{{ABC}})^*=\left(\bigotimes_{s}P_{\overline{s}}\,V_{\overline{s}}\right)\ R_l\ \left(\bigotimes_{s} V_{\overline{s}}^{\dagger}P_{\overline{s}}\right)
    \end{eqnarray}
    where $\sigma^l_{{ABC}}=(U_A\otimes U_B\otimes U_C)\  \rho^l_{{ABC}}\, (U_A\otimes U_B\otimes U_C)^{\dagger}$.
    Now, plugging in the ideal states and measurements in the noiseless scenario from Theorem \ref{theorem1} in the above formula, we obtain that
    \begin{eqnarray}\label{F5}
    \frac{1}{8}\proj{\phi_l}_{A'B'C'}\otimes(\tilde{\rho}^{l}_{A''B''C''})^*=\bigotimes_{s}P^{\mathrm{id}}_{\overline{s}}\, \left(\proj{\phi_l}_{E'}\otimes\I_{E''}\right)\  \bigotimes_{s} P^{\mathrm{id}}_{\overline{s}}
    \end{eqnarray}
    where $P_{\overline{s}}^{\mathrm{id}}=\I\otimes\sqrt{\frac{\sigma_{s''}}{2}}\quad (s=A,B,C)$. 
    Now, subtracting \eqref{F4} from \eqref{F5} and rearranging the terms on the left gives us
    \begin{eqnarray}\label{E6}
         \left(\overline{P}(l)- \frac{1}{8}\right)(\sigma^l_{{ABC}})^*+\frac{1}{8}\left((\sigma^l_{{ABC}})^*-\proj{\phi_l}_{A'B'C'}\otimes(\tilde{\rho}^{l}_{A''B''C''})^*\right)=\left(\bigotimes_{s}P_{\overline{s}}\,V_{\overline{s}}\right)\ R_l\ \left(\bigotimes_{s} V_{\overline{s}}^{\dagger}P_{\overline{s}}\right)-\nonumber\\ \bigotimes_{s}P^{\mathrm{id}}_{\overline{s}}\, \left(\proj{\phi_l}_{E'}\otimes\I_{E''}\right)\  \bigotimes_{s} P^{\mathrm{id}}_{\overline{s}}.
    \end{eqnarray}
    Now, taking the norm on both sides and using the fact that $\|\sigma^l_{{ABC}}\|\leq1$ and $\|M\|=\|M^*\|$ gives us
    \begin{eqnarray}
       \left\|\left(\bigotimes_{s}P_{\overline{s}}\,V_{\overline{s}}\right)\ R_l\ \left(\bigotimes_{s} V_{\overline{s}}^{\dagger}P_{\overline{s}}\right)-\bigotimes_{s}P^{\mathrm{id}}_{\overline{s}}\, \left(\proj{\phi_l}_{E'}\otimes\I_{E''}\right)\  \bigotimes_{s} P^{\mathrm{id}}_{\overline{s}}\right\|\leq\nonumber\\ \left|\overline{P}(l)- \frac{1}{8}\right|+ \frac{1}{8}\|\sigma^l_{{ABC}}-\proj{\phi_l}_{A'B'C'}\otimes\tilde{\rho}^{l}_{A''B''C''}\|.
    \end{eqnarray}
     Now, taking partial trace over $E''$ of the operator on the left-hand side and using the fact that $||\Tr_{E''}(M_{E'E''})||\leq||M_{E'E''}||$, we obtain
     \begin{eqnarray}
       \left\|\Tr_{E''}\left(P_{E'E''}\, \Tilde{R}_l\ P_{E'E''}\right)- \proj{\phi_l}_{E'}\right\|\leq\varepsilon+ \frac{f(\varepsilon)}{8}
     \end{eqnarray}
     where $P_{E'E''}=\bigotimes_{s}P_{\overline{s}}$, and $\Tilde{R}_l=(\bigotimes_{s}\,V_{\overline{s}})\ R_l\ (\bigotimes_{s} V_{\overline{s}}^{\dagger})$. For simplicity, we will now represent $P_{E'E''}\equiv P$. Now using triangle inequality the above formula can be expressed as 
     \begin{eqnarray}
       \left\|\Tr_{E''}(\Tilde{R}_l)- \proj{\phi_l}_{E'}\right\|\leq\varepsilon+ \frac{f(\varepsilon)}{8}+ \left\|\Tr_{E''}[P\Tilde{R}_l(P-\I)]\right\|+\left\|\Tr_{E''}[(P-\I)\Tilde{R}_l]\right\|.
     \end{eqnarray}
     Using the fact that $M\leq|M|$ and thus $\Tr_{E''}M\leq\Tr_{E''}|M|$, we obtain that
     \begin{eqnarray}\label{E10}
          \left\|\Tr_{E''}(\Tilde{R}_l)- \proj{\phi_l}_{E'}\right\|\leq\varepsilon+ \frac{f(\varepsilon)}{8}+ \left\|\Tr_{E''}|P\Tilde{R}_l(P-\I)|\right\|+\left\|\Tr_{E''}|(P-\I)\Tilde{R}_l|\right\|.
     \end{eqnarray}
     Let us now observe from the above formula that
     \begin{eqnarray}
        |P\Tilde{R}_l(P-\I)|=\sqrt{(P-\I)\Tilde{R}_lP^2\Tilde{R}_l(P-\I)}\leq \sqrt{(P-\I)\Tilde{R}_l^2(P-\I)}\leq\sqrt{(P-\I)^2}=\I-P
     \end{eqnarray}
     where we use the fact that $(P-\I)\Tilde{R}_l(\I-P^2)\Tilde{R}_l(P-\I)\geq0$ and $(P-\I)(\I-\Tilde{R}_l^2)(P-\I)\geq0$ as $0\leq P\leq\I$ and $0\leq \tilde{R}_l\leq\I$ and then the theorem 2.2.6 from \cite{murphy2014c} which states that $0\leq A\leq B\implies\sqrt{A}\leq \sqrt{B}$ for any two positive matrices $A,B$. Thus, from \eqref{E10} we have that
     \begin{eqnarray}\label{E12}
          \left\|\Tr_{E''}(\Tilde{R}_l)- \proj{\phi_l}_{E'}\right\|\leq\varepsilon+ \frac{f(\varepsilon)}{8}+ 2\left\|\I-\Tr_{E''}P\right\|
     \end{eqnarray}

    Let us again consider the relation \eqref{E6} and sum it over all $l$ and taking the norm on both sides to obtain
    \begin{eqnarray}
         \sum_{l}\left|\overline{P}(l)- \frac{1}{8}\right|\|\sigma^l_{{ABC}}\|+\frac{1}{8}\sum_{l}\left\|\sigma^l_{{ABC}}-\proj{\phi_l}_{A'B'C'}\otimes(\tilde{\rho}^{l}_{A''B''C''})\right\|\geq\left\|\bigotimes_{s}P_{\overline{s}}^2 -\bigotimes_{s}(P^{\mathrm{id}}_{\overline{s}})^2\right\|
    \end{eqnarray}
    which on employing Fact \ref{fact2} and $\|\sigma^l_{{ABC}}\|\leq1$ gives us
    \begin{eqnarray}\label{E12}
         \left\|\bigotimes_{s}P_{\overline{s}}^2 -\bigotimes_{s}(P^{\mathrm{id}}_{\overline{s}})^2\right\|\leq8\varepsilon+f(\varepsilon).
    \end{eqnarray}
    Taking a partial trace over $E''$ on the left side of the above formula gives us
    \begin{eqnarray}
         \left\|\Tr_{E''}(P^2) -\I\right\|\leq8\varepsilon+f(\varepsilon).
    \end{eqnarray}
    Let us notice that $P^2\leq P\leq\I$, using which we get that $0\leq\I-\Tr_{E''}(P)\leq\I-\Tr_{E''}(P^2)$. Now, using theorem 2.2.5 of \cite{murphy2014c} which states that $0\leq A\leq B\implies ||A||\leq||B||$ and thus we obtain from the above formula that
    \begin{eqnarray}
        \left\|\Tr_{E''}(P) -\I\right\|\leq8\varepsilon+f(\varepsilon).
    \end{eqnarray}
Consequently, we finally obtain from \eqref{E12} that
     \begin{eqnarray}
           \left\|\Tr_{E''}(\Tilde{R}_l)- \proj{\phi_l}_{E'}\right\|\leq 17\left(\varepsilon+ \frac{f(\varepsilon)}{8}\right).
     \end{eqnarray}
     This completes the proof.
\end{proof}

Let us now proceed towards robust self-testing Eve's measurement corresponding to the input $e=1$. Unlike the previous case, where we find the error in the measurment from the error in statistics, here we take the opposite direction, that is, we find the error in statistics from the error in measurement.

\begin{thm}\label{theo5}  Consider again the network scenario outlined in the main text (see Fig. 1) 
with Eve's measurement $E_1=\{R_{l|1}\}$ being close to the ideal one as 
       \begin{eqnarray}\label{robures2}
        \left\|\ \bigotimes_{s}\,V_{\overline{s}}\ R_{l|1}\bigotimes_{s=A,B,C}\ket{\psi_{s\overline{s}}}-\, \left(\proj{\delta_l}_{E'}\otimes\I_{E''}\right)\ \bigotimes_{s=A,B,C}\ket{\psi^{\mathrm{id}}_{s\overline{s}}}\right\|\leq\varepsilon\qquad \forall l
    \end{eqnarray}
    where $\ket{\psi_{s\overline{s}}}$ is the actual state in the experiment and $\ket{\psi^{\mathrm{id}}_{s\overline{s}}}$ is the state certified in \eqref{statest1}.
    Then the probabilities of the central party for $e=1$ given by $ \left |p(a,b,c,l|x,y,z,1)-p_{\mathrm{id}}(a,b,c,l|x,y,z,1)\right|\leq2\varepsilon$ where $p_{\mathrm{id}}(a,b,c,l|x,y,z,1)$ are the probabilities obtained in the noiseless scenario as given in \eqref{NLWEstat}. 
\end{thm}

\begin{proof}
    Let us first consider $l=0$ and notice from \eqref{NLWEstat} that only $p_{\mathrm{id}}(0,1,0,0|0,0,1,1)=\frac{1}{8}$ and $p_\mathrm{id}(a,b,c,0|0,0,1,1)=0$ for rest of the $a,b,c$. Now, taking the formula \eqref{stcond1NLWE} to express the probabilities $p_\mathrm{id}(a,b,c,0|0,0,1,1)$ for any $a,b,c$ gives us
    \begin{eqnarray}\label{F10}
  \left\langle \left(N_{a|0}^A\otimes N_{b|0}^B\otimes N_{c|1}^C\right)\otimes R_0\right\rangle_{\psi_{ABCE}}=p(a,b,c,0|0,0,1,1).
    \end{eqnarray}
    Using the fact that $N_{i|j}^s\leq\I$, we obtain that
    \begin{eqnarray}\label{F11}
  \left\langle R_0\right\rangle_{\psi_{ABCE}}\geq p(0,1,0,0|0,0,1,1).
    \end{eqnarray}
    Similarly, in the noiseless case from Theorem \ref{theorem2}  and summing over $a,b,c$ gives us
    \begin{eqnarray}
         \left\langle \proj{\delta_0}_{E'}\otimes\I_{E''}\right\rangle_{\psi^{\mathrm{id}}_{ABCE}}=\sum_{a,b,c=0,1}p_{\mathrm{id}}(a,b,c,0|0,0,1,1)=\frac{1}{8}.
    \end{eqnarray}
   Now, expanding the condition \eqref{robures2} for $l=0$ gives us (for simplicity we drop the lower indices from the state)
   \begin{eqnarray}\label{F13}
       \left\langle R_0\right\rangle_{\psi}+ \left\langle \proj{\delta_0}_{E'}\otimes\I_{E''}\right\rangle_{\psi^{\mathrm{id}}}-2\mathrm{Re}\bra{\psi}R_0\bigotimes_{s}\,V_{\overline{s}}^{\dagger}\proj{\delta_0}_{E'}\otimes\I_{E''}\ket{\psi^{\mathrm{id}}} \leq \varepsilon^2
   \end{eqnarray}
   where we also assumed that $R_0$ is a projector. This is well justified as Eve's dimension is unrestricted. 
   Now using the Cauchy-Schwarz inequality, the term with the negative sign in the above formula can be upper-bounded as
   \begin{eqnarray}\label{F14}
   |\bra{\psi}R_0\bigotimes_{s}\,V_{\overline{s}}^{\dagger}\proj{\delta_0}_{E'}\otimes\I_{E''}\ket{\psi^{\mathrm{id}}}|\leq \sqrt{\left\langle R_0\right\rangle_{\psi}}\sqrt{\left\langle \proj{\delta_0}_{E'}\otimes\I_{E''}\right\rangle_{\psi^{\mathrm{id}}}}.
   \end{eqnarray}
   Using \eqref{F14}, we can thus lower bound the left-hand side of \eqref{F13} as
   \begin{eqnarray}
       \left\langle R_0\right\rangle_{\psi}+ \left\langle \proj{\delta_0}_{E'}\otimes\I_{E''}\right\rangle_{\psi^{\mathrm{id}}}-2\sqrt{\left\langle R_0\right\rangle_{\psi}}\sqrt{\left\langle \proj{\delta_0}_{E'}\otimes\I_{E''}\right\rangle_{\psi^{\mathrm{id}}}} \leq \varepsilon^2
   \end{eqnarray}
   which can be further simplified to
   \begin{eqnarray}
      \left( \sqrt{\left\langle R_0\right\rangle_{\psi}}-\sqrt{\left\langle \proj{\delta_0}_{E'}\otimes\I_{E''}\right\rangle_{\psi^{\mathrm{id}}}}\right)^2\leq\varepsilon^2.
   \end{eqnarray}
   Using Eqs. \eqref{F13} and \eqref{F14}, we obtain
   \begin{eqnarray}
       \left| \sqrt{p(0,1,0,0|0,0,1,1)}-\frac{1}{\sqrt{8}}\right|\leq\varepsilon
   \end{eqnarray}
   which can also be expressed as
    \begin{eqnarray}
       \left| p(0,1,0,0|0,0,1,1)-\frac{1}{8}\right|\leq\varepsilon \left| \sqrt{p(0,1,0,0|0,0,1,1)}+\frac{1}{\sqrt{8}}\right|\leq2\varepsilon.
   \end{eqnarray}
   Similarly, one can obtain the same value for any $l$. This completes the proof.
\end{proof}

Similarly, following the same lines as the above Theorem \ref{theo5}, one can obtain the robust version of Theorem \ref{theorem4}. 
\linebreak

Furthermore, from an experimental perspective, one always has an idea about different noises that might affect the experiment. To suffice this, we provide a general idea to compute the robustness of the scheme given any general noise model. Any noise affecting the state or measurements can be represented by a quantum change which can in turn be represented using Kraus operators $\{K_i^s\}$ with $\sum_i(K_i^s)^{\dagger}K_i^s=\I$. Note that the Kraus operators might map a $d-$dimensional system to $D-$dimensional system such that $D\geq d$. In general the Krauss operators $K_i^s$ act on both the subsystems. Thus, the states $\ket{\psi_{s\overline{s}}}$ can be represented as 
\begin{eqnarray}
    \proj{\psi_{s\overline{s}}}=\sum_i K^s_i\proj{\phi^+_{s\overline{s}}}(K^s_i)^{\dagger}\qquad \forall s.
\end{eqnarray}
Similarly, the measurements of Alice, Bob, Charlie, and Eve denoted by $N_{a|x}, N_{b|y},  N_{c|z},  N_{e|l}$ can be represented as 
\begin{eqnarray}
    N_{u|v}=\sum_i K^{u,v}_i\tilde{N}_{u|v}(K_i^{u,v})^{\dagger}\qquad \forall u,v,
\end{eqnarray}
where $\tilde{N}_{u|v}$ denotes the ideal two-dimensional projector that results in saturating the ideal statistics, that is, for instance when considering Charlie $\tilde{N}_{0|0}=\proj{0}, \tilde{N}_{1|0}=\proj{1}, \tilde{N}_{0|1}=\proj{+}, \tilde{N}_{1|1}=\proj{-}, \tilde{N}_{0|2}=\proj{+_y}, \tilde{N}_{1|2}=\proj{-_y}$.  Using these noisy states and measurements one can then compute the observed statistics. As an example, we show some plots for the following scenario.


Let us now consider an exemplary noise model represented by the depolarizing channel 
whose Kraus operators are
\begin{equation}\label{noise model}
    K_0=\sqrt{1-\frac{3p}{4}}\I, \,\, K_1=\sqrt{\frac{p}{4}} X,\,\, K_2=\sqrt{\frac{p}{4}} Y, \,\, \mathrm{and}\,\, K_3=\sqrt{\frac{p}{4}} Z
\end{equation}
with $\sum_iK_i^{\dagger}K_i=\I$. For simplicity, we consider a situation in which this noise is
applied only to Eve's subsystems ($\overline{A}$, $\overline{B}$, and $\overline{C}$) and moreover, we assume that each channel is characterized by the same $p$. However, as we also mentioned earlier, one can also consider a more general noise model for the central party. 

Using the above noise model in Eq. (\ref{noise model}), we investigate how the self-testing conditions depart from the ideal scenario ($p=0$). This is depicted in the following figures for various measurements we considered on the central party. In Fig. \ref{fig1}, we plot the variation of the Bell value $\mathcal{I}_{l_1l_2l_3}$ with respect to the noise model described above.  In Fig. \ref{fig2}, we plot the variation of the sum of the probabilities in Eq. \eqref{NLWEstat} given by
\begin{eqnarray}
    W_1=p(0,1,0,0|0,0,1,1)+p(0,1,1,1|0,0,1,1)+
p(0,0,1,2|1,0,0,1)+p(1,0,1,3|1,0,0,1)+
p(1,0,0,4|0,1,0,1)\nonumber\\+ p(1,1,0,5|0,1,0,1)+
p(0,0,0,6|0,0,0,1)+ p(1,1,1,7|0,0,0,1)\qquad
\end{eqnarray}
with respect to the noise model described above. In the left figure of Fig. \ref{fig3}, we plot the variation of the sum of the probabilities in Eq. \eqref{betstat0} given by
\begin{eqnarray}
    W_2=p(0100|0012)+ p(0011|1002)+
p(1002|0102)+p(1113|1112)
\end{eqnarray}
with respect to the noise model described above.  Similarly, in the right figure of Fig. \ref{fig3}, we plot the variation of the sum of all the correlators appearing in Eqs. \eqref{betstatfull} 
with respect to the noise model described above. In all the figures the topmost dotted line represents the value of the concerned quantities in the noiseless scenario.

%
\begin{figure}[H]
    \centering
    \includegraphics[scale=0.6]{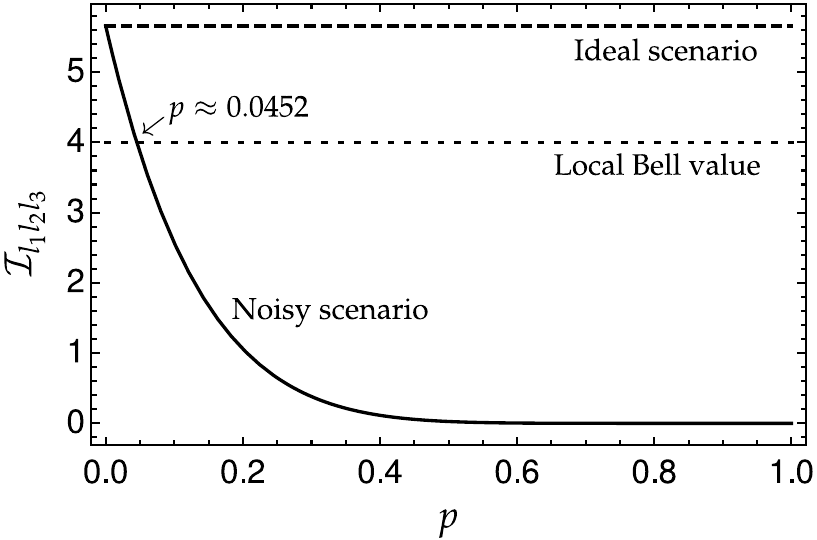}
    \caption{The variation of the Bell value $\mathcal{\hat{I}}_{l_1l_2l_3}$ with the noise parameter $p$ when the central party performs $M_{GHZ}$ for any $l_1,l_2,l_3=0,1$. The dashed, solid, and dotted lines represent the ideal scenario, noisy scenario, and local Bell value respectively.}
    
    \label{fig1}
\end{figure}

\begin{figure}[H]
    \centering
    \includegraphics[scale=0.6]{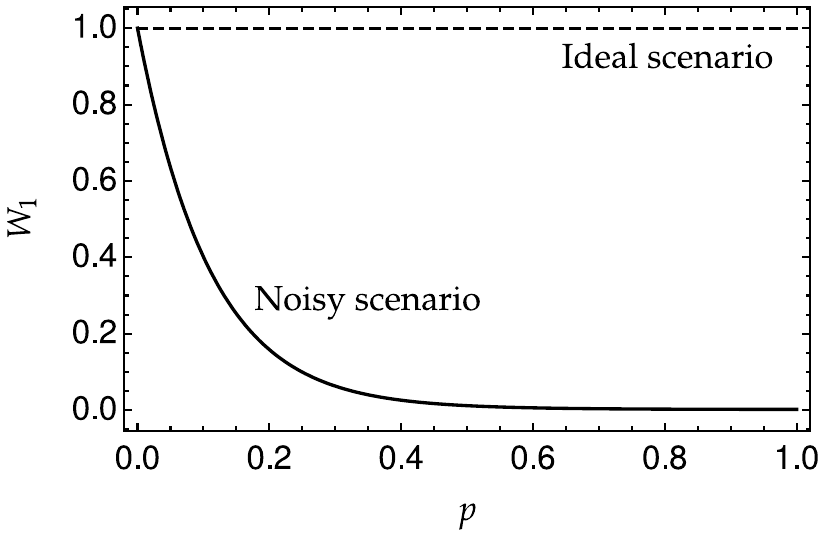}
    \caption{The variation of the sum of the probabilities in Eq. (\ref{NLWEstat}) with the noise parameter $p$ when the central party performs $M_{NLWE}$. The dashed and solid lines represent the ideal scenario and noisy scenario respectively.}
    
    \label{fig2}
\end{figure}

\begin{figure}[H]
    \centering
    \includegraphics[scale=0.6]{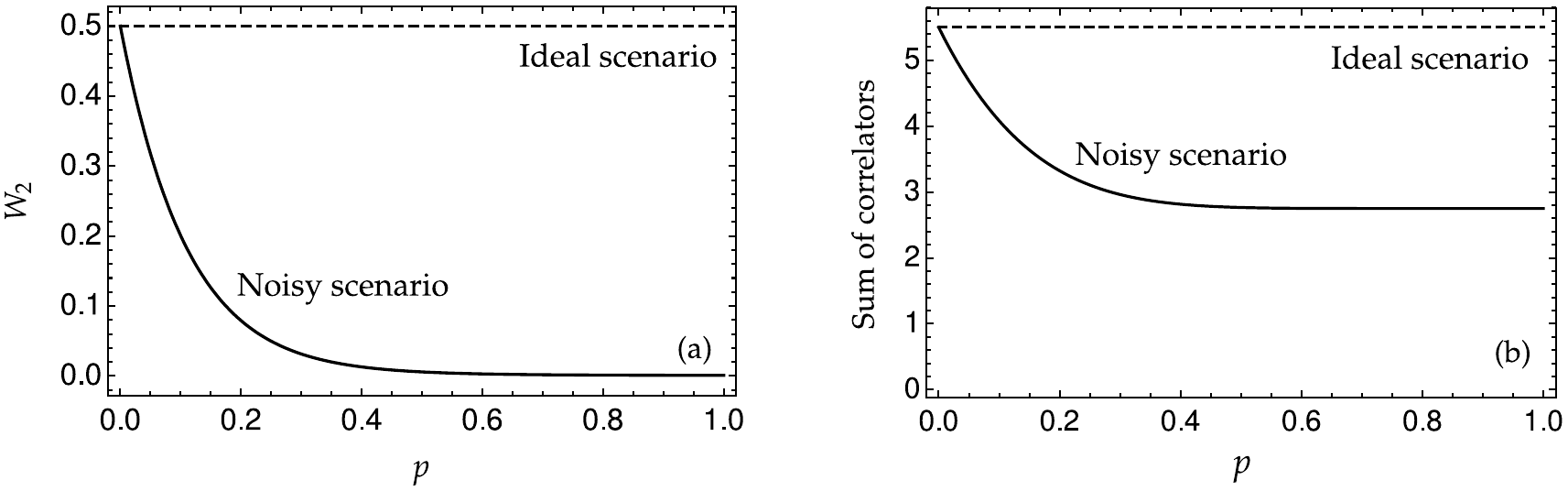}
    \caption{(a) The variation of the sum of the probabilities in Eq. (\ref{betstat0}) with the noise parameter $p$ when the central party performs $M_{UPB}$. (b) The variation of the sum of the correlators in Eq. (\ref{betstatfull}) with the noise parameter $p$ when the central party performs $M_{UPB}$. The dashed and solid lines represent the ideal scenario and noisy scenario respectively.}
    
    \label{fig3}
\end{figure}

\providecommand{\noopsort}[1]{}\providecommand{\singleletter}[1]{#1}%

\end{document}